\documentclass[reprint,pra,
superscriptaddress,
 amsmath,amssymb,
 aps,
 onecolumn,
]{revtex4-2}
\usepackage{graphicx}% Include figure files
\usepackage{dcolumn}% Align table columns on decimal point
\usepackage{bm}% bold math
\usepackage[T1]{fontenc} %256 bit font encoding
\usepackage[english]{babel} %english language
\usepackage[utf8]{inputenc}
\usepackage[table,usenames,dvipsnames]{xcolor}
\usepackage[psdextra]{hyperref}
\usepackage{bookmark}
\usepackage{amsfonts}
\usepackage{amsmath,amssymb,amsfonts,mathtools,amsthm}
\usepackage{graphicx}
\usepackage{bbm}
\usepackage{tikz}
\newcommand{\1}{\mathbb{1}}

\DeclareMathOperator{\U}{U}
\DeclareMathOperator{\tr}{tr}
\DeclareMathOperator{\C}{\mathcal C}
\DeclareMathOperator*{\E}{\mathop{\mathbb{E}}}
\DeclareMathOperator{\andd}{\, \& \,}

\DeclareMathOperator{\brick}{brick}
\DeclareMathOperator{\choi}{Choi}
\DeclareMathOperator{\DW}{DW}
\DeclareMathOperator{\Int}{Int} 
\DeclareMathOperator{\id}{Id}

\DeclareMathOperator{\QEC}{QEC} 
\DeclareMathOperator{\AQEC}{AQEC} 

\usetikzlibrary{quantikz2, calc, positioning}
\usepackage{enumitem}
\usepackage[normalem]{ulem}
\usepackage{hyperref}
\newtheorem{lemma}{Lemma}
\newtheorem{corollary}{Corollary}
\newtheorem{definition}{Definition}
\newtheorem{proposition}{Proposition}
\newtheorem{theorem}{Theorem}

\newcommand{\Haar}{\mathrm{Haar}}

\usepackage{bbold}
\renewcommand{\1}{\mathbb{1}}

\begin{document}
\title{Error correction with brickwork Clifford circuits}
\author{Twan Kroll}
\affiliation{Institute for Logic, Language and Computation, University of Amsterdam, Amsterdam, Netherlands}
\author{Jonas Helsen}%
\affiliation{Qusoft and Centrum Wiskunde \& Informatica, Amsterdam, Netherlands}%
\date{\today}
\begin{abstract}
We prove that random 1D Clifford brickwork circuits form (in expectation) good approximate quantum error correction codes in logarithmic depth. Our proof makes use of the statistical mechanics techniques for random circuits developed by Dalzell et al. [PRX Quantum 3, 010333], adapted extensively to our own purpose. We also consider exact error correction, where we give matching upper and lower bounds for the required depth in which random 1D Clifford brickwork circuits become error correcting. 
\end{abstract}

\maketitle
\section{Introduction}
Quantum error correction is a key building block for large scale quantum computers. By encoding a number of logical qubits into a larger number of physical qubits, one obtains protection against the error seemingly inherent in the operation of physical qubits. Since Shor's development of the first error reducing quantum codes \cite{shor1995scheme}, a vast literature has been developed over the past three decades, constructing error correction codes with various beneficial properties such as low-weight checks, high encoding rate, high distance, and compatibility with spatial locality (with trade-offs existing between all these parameters). \\

In its exact form, QEC requires perfect recovery of an encoded state after suffering some unknown error and is characterized by the Knill-Laflamme conditions \cite{knill2000theory}.
Relaxing the perfect recovery requirement by an approximate recovery criterion might be more suitable for practical purposes as codes that achieve approximate QEC can outperform exact error correcting codes in different ways \cite{leung1997approximate}. In fact, limits on perfect recovery do not necessarily preclude approximate recovery guarantees \cite{crepeau2005approximate}. A popular approximate QEC notion is to require high entanglement fidelity; for which the Knill-Laflamme conditions can be generalized \cite{Beny2010} and is closely related to the average fidelity between the recovered state and logical state \cite{horodecki1999general, gilchrist2005distance,Kong2022}. \\

In analogy to the fruitful study of classical (linear) random codes, there has been substantial interest recently in \emph{random} (stabilizer) quantum error correction codes. Already in 2013 Brown \& Fawzi proved that Clifford circuits of depth $O(\log^3(n))$ in an all-to-all connected architecture form good (i.e. linear rate and distance) exact quantum error correction codes \cite{Brown}. Such codes however are not geometrically restricted and far from LDPC. More recently Gullans et al. \cite{Gullans_2021} provided numerical and analytical evidence that \emph{geometrically local} random Clifford circuits are good approximate quantum error correction codes already at logarithmic depth. This work was very recently supplemented with an analytical proof \cite{liu2025approximatequantumerrorcorrection} that a certain class of $1$D geometrically local circuits, often called the superbrick architecture \cite{schuster}, are good approximate quantum error correction codes. The analysis of these superbrick circuits was inspired by the surprising finding in Schuster et al. \cite{schuster}, that such unitaries form so-called unitary $k$-designs (thus mimicking statistical aspects of fully random unitaries). However in the work of Liu et al. \cite{liu2025approximatequantumerrorcorrection}, the error correcting nature of the more natural architecture of brickwork random circuits (which is also the architecture discussed in Gullans et al., see Fig \ref{fig:brickwork} for an illustration) was left open~\footnote{It must be noted here that the authors of \cite{liu2025approximatequantumerrorcorrection} have very recently provided an updated version of their work that also includes bounds on the approximate error correction abilities of log-depth \emph{brickwork} circuits \cite{liu2026approximatejournal}. Our results were arrived at independently.}. The main result of this work is to close this gap. To do this we lean on the statistical mechanics techniques developed in \cite{Dalzell_2022}, which were also recently used to prove that these brickwork random circuits constitute state $2$-designs \cite{heinrich2025anti}.  Random circuits have a wide range of applications. These include randomized benchmarking \cite{helsen2022general, knill2008randomized, heinrich2022randomized}, constructing unitary designs \cite{schuster, brandao2016local, haferkamp2022random}, quantum algorithms \cite{bertoni2024shallow}, and many-body and black-hole physics \cite{nahum2017quantum, susskind2020second, akers2024holographic}. This work contributes to the broader endeavor of understanding random quantum circuits and their various applications.

\section{Results}
Our main result is a proof that 1D brickwork Clifford circuits become approximate quantum error correction codes in depth $O(\log(n))$.  Specifically we prove the following theorem for the ensemble  $\chi_{\brick}^{D(n)}$ of $n$-qubit random brickwork circuits of depth $D(n)$ consisting of $2$-qubit Clifford gates.
\begin{theorem}[Informal]\label{thm:informal} For appropriate constants $0<r<1$ and $c>\frac{1}{|\log r|}$ with $D(n)=c\log(n)$, the expected Choi error can be bounded for appropriate error thresholds as follows:
\begin{equation}
    \E_{U\sim \chi_{\brick}^{(D)}}\left[\epsilon_{\choi}(U)\right]\leq \mathcal O\left(n^{\frac{1-c|\log(r)|}{4}}\right).
\end{equation}
\end{theorem}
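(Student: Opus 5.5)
The approach is to reduce the expected Choi error to a second-moment quantity of the random circuit, and then to bound that quantity with the statistical mechanics mapping of Dalzell et al.

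\textbf{Step 1: reduction to a decoupling quantity.} Encode $k$ logical qubits into $n$ physical qubits by applying $U$ to the logical state tensored with $|0\rangle^{\otimes(n-k)}$. Take the noise to be erasure (or a local channel) on a region $A$ chosen by the error model. Following the approach of Liu et al., I will use complementary recovery and a Uhlmann/Beny--Oreshkov type argument. This bounds the Choi error $\epsilon_{\choi}(U)$ by the distance of the reduced Choi state on $R\cup A$ from the product $\frac{\1}{2^k}\otimes\rho_A$. The reference $R$ purifies the logical input.

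By the Cauchy--Schwarz bound $\|X\|_1\le\sqrt{d}\,\|X\|_2$, this trace distance is at most $\sqrt{2^{k+|A|}\,\tr\big[(\rho_{RA}-\tfrac{\1}{2^k}\otimes\rho_A)^2\big]}$. Concavity of the square root then moves the expectation inside. What remains is a linear combination of second-moment purities $\E\tr\rho_{RA}^2$ and $\E\tr\rho_A^2$. Each is a quantity of the form $\E\tr[(U\otimes U)\,\psi^{\otimes 2}(U\otimes U)^\dagger\, \mathrm{SWAP}_S]$.

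If the error model allows many possible regions, I will average or union-bound over the allowed $A$ with the error threshold. The threshold of the theorem enters exactly here, keeping the number of regions times $2^{k+|A|}$ under control.

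\textbf{Step 2: stat-mech evaluation.} Since random two-qubit Cliffords form a unitary 2-design, each gate's second moment equals the Haar one. Integrating gate by gate therefore turns each purity into a partition function over domain walls, with spins $\{I,\mathrm{SWAP}\}$ at every gate.

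The top boundary is fixed by the subsystem $S$. The bottom boundary is fixed by the input: the logical qubits are entangled with $R$, and the ancilla qubits sit in $|0\rangle$. Each domain-wall step carries weight $2/5$ per gate, which is the $r$ in the statement, up to combinatorial factors.

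The leading term, with no domain walls surviving through the bulk, reproduces exactly the Haar value. In the Haar case the decoupling expression vanishes up to an $O(2^{-(n-k-2|A|)})$ correction, which is exponentially small under the threshold. The deviation from Haar is a sum over domain-wall configurations that traverse depth $D$. Each such configuration costs at least $r^{D}$, so it is suppressed by a factor $r^{D}=n^{-c|\log r|}$. There are at most $\mathrm{poly}(n)$ starting positions, roughly $n$, which gives a total deviation of $O(n\cdot n^{-c|\log r|})$.

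\textbf{Step 3: assembly.} Combining this deviation with the dimension prefactor and the square roots (one from Cauchy--Schwarz and one from converting purified or Choi distance to fidelity) gives the exponent $\frac{1-c|\log r|}{4}$. The $1/4$ is consistent with taking two square roots of $n^{1-c|\log r|}$. The requirement $c>1/|\log r|$ is exactly what makes this bound decay.

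\textbf{Main obstacle.} The hard step is Step 2. The boundary conditions here differ from those in Dalzell et al.: the bottom boundary mixes a maximally entangled part with $R$, ancilla zeros, and region-dependent top boundaries. Crucially, the large factor $2^{k+|A|}$ must be cancelled exactly by the Haar term, with no slack lost.

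My first attempt will be to write the difference between brickwork and Haar as a sum over domain-wall configurations, then bound it by a relative error of the form $(1+\delta)^{n}-1$ with $\delta=O(r^{D})$, mirroring the anti-concentration and 2-design analysis of Dalzell et al. and Heinrich et al. With $D=c\log n$, the relative error is $O(n\,r^{D})$. Multiplying this relative error by the Haar purity, rather than an absolute bound by $2^{k+|A|}$, is what keeps the prefactor harmless.
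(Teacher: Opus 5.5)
Your Steps 1 and 3 match the paper's outer structure. If the average over erasure patterns is taken inside the second moment, your Fuchs--van de Graaf plus Cauchy--Schwarz chain reproduces Proposition~\ref{prop:expectedchoierror}, namely $\E\,\epsilon_{\choi}\le (Z-1)^{1/4}$. The theorem then does follow from $Z\le Z^{(\infty)}(1+\delta)^n$ with $\delta=O(r^D)$.

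The gap is Step 2. It is the entire technical content of the proof, you flag it yourself as the main obstacle, and you only assert it. The heuristic you give is not a valid bound: the excess over Haar is not a sum over roughly $n$ configurations each costing $r^D$. The number of surviving domain walls can be any even number up to $n/2$, with $\binom{n}{2k_0}$ placements and exponentially many paths. Each surviving configuration is also dressed by a sum over annihilating walls inside the intervals it delimits, by initial-state multiplicities, and by the top-boundary weight (in the paper $\lambda^{|\vec\gamma^{(s)}|}$). All of this is multiplied by the exponentially large prefactor $3^k(2/3)^n$.

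What is missing is a mechanism that makes this factorize. The paper does the following.
(a) It bounds the fully annihilating part by $Z^{(\infty)}$ via monotonicity in depth; at finite depth this part is not equal to the Haar value.
(b) It decouples the intervals between surviving walls by relaxing constraints: it drops global type consistency, lets annihilating walls cross surviving ones with only order-preserving annihilations, and enlarges the initial sets to $A^{\tau}_{p}$. Each interval then becomes a Haar-type computation (Lemma~\ref{lemma:ultimate}) bounded by $3^{2b}(3/2)^{bm_p}3^{-am_p}$, times $2^{R(S^{(0)}_p-S^{(s)}_p)}$ for $S$-type intervals.
(c) It uses $\sum_j m_{p_j}\in[m-2k_0,m-k_0]$, so these factors cancel $3^k(2/3)^n$ up to $2^{2bk_0}$.
(d) It sums the bias over left/right walks of each pair of boundary walls at weight $2/5$ per step. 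This gives $25\,r^{2D}$ with $r=\frac{2(2^R+2^{-R})}{5}$, not $2/5$; $r<1$ holds precisely because $R<1$. Finally $\sum_{k_0}\binom{n}{2k_0}(Cr^D)^{2k_0}\le(1+Cr^D)^n$.
Without (b)--(d), nothing produces your $\delta=O(r^D)$.

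The threshold also enters somewhere other than where you place it. Step (b) requires the top weight per $S$-site to be homogeneous and at most $2^{-R}$, i.e.\ $\lambda=2^{f-1}\le 2^{-R}$, so that $\lambda^{S^{(s)}_p}$ controls every $S$-type interval. Homogeneity appears only after averaging the noise inside the second moment, which is what $\tilde{\mathcal N}$ with the $\tau_E^{-1/4}$ normalization accomplishes. For i.i.d.\ erasure this gives $\lambda=(1+3p)/2$, and it also covers Pauli noise, which is not erasure on a region.

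For a fixed erased region $A$, the top boundary weights $S$ relative to $I$ by $2$ on $A$ and by $1/2$ off $A$. No bound uniform in $A$ exists at logarithmic depth. If $A$ contains a contiguous stretch longer than the light-cone width $O(D)$, then $S\setminus A$ is independent of the reference qubits of the logical qubits deep inside that stretch, so the Choi error for that $A$ is $\Theta(1)$. Consequently:
the union-bound option in your Step 1 fails outright;
your per-region claim of an $O(n\,r^D)$ deviation from Haar is false;
the average over $A$ must be taken before the stat-mech estimate, not after it.
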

From Theorem~\ref{thm:informal} it is clear that the expected Choi error tends to zero in the limit $n\to \infty$ with depth $\mathcal O(\log(n))$.
To prove this statement we adapt the statistical mechanics methods used by Dalzell et al. \cite{Dalzell_2022} to prove that log-depth circuits anti-concentrate. In this statistical physics model, we have \emph{domain wall trajectories} as depicted in Figure~\ref{fig:domainwalltrajectory}. In these trajectories a \emph{domain wall} follows a certain path subject to some constraints. When two paths meet, the domain walls annihilate. Furthermore, a trajectory has a weight associated with it (decreasing exponentially in the length of the path). To prove Theorem~\ref{thm:informal} we first prove that the expected Choi error can be upper bounded by summing over the weight of all possible allowed trajectories. 

Next, we upper bound this sum of weights over all allowed trajectories using combinatorial arguments and properties of said trajectories. One such important property is that the weight of a trajectory, as in Figure~\ref{fig:domainwalltrajectory}, can be decomposed as a product over the weights of the individual paths. We group the paths into two categories, surviving paths and annihilating paths. The surviving domain walls partition the space into disjoint intervals. The intervals vary in size over time as can be seen in Figure~\ref{fig:domainwalltrajectory}. Then, by upper bounding the weight of all possible trajectories that can occur within an interval and using the disjointness of these intervals we arrive at an upper bound over all possible trajectories. For the details of this argument we refer to Section~\ref{sec:logarithmicdepth}.\\

\begin{figure}
    \centering
\includegraphics[scale = 0.8]{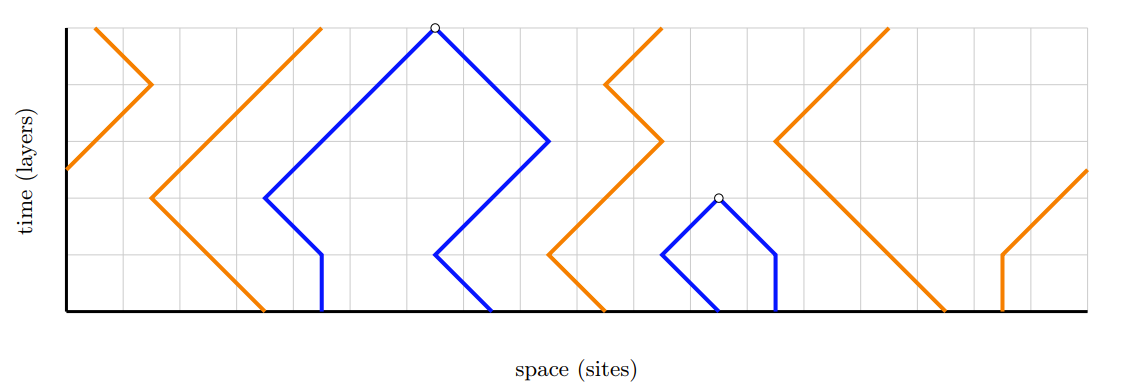}
\caption{Trajectories in the stat-mech model. Paths labeled blue annihilate while the orange paths survive. Each path carries with it a weight that must be controlled in order to prove a bound on the expected Choi error (i.e. entanglement infidelity). Note that the surviving domain walls partition the qubits into "independent" intervals. This insight is key to achieving theorem \ref{thm:informal}. For the sake of readability of this figure, we apply all gates in a single layer in the same time step, while in the proof of theorem \ref{thm:informal}, to avoid ambiguities, we will apply them one-by-one in a specified order (in effect moving only one domain wall at a time).  }
\label{fig:domainwalltrajectory}
\end{figure}

Our second result provides a tight characterization of the \emph{exact} error correcting properties of random brickwork circuits. Due to light-cone constraints, any $U\sim \chi_{\brick}^{D(n)}$ can only achieve a code distance on the order of $\mathcal O(D(n))$. Conversely, we prove that any code distance $d(n)$ is achieved in depth $\mathcal O(d(n))$ for any $U\sim \chi_{\brick}^{D(n)}$ with high probability.

\begin{theorem}[Informal]\label{thm:informal2}
    Let $d(n)\in o(n)\cap \Omega(\log(n))$. Then there exists a $D(n)\in \mathcal O(d(n))$ such that the probability that $U\sim \chi_{\brick}^{D(n)}$ is a $[[n,k,d(n)+1]]$ error correcting code tends to 1 in the limit $n\to \infty$ for any constant rate $\frac{k}{n}$.
\end{theorem}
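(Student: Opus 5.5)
We prove Theorem~\ref{thm:informal2} with a union bound over low-weight Pauli errors, combined with a light-cone reduction that makes each error's contribution depend only on a short stretch of the circuit. Fix the encoding in which $U\sim\chi_{\brick}^{D(n)}$ acts on $k$ logical qubits together with $n-k$ ancillas prepared in $|0\rangle$. The code has distance at least $d+1$ exactly when no nontrivial Pauli $P$ of weight $|P|\le d$ is an undetectable logical error. Writing the code's stabilizers as $U Z_j U^\dagger$ on the ancilla positions, this means that $U^\dagger P U$ must not lie in the group generated by $\{Z_j\}_{j\in\text{anc}}$ together with all Paulis on the logical positions, unless $P$ is itself a stabilizer. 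So it suffices to show
\[
\sum_{1\le |P|\le d}\Pr_U\!\left[U^\dagger P U \text{ acts only as } \1 \text{ or } Z \text{ on every ancilla}\right]\to 0 .
\]
We take the logical positions to be spread out, e.g.\ every $\lceil n/k\rceil$-th qubit. This matters for the locality argument below.

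\emph{Single-error bound.} Fix $P$ with support $S$ of size $w\le d$. Group $S$ into clusters of qubits that lie within distance $\sim 2D$ of each other. The backward light cones of distinct clusters are disjoint, so $U^\dagger P U$ factorizes and the events for different clusters are independent. For one cluster of span $\ell$, the heisenberg-evolved operator $Q=U^\dagger P U$ is supported on an interval of length at most $\ell+2D$. We then argue that $Q$ is close to a uniformly random nonidentity Pauli on a growing window. The tool is the standard fact that conjugating a fixed nonidentity Pauli by a random 2-qubit Clifford gives a uniform nonidentity 2-qubit Pauli. Following the evolution layer by layer, the left and right endpoints of the support of $Q$ perform biased random walks that move outward with constant drift. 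The support can shrink only when a gate maps $\sigma\otimes\sigma'$ to something with an identity factor, which happens with probability $6/15$ per boundary gate. Once the support has size $m$, and conditioned on it, the Pauli letters in the interior are essentially uniform. The event we need to avoid requires every ancilla in the window to carry $\1$ or $Z$, which has probability about $(1/2)^{m(1-k/n)}$.

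\emph{Combining.} A large-deviation bound on the endpoint walks shows that after depth $D$ the support has size at least $\alpha D$ for some $\alpha>0$, except with probability $e^{-\beta D}$. Taking the worst cluster structure, this gives
\[
\Pr[\text{bad}\mid P]\le e^{-\gamma D}
\]
for a cluster of any size, with the bound multiplying across clusters. The number of Paulis of weight $w$ whose support forms $j$ clusters is at most $\binom{n}{j}(cD)^{w}3^{w}$. The sum is dominated by the terms with $j=w$ isolated qubits, giving
\[
\sum_w \binom{n}{w}3^w e^{-\gamma D w}\le (1+3e^{-\gamma D})^n-1 .
\]
This requires $e^{-\gamma D}\ll 1/n$, which $D\ge C\log n$ provides. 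That is why $d=\Omega(\log n)$ is assumed, and it lets us choose $D=C' d$ with $C'$ large enough that the terms with large clusters, which have combinatorial weight $(cD)^w$, are also beaten.

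\emph{Main obstacle.} The hard step is making the random-walk and large-deviation estimate uniform over all starting supports while also tracking the correlations between adjacent qubits inside one cluster. A crude bound only controls the width of the support, not how evenly it covers the ancillas. We would address this by running the walk on the set of occupied sites, treating the occupied sites as uniformly random letters. With $k/n$ constant, a constant fraction of the window consists of ancillas, so the exponent stays linear in the window size, and $d=o(n)$ keeps the windows well inside the chain.
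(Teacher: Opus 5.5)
Your light-cone reduction is sound: clusters separated by more than the light-cone width do give independent factors. The combining step, however, fails quantitatively. Your per-cluster bound $e^{-\gamma D}$ does not depend on the cluster, so consider the single-cluster terms. There are about $n(2D)^{w-1}3^w$ Paulis of weight $w$ whose support forms one cluster, and each is bounded only by $e^{-\gamma D}$. For $w=d$ and $D=C'd$ their total is of order $\exp\bigl(\ln n+(d-1)\ln(2C'd)+d\ln 3-\gamma C'd\bigr)$. This diverges for every fixed $C'$, because $\ln d\to\infty$. So the sum is not dominated by isolated qubits, and enlarging $C'$ does not rescue it. What you need is a per-cluster bound that pays for every additional qubit, e.g.\ $\exp\bigl(-\gamma[D+\sum_i\min(g_i,D)]\bigr)$, where $g_i$ are the gaps between consecutive qubits of the cluster. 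Summing over the $g_i$ then costs only $c_\gamma^{w}3^{w}e^{-\gamma D}$ per cluster, which $D=C'd$ together with $d=\Omega(\log n)$ does beat.

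Supplying that bound is exactly the \emph{main obstacle} you leave open, and it is the heart of the proof rather than a technicality. The endpoint walks control only the width of the support. Conditioned on the final support pattern, the letters on it are in fact i.i.d.\ uniform on $\{X,Y,Z\}$, since each gate of the layer adjacent to the input register outputs a uniform non-identity two-qubit Pauli. Hence the bad event has conditional probability $3^{-N}$, with $N$ the number of ancillas in the support. But the identity/non-identity pattern inside the window is not uniform, and the support can have holes. Holes between members of a cluster need not even close within depth $D$, because operator fronts move strictly slower than the light cone. The missing ingredient is a large-deviation lower bound on $N$, proportional to $D+\sum_i\min(g_i,D)$ and uniform over starting clusters. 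Proving it requires controlling the creation and persistence of holes; \emph{treating occupied sites as uniformly random letters} presupposes it.

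The paper avoids operator spreading altogether, via the following chain.
\begin{enumerate}
\item For Cliffords $|\tr(\sigma_\mu U(\sigma_{\nu_A}\otimes\sigma_{\nu_B})U^\dagger)|\in\{0,2^n\}$, so each failure probability in the Brown--Fawzi union bound is a normalized second moment.
\item The whole union bound therefore becomes the partition function $Z^{(s)}_{\QEC}$.
\item Its boundary weight satisfies $2^{-m}\sum_{j\le d}3^j\binom{m}{j}\le dA^d\lambda^m$ with $f=1-a/b$.
\item This lets the domain-wall estimate of Theorem~\ref{thm:mainresultthesis} be reused verbatim, giving $Z^{(s)}_{\QEC}\le Z^{(\infty)}_{\QEC}+dA^d\bigl(e^{nCr^D}-1\bigr)$.
\end{enumerate}
The competition between $d\ln A$ and $D\ln r$ is what forces $D=\Theta(d)$; it plays the role of your missing span-dependent bound. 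The per-gate twirl also handles all interior correlations automatically.

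A minor point: placing logicals at every $\lceil n/k\rceil$-th qubit cannot realize rates above $1/2$, and it changes the ensemble. The paper's block layout with constant $b$ already guarantees a constant ancilla fraction in every window of length $b$.
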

The proof of Theorem~\ref{thm:informal2} rests largely on the analysis used for approximate error correction combined with the work of Brown \& Fawzi \cite{Brown} (in particular their conditions for error correction for Clifford encoding circuits). Surprisingly we achieve a somewhat stronger result than the approximate version, as we show that error correction is possible almost surely (though at much higher depth), as opposed to merely in expectation in the approximate case. We believe this to be a limitation of the proof technique (in particular the use of only second moments), which can be overcome with a more sophisticated analysis.

\section{Preliminaries}
In this section we recap some known notions and set notation that will be required later.
\subsection{(Approximate) error correction}
We give a brief recap of approximate quantum error correction.  More specifically,  we introduce the \emph{Choi error}, the metric used to quantify the error correcting performance of a code and consider two noise models of interest. The exposition mirrors Sections B and C of Chapter 3 of \cite{liu2025approximatequantumerrorcorrection}.\\

Let $L$ denote the register of $k$ logical qubits that we aim to protect. To protect this information, the logical qubits are encoded in a $n$-qubit physical system, $S$, through an encoding channel $\mathcal E_{L\to S}$. The logical qubits may also be entangled with a $k$-qubit reference system $R$. To model the error, the physical system will undergo a noise channel $\mathcal N$ after encoding. Lastly, to recover the information, a decoding channel, $\mathcal D_{S\to L}$ is applied. If the initial state is $\rho_{LR}$, then the recovered state is:
\begin{equation}
    \left[(\mathcal D\circ \mathcal N \circ \mathcal E)\otimes \id_R\right](\rho_{LR}).
\end{equation}
Note that a unitary $U\in \U(2^n)$ defines a \emph{unitary encoder}: $\mathcal E_{L\to S}(\rho_L)=U\left(\rho_L\otimes |\boldsymbol 0\rangle\langle \boldsymbol 0|_{S\setminus L}\right)U^\dagger$. We now define the quantity of interest.

\begin{definition}
    Let $|\hat \phi\rangle_{LR}$ denote the maximally entangled state between $L$ and $R$. Given an encoder $\mathcal E$ and a noise channel $\mathcal N$, the \emph{Choi fidelity} is given by:
    \begin{equation}\label{eq:choifidelity}F_{\choi}=\max_{\mathcal D}F\left(|\hat \phi\rangle \langle \hat \phi|_{LR},\left[(\mathcal D\circ \mathcal N \circ \mathcal E)\otimes \id_R\right](|\hat \phi\rangle \langle \hat \phi|_{LR})\right).\end{equation}
    The \emph{Choi error} is then defined as:
    \begin{equation}\epsilon_{\choi}=\sqrt{1-F_{\choi}^2}=\min_{\mathcal D}P\left(|\hat \phi\rangle \langle \hat \phi|_{LR},\left[(\mathcal D\circ \mathcal N \circ \mathcal E)\otimes \id_R\right](|\hat \phi\rangle \langle \hat \phi|_{LR})\right),
    \label{eq:choi}\end{equation}
    where $P(\cdot, \cdot)$ is the purified distance.
\end{definition}

The Choi fidelity quantifies in the optimal case how well the code preserves the entanglement. An alternative metric for performance is the average fidelity.

A useful tool to determine the Choi error is the formalism of \emph{complementary channels} \cite{Beny2010,Kong2022}. Using the Stinespring representation of a channel, there exists an environment system $E$ such that for any $\rho_S\in D(\mathcal H_S)$ the following holds:
\begin{equation}
    \mathcal N(\rho_S) = \tr_E\left(V\rho_S V^\dagger\right),
\end{equation}
for some isometry $V\in \mathcal L(\mathcal H_S,\mathcal H_S\otimes \mathcal H_E)$.

For random circuits, the unitary encoding is sampled from some distribution $\chi$. In this case, the expected Choi error is of interest. The following proposition states a useful upper bound for the expected Choi error which is derived in \cite{liu2025approximatequantumerrorcorrection}.

\begin{proposition}\label{prop:expectedchoierror}Define $\rho_{SR}=|\hat \phi\rangle\langle \hat \phi|_{LR}\otimes |0\rangle\langle 0|^{\otimes (n-k)}_{S\setminus L}$, $\tilde{\mathcal N}=\tau_E^{-\frac14}\hat{\mathcal N}\tau_E^{-\frac14}$ with $\tau_E=\hat{\mathcal N}_{S\to E}(\frac{1}{2^n}I_S)$ being the reduced state on the subsystem $E$ of $\tau_{SE}=(\id_S\otimes\hat{\mathcal N}_{S'\to E})(|\hat \phi\rangle \langle \hat \phi|_{SS'})$, where $S'$ is a copy of $S$. Let $\chi$ denote a distribution of unitaries on $n$ qubits that is a unitary $1$-design, then the Choi error satisfies:
    \begin{equation}\E_{U\sim \chi}\left[\epsilon_{\choi}\right]\leq\sqrt{\sqrt{2^k\E_{U\sim \chi}\left[\tr\left(\left(\tilde{\mathcal N}(U_S\rho_{SR}U_S^\dagger)\right)^2\right)\right]-1}}.\end{equation}
\end{proposition}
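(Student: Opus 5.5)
Because $\epsilon_{\choi}$ is defined through the best decoder, the plan is to bound it by exhibiting a good decoder, using the complementary-channel characterization of approximate error correction. Three steps follow: a decoupling bound for a fixed $U$, a Hölder/Cauchy--Schwarz step that turns a trace norm into the purity of $\tilde{\mathcal N}(U\rho U^\dagger)$, and Jensen's inequality together with the $1$-design property to handle the average over $U\sim\chi$.

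\textbf{Step 1 (fixed $U$).} Fix $U$ and write $\hat{\mathcal N}$ for the complementary channel of $\mathcal N$. By the Bény--Oreshkov result \cite{Beny2010}, the optimal Choi error equals the minimal purified distance between $(\hat{\mathcal N}\circ\mathcal E\otimes\id_R)(\hat\phi_{LR})$ and a product state $\zeta_E\otimes \frac{I_R}{2^k}$. I take $\zeta_E=\tau_E$, which gives
\[
\epsilon_{\choi}\le P\!\left(\sigma_{ER},\ \tau_E\otimes \tfrac{I_R}{2^k}\right),\qquad \sigma_{ER}=\hat{\mathcal N}(U\rho_{SR}U^\dagger).
\]
I then use $P\le\sqrt{2\,T}$, where $T$ is the trace distance, or more precisely $P\le \sqrt{\|\sigma-\omega\|_1}$ via the Fuchs--van de Graaf inequality. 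This produces the outer square root.

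\textbf{Step 2 (trace norm to purity).} Set $\omega=\tau_E\otimes I_R/2^k$. Hölder's inequality with weight $\omega^{1/4}$ on each side gives
\[
\|\sigma-\omega\|_1\le \sqrt{\tr\omega}\;\big\|\omega^{-1/4}(\sigma-\omega)\omega^{-1/4}\big\|_2 .
\]
Since $\tr\omega=1$, expanding the square yields
\[
\|\sigma-\omega\|_1^2\le 2^k\tr\big(\tilde{\mathcal N}(U\rho U^\dagger)^2\big)-2\tr\big(\omega^{-1/2}\sigma\,\omega^{-1/2}\omega\big)+1 .
\]
This is where the $\tau_E^{-1/4}$ sandwich and the factor $2^k$ from $(I_R/2^k)^{-1/2}$ come from. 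The cross term equals $2\tr(\sigma)=2$ only if it collapses exactly: $\tr(\omega^{-1/2}\sigma\omega^{-1/2}\omega)=\tr(\sigma)=1$. That collapse uses only cyclicity and the fact that $\omega$ commutes with $\omega^{-1/2}$, so it holds for every $U$. The expression is therefore $2^k\tr(\tilde{\mathcal N}(\cdot)^2)-1$.

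\textbf{Step 3 (averaging).} Jensen's inequality for the concave map $x\mapsto\sqrt{\sqrt{x}}$ moves $\E_U$ inside, giving the stated bound. The $1$-design hypothesis ensures that the averaged reduced state on $E$ is $\E_U\hat{\mathcal N}(U\rho_S U^\dagger)=\hat{\mathcal N}(I/2^n)=\tau_E$. This identifies the natural choice of $\tau_E$ and also guarantees $\operatorname{supp}\sigma_E\subseteq\operatorname{supp}\tau_E$, so the negative powers are well defined on the relevant support.

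The main obstacle is Step 1: making sure that the purified-distance characterization applies with the particular product state $\tau_E\otimes I_R/2^k$ rather than the optimal one, and getting the constants in the $P$ versus trace-norm conversion right so that no stray factor of $2$ appears. If the constants fail, I would instead bound $1-F$ via $F\ge 1-\tfrac12\|\cdot\|_1$.
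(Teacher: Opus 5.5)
Your argument is correct, and it is essentially the standard complementary-channel/decoupling derivation: the Uhlmann-type characterization of $\epsilon_{\choi}$ with the product state $\tau_E\otimes I_R/2^k$, then Fuchs--van de Graaf, then H\"older with weight $\omega^{1/4}$, then Jensen. The paper does not reproduce this derivation and simply imports the bound from Liu et al.

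Your pointwise evaluation of the cross term as $\tr\sigma_{ER}=1$ is right. The support condition it relies on holds for every $U$, not just for $U$ in the support of $\chi$: since $U\rho_SU^\dagger\le 2^{-k}I_S$, positivity of $\hat{\mathcal N}$ gives $\hat{\mathcal N}(U\rho_SU^\dagger)\le 2^{n-k}\tau_E$. So your proof does not actually need the $1$-design hypothesis.
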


\subsection{Noise models}
We will consider two noise models, \emph{Pauli noise} and the \emph{erasure noise}. The Pauli noise channel on a single-qubit is defined as
\begin{equation}
    \mathcal N_{P}(\rho)=p_I\rho +p_XX\rho X+p_Y Y\rho Y + p_Z Z\rho Z,
\end{equation}
and shall be referred to as strength-$\vec p$ Pauli noise. On all qubits, the Pauli noise channel is applied independently, giving the overall noise channel $\mathcal N=\mathcal N_{P}^{\otimes n}$.

Meanwhile, erasure noise for a single-qubit is defined as:
\begin{equation}
    \mathcal N_e(\rho)=(1-p)\rho + p|2\rangle \langle 2|,
\end{equation}
where the $|2\rangle$ represents the erasure and $p\in [0,1]$ is the probability at which the erasure occurs. Over $n$ qubits, we shall assume that each qubit faces an i.i.d. erasure error with probability $p$ giving an overall noise channel: $\mathcal N=\mathcal N_e^{\otimes n}$.

\subsection{Designs}
In this section we define the concept of unitary designs and explicitly determine the second moment with respect to the Haar measure. For a proper introduction to these notions we refer to \cite{mele2024introduction}. The Haar measure on $\U(d)$ shall be denoted by $\Haar(d)$.

\begin{definition}\label{def:unitarydesign}
    Consider a probability distribution $\nu$ over a set of unitaries $\U(d)$. We call $\nu$ a $t$-design if and only if:
    \begin{equation}\label{kdesign}
        \E_{V\sim \nu}\left[V^{\otimes t}O(V^\dagger)^{\otimes t}\right]=\E_{U\sim \Haar(d)}\left[U^{\otimes t}O(U^\dagger)^{\otimes t}\right],
    \end{equation}
    for any $O\in \mathcal L((\mathbb C^d)^{\otimes t})$.
\end{definition}

The uniform distribution over the $n$-qubit Clifford gates is a well-known $2$-design \cite{Dankert2009} and shall be denoted by $\chi^{(n)}_{\mathcal C}$.
The expression $\E_{U\sim \Haar(d)}\left[U^{\otimes t}O(U^\dagger)^{\otimes t}\right]$ is known as the $t$-th order moment of the Haar distribution. When $t=2$, this expression turns out to be a linear combination over $\{I,S\}$, where $S$ is the swap operator. The exact expression of these coefficients is a standard result and will be stated in the next lemma without proof.

\begin{lemma}[Second Order Moment]
    \label{lemma:haartwirl}
    Let $O\in \mathcal L(\mathbb C^d\otimes \mathbb C^d)$ be arbitrary and $X:=\E_{U\sim \Haar(d)}\left[U^{\otimes 2}O(U^\dagger)^{\otimes 2}\right]$. Then, $X=\alpha I +\beta S$ where:
    \begin{equation}\alpha = \frac{1}{(d^2-1)}\tr(O)-\frac{1}{d(d^2-1)}\tr(SO),\qquad \beta = \frac{1}{(d^2-1)}\tr(SO)-\frac{1}{d(d^2-1)}\tr(O).\end{equation}
\end{lemma}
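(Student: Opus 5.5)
The plan is to show, first, that $X$ commutes with every $V^{\otimes 2}$. Then use Schur--Weyl duality (for $t=2$, elementary representation theory) to conclude $X\in\operatorname{span}\{I,S\}$. Finally, fix the two coefficients from two trace identities.

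\textbf{Commutation.} Left and right invariance of the Haar measure give $V^{\otimes 2}X(V^\dagger)^{\otimes 2}=X$ for all $V\in\U(d)$. Indeed, substituting $U\mapsto VU$ leaves the integral unchanged.

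\textbf{Reduction to $\operatorname{span}\{I,S\}$.} Decompose $\mathbb C^d\otimes\mathbb C^d=\mathrm{Sym}^2\oplus\Lambda^2$. These two subspaces are irreducible and inequivalent representations of $\U(d)$ under $V\mapsto V^{\otimes 2}$; they have different dimensions $d(d+1)/2$ and $d(d-1)/2$ when $d\ge2$. By Schur's lemma, any operator commuting with all $V^{\otimes 2}$ is then $aP_{\mathrm{sym}}+bP_{\mathrm{anti}}$. Since $P_{\mathrm{sym}}=(I+S)/2$ and $P_{\mathrm{anti}}=(I-S)/2$, this gives $X=\alpha I+\beta S$. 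Irreducibility of $\mathrm{Sym}^2$ and $\Lambda^2$ is the only non-elementary input. If one prefers to avoid it, an alternative is to check directly that the commutant of $\{V^{\otimes 2}\}$ is spanned by $I$ and $S$, for instance by testing commutation against diagonal unitaries and permutation matrices. I expect this step to be the only genuinely nontrivial part. Since the result is standard, I would simply cite it.

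\textbf{Coefficients.} Take traces, using cyclicity and the fact that $S$ commutes with $U^{\otimes 2}$. This gives $\tr(X)=\tr(O)$ and $\tr(SX)=\tr(SO)$. With $\tr(I)=d^2$, $\tr(S)=d$, $\tr(S^2)=d^2$ we obtain the linear system
\begin{equation}
d^2\alpha+d\beta=\tr(O),\qquad d\alpha+d^2\beta=\tr(SO).
\end{equation}
Its determinant is $d^4-d^2=d^2(d^2-1)$, nonzero for $d\ge2$. Solving by Cramer's rule gives
\begin{equation}
\alpha=\frac{d^2\tr(O)-d\tr(SO)}{d^2(d^2-1)},
\end{equation}
which simplifies to the stated formula, and $\beta$ follows by the symmetric computation.
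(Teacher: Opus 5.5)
The paper states this lemma without proof as a standard fact. Your argument is the standard proof and it is correct: left invariance gives $[X,V^{\otimes 2}]=0$, Schur's lemma on the inequivalent irreducibles $\mathrm{Sym}^2$ and $\Lambda^2$ gives $X\in\operatorname{span}\{I,S\}$, and the identities $\tr(X)=\tr(O)$ and $\tr(SX)=\tr(SO)$ with Cramer's rule give exactly the stated $\alpha$ and $\beta$.

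One caution concerns your suggested elementary alternative. Testing commutation only against diagonal unitaries and permutation matrices does \emph{not} cut the commutant down to $\operatorname{span}\{I,S\}$. The operator $\sum_i |ii\rangle\langle ii|$ commutes with $V^{\otimes 2}$ for every monomial unitary $V$, yet it is not a combination of $I$ and $S$. In fact the commutant of the monomial group on $(\mathbb C^d)^{\otimes 2}$ is three-dimensional for $d\ge 2$. To make that route work, you must add one non-monomial unitary, e.g.\ a Hadamard acting on $\operatorname{span}\{|1\rangle,|2\rangle\}$, which rules out the $\sum_i|ii\rangle\langle ii|$ component. Your main route, via irreducibility, is unaffected.
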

\subsection{Brickwork circuits}\label{sec:brickworkcircuits}
We consider a standard 1D brickwork random Clifford circuit architecture of $D$ layers with periodic boundary conditions. In these $D$ layers uniformly sampled $2$-qubit Clifford gates are arranged in a brickwork like in Figure~\ref{fig:brickwork}. Furthermore, given a rate of $\frac{a}{b}$, we consider $m$ blocks of $a$ logical qubits and $b-a$ ancilla qubits. Let $\chi_{\brick}$ denote the distribution of such random circuits. Schematically, we refer to Figure~\ref{fig:brickwork} for $m=2$ and $\frac{a}{b}=\frac{2}{5}$. The total number of logical qubits is then $k=am$ whereas the number of physical qubits is $n=bm$.

\begin{figure}
    \centering
    \includegraphics[width=0.4\linewidth]{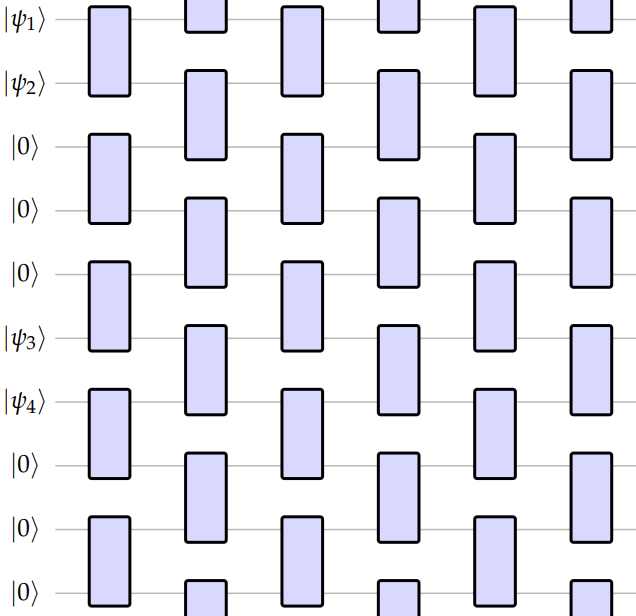}
    \caption{A diagram showing the architecture when the rate is $\frac{2}{5}$ and there are 2 blocks in depth $6$, i.e., $n=10$. Each gate is a uniformly sampled $2$-qubit Clifford gate.}
    \label{fig:brickwork}
\end{figure}
More formally, we denote a random circuit in this architecture as:
\begin{equation}\label{eq:RQCexplicitform}
    U_S=U_{A_1}U_{A_2}\dots U_{A_s},
\end{equation}
Here, $W_{X}$ denotes that $W$ acts on the qubits contained in $X$, where $|A_i|=2$ and contains only adjacent qubits since we assume 1D connectivity with periodic boundary conditions (PBC). When it is needed to apply a layer of gates in order (as will be the case in the next section), we will apply them in increasing order, starting with the gate acting on qubits $1$ and $2$ in the first layer and qubits $2$ and $3$ in the second layer (repeating periodically in time).

\section{Logarithmic depth approximate error correction}\label{sec:logarithmicdepth}

This section is dedicated to proving the main result: random brickwork Clifford circuits form approximate error correcting codes in logarithmic depth w.r.t. the number of qubits for Pauli and erasure noise.
Our starting point is the upper bound for the expected Choi error by Proposition~\ref{prop:expectedchoierror}: 
\begin{equation}
    \E_{U_S\sim \chi_{\brick}}\epsilon_{\choi}\leq \sqrt{\sqrt{2^k\E_{U_S\sim \chi_{\brick}}\left[\tr\left(\left(\tilde{\mathcal N}(U_S\rho_{SR}U_S^\dagger)\right)^2\right)\right]-1}}:= \sqrt{\sqrt{Z-1}}.
    \label{eq:expectationterm}
\end{equation}
implicitly defining the \emph{partition function} $Z$, analogous to \cite{Dalzell_2022}. Using standard arguments from Weingarten calculus (and a fair bit of elbow grease), we can derive an explicit form for the partition function. We will postpone the derivation of this fact to the appendix.

\begin{lemma}[Partition function]\label{lem:partition}
    Let $\chi_{\brick}$ denote the distribution over random Clifford circuits with $D$ layers as defined above, with $s$ number of gates, i.e., $s=\frac{nD}{2}$. Consider the configuration space $\{I,S\}^n$ and let $\vec{\gamma}\in \{I,S\}^{n\times (s+1)}$ be a trajectory, where $\vec \gamma^{(t)}\in \{I,S\}$ is the configuration at time $t$. Lastly, if $U^{(t)}$ acts on qubits $\{x,x+1\}$, we define for $\gamma,\nu\in \{I,S\}^n$
    \begin{equation}\label{eq:momentmatrix}
        M^{(t)}_{\nu\gamma}=\begin{cases}
            1, & \text{if $\gamma_x=\gamma_{x+1}$ and $\gamma=\nu$}\\
            \frac{2}{5}, &\text{if $\gamma_x\neq \gamma_{x+1}$ and $\nu_x=\nu_{x+1}$ and $\gamma_y=\nu_y\; \forall y\in [n]\setminus\{x,x+1\}$}\\
            0,& \text{otherwise.}
        \end{cases}
    \end{equation}
    The partition function can now be expressed as follows:
    
    \begin{equation}
        Z= 3^k\cdot\frac{2^n}{3^n}\sum_{\substack{\vec \gamma\in \{I,S\}^{n\times (s+1)}\\ \vec \gamma^{(0)}\in (\{S\}^a\times\{I,S\}^{b-a})^m}}\lambda^{|\vec\gamma^{(s)}|}\prod_{t=1}^sM^{(t)}_{\vec \gamma^{(t)}\vec \gamma^{(t-1)}}, \label{eq:partitionfunction}
    \end{equation}
    where $\lambda=2^{f-1}$ and $|\vec \gamma^{(s)}|$ is the number of $S$ factors in $\vec \gamma^{(s)}$. 
\end{lemma}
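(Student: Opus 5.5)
We write the quantity $\tr\big((\tilde{\mathcal N}(U_S\rho_{SR}U_S^\dagger))^2\big)$ via the swap trick as $\tr\big[(\tilde{\mathcal N}\otimes\tilde{\mathcal N})\big(U_S^{\otimes 2}\rho_{SR}^{\otimes 2}U_S^{\dagger\otimes 2}\big)\,\mathbb S_{ER}\big]$, where $\mathbb S_{ER}$ swaps the two copies on $E\otimes R$. Everything is then linear in the second moment operator $U_S^{\otimes 2}(\cdot)U_S^{\dagger\otimes 2}$, and we can push the expectation inside gate by gate. Since two-qubit Cliffords form a $2$-design, each gate $U_{A_i}$ acts through the Haar twirl of Lemma~\ref{lemma:haartwirl} with $d=4$. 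We propagate \emph{backwards} in the Heisenberg picture: start with the observable at the output, built from $\tilde{\mathcal N}^{\dagger\otimes 2}$ applied to the swap on $E$, and apply the adjoint twirls of $U_{A_s},\dots,U_{A_1}$. Alternatively we propagate forward from $\rho_{SR}^{\otimes 2}$. I would go forward from the input, because the boundary condition $\vec\gamma^{(0)}\in(\{S\}^a\times\{I,S\}^{b-a})^m$ clearly comes from the input.

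The first step is the input boundary. On a logical qubit paired with $R$, tracing out $R$ against the swap leaves $|\hat\phi\rangle\langle\hat\phi|^{\otimes 2}$ contracted with $\mathbb S_R$. Per qubit this gives $\frac14\,\mathbb S$ on the two copies of that $S$-qubit. On an ancilla, $|0\rangle\langle0|^{\otimes 2}=\frac{1}{6}(I+S)+(\text{part annihilated by every twirl})$. More precisely, the first twirl only sees $\tr(O)$ and $\tr(SO)$, so we may replace the ancilla input by $\frac{1}{6}(I+S)$. Expanding the product over qubits gives a sum over configurations $\gamma^{(0)}\in\{I,S\}^n$, with $S$ forced on logical sites and free on ancillas.

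The second step is one gate. For a two-qubit product configuration $\gamma_x\otimes\gamma_{x+1}$ with $\gamma_x,\gamma_{x+1}\in\{I,S\}$, Lemma~\ref{lemma:haartwirl} with $d=4$ applies. If $\gamma_x=\gamma_{x+1}$, the operator is already $I\otimes I$ or $S\otimes S$ and is invariant, giving weight $1$. If they differ, we have $\tr(O)=2\cdot 2\cdot\ldots$; concretely, $O=I\otimes S$ has $\tr O=8$ and $\tr(SO)=8$ (in qubit-pair units). This gives $\alpha=\beta=\frac{8}{15}-\frac{8}{60}=\frac{2}{5}$, so the result is $\frac25(II+SS)$, with weight $\tfrac25$ to each equal-label configuration. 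This is exactly $M^{(t)}$. Iterating over the $s$ gates in the specified order yields the sum over trajectories $\vec\gamma$ with $\prod_t M^{(t)}$.

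The third step is the output boundary. Applying the noise, after each qubit has definite label $I$ or $S$, a single-qubit computation gives a factor $\tr(\tilde{\mathcal N}^{\otimes2}(\sigma)\,\mathbb S_E)$ with $\sigma\in\{I,S\}$. Because of the $\tau_E^{-1/4}$ normalization, $I$ and $S$ give factors whose ratio is $2^{f-1}=\lambda$, where $f$ is presumably the per-qubit Rényi-type noise quantity defined in the appendix. Collecting all normalizations then gives the prefactor $3^k2^n/3^n$, together with $2^k$, the input factors $(1/4)^{k}$ and $(1/6)^{n-k}$, and the output $I$-factor to the power $n$. The main obstacle is this last bookkeeping: computing $\tilde{\mathcal N}$'s single-qubit factors for both Pauli and erasure noise, identifying $\lambda$, and checking that all constants combine to exactly $3^k(2/3)^n$. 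I would first verify the noiseless case, where $Z$ should reduce to a known purity, and use it as a sanity check on the constants.
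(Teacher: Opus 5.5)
Your proposal is correct and follows essentially the paper's route. Both arguments use the swap trick and propagate the second-moment operator through the two-qubit $2$-design twirls via Lemma~\ref{lemma:haartwirl}, giving input factors $\frac14 S$ and $\frac16(I+S)$ and the rule $I\otimes S\mapsto\frac25(I\otimes I+S\otimes S)$, followed by a per-qubit contraction at the output. The only technical difference is that the paper inserts dummy Haar single-qubit layers at both ends to project onto $\mathrm{span}\{I,S\}$. You instead argue directly that the first two-qubit twirl already kills the non-$\{I,S\}$ part of the ancilla input, and you contract the final $\{I,S\}$ configuration without any projection.

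The bookkeeping you left open closes as expected. Per qubit, the output factors are $\tr(\tilde{\mathcal N}(I)^2)=4$, since $\tilde{\mathcal N}(I)=2\tau_E^{1/2}$, and $4\lambda$, a value the paper takes from Liu et al. The prefactor is therefore $2^k\cdot 4^{-k}6^{-(n-k)}\cdot 4^n=3^k(2/3)^n$.
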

In the above, $f$ is a parameter depending on the underlying noise channel. If we assume Pauli noise, then $\lambda=2^{f_P(\vec p)-1}$ with $f_P(\vec p)=2\log(\sqrt{p_I}+\sqrt{p_X}+\sqrt{p_Y}+\sqrt{p_Z})$, while for erasure noise $\lambda=2^{f_e(p)-1}$ with $f_e(p)=\log(1+3p)$. Note that both noise channels satisfy $0\leq f \leq 2$.

Note that from this lemma it follows that when $Z^{(s)}\to 1$ as $n\to \infty$ the expected Choi error tends to 0, thus achieving approximate error correction. The rest of this section will be dedicated to proving this latter convergence statement. As an intermediate step we show that the partition function decreases in the circuit depth $s$. 

\begin{lemma}\label{lem:decreasing}
    The partition function $Z^{(s)}$ decreases in the circuit size $s$.
\end{lemma}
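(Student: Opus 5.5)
We need to show $Z^{(s+1)}\le Z^{(s)}$, where $Z^{(s)}$ is the partition function of Lemma~\ref{lem:partition} after the first $s$ gates. The plan is to write it as a transfer-matrix product and show that appending one gate cannot increase the final weighted sum. Define the row vector $w^{(s)}_\gamma=\lambda^{|\gamma|}$ and the initial vector $v^{(0)}$, the indicator of the allowed starting configurations $(\{S\}^a\times\{I,S\}^{b-a})^m$. Then
\begin{equation*}
Z^{(s)} = 3^k\frac{2^n}{3^n}\, w^{T} M^{(s)}\cdots M^{(1)} v^{(0)}.
\end{equation*}
The matrices $M^{(t)}$ have nonnegative entries, so $v^{(s)}:=M^{(s)}\cdots M^{(1)}v^{(0)}$ is a nonnegative vector. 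It therefore suffices to show the entrywise inequality $w^T M^{(s+1)}\le w^T$, i.e. for every $\gamma$,
\begin{equation*}
\sum_\nu \lambda^{|\nu|}M^{(s+1)}_{\nu\gamma}\le \lambda^{|\gamma|}.
\end{equation*}

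We check this by cases on the pair $(\gamma_x,\gamma_{x+1})$ acted on by gate $s+1$.

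\emph{Case $\gamma_x=\gamma_{x+1}$.} The only nonzero entry is $\nu=\gamma$, with weight $1$, so equality holds.

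\emph{Case $\gamma_x\neq\gamma_{x+1}$.} The configuration $\nu$ agrees with $\gamma$ outside $\{x,x+1\}$ and has $\nu_x=\nu_{x+1}\in\{I,S\}$, each with weight $\tfrac25$. Writing $|\gamma|=r+1$, where $r$ counts the $S$'s outside the pair, the left side equals $\tfrac25(\lambda^r+\lambda^{r+2})$. The required inequality becomes
\begin{equation*}
\tfrac25(1+\lambda^2)\le\lambda .
\end{equation*}
This is equivalent to $2\lambda^2-5\lambda+2\le 0$, i.e. $(2\lambda-1)(\lambda-2)\le0$, i.e. $\lambda\in[\tfrac12,2]$. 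The assumption $0\le f\le2$ gives $\lambda=2^{f-1}\in[\tfrac12,2]$, so the inequality holds.

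Combining the cases, $w^TM^{(s+1)}\le w^T$ entrywise. Pairing with the nonnegative vector $v^{(s)}$ then yields $Z^{(s+1)}\le Z^{(s)}$. The only real obstacle is the quadratic inequality in the second case, and it holds exactly because of the range of $\lambda$ guaranteed by the noise models. The one point needing care is whether $x,x+1$ wrap around under the periodic boundary, but the argument is local, so it is unaffected.
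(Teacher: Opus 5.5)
Your proof is correct and is essentially the paper's argument. The paper extends each length-$s$ trajectory by one gate and notes that a disagreeing pair splits into two branches whose combined weight is the old weight times $\frac25(\lambda+\lambda^{-1})\le 1$ for $0\le f\le 2$. That is exactly your entrywise inequality $w^TM^{(s+1)}\le w^T$, paired with the nonnegative vector $v^{(s)}$; your factorization $(2\lambda-1)(\lambda-2)\le 0$ just makes explicit the range $\lambda\in[\frac12,2]$ that the paper asserts.
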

\begin{proof}
    Consider a configuration trajectory $\vec \gamma$ of length $s$. Its current contribution is given by
    \begin{equation}
        3^k\cdot\frac{2^n}{3^n}\cdot\lambda^{|\vec\gamma^{(s)}|}\prod_{t=1}^sM^{(t)}_{\vec \gamma^{(t)}\vec \gamma^{(t-1)}}.
    \end{equation}
    When adding a uniformly sampled Clifford gate on a pair of qubits there are two possibilities. If $A_{s+1}=\{a,b\}$ and these bits agree in $\vec \gamma^{(s)}$, then $\vec \gamma^{(s+1)}=\vec \gamma^{(s)}$ and the contribution will not change. For each trajectory where these bits do not agree in $\vec \gamma^{(s)}$ we obtain two new trajectories. One in which both bits become $I$ and one in which they become $S$. The contribution of both these trajectories is
    \begin{align}
        \space &\frac{2}{5}\cdot 3^k\cdot\frac{2^n}{3^n}\cdot\lambda^{|\vec\gamma^{(s)}|+1}\prod_{t=1}^sM^{(t)}_{\vec \gamma^{(t)}\vec \gamma^{(t-1)}}+\frac{2}{5}\cdot 3^k\cdot\frac{2^n}{3^n}\cdot\lambda^{|\vec\gamma^{(s)}|-1}\prod_{t=1}^sM^{(t)}_{\vec \gamma^{(t)}\vec \gamma^{(t-1)}}\\
        &=\frac{2}{5}3^k\cdot\frac{2^n}{3^n}\cdot\lambda^{|\vec\gamma^{(s)}|}\prod_{t=1}^sM^{(t)}_{\vec \gamma^{(t)}\vec \gamma^{(t-1)}}\left(\frac25\lambda+\frac25\lambda^{-1}\right).\nonumber
    \end{align}
    Thus, the difference is a factor of $\frac25\lambda+\frac25\lambda^{-1}$, which for $0\leq f \leq 2$ satisfies $\frac25\lambda+\frac25\lambda^{-1}\leq 1$, showing that $Z^{(s)}$ is decreasing in $s$.
\end{proof}

Next we will argue that if we consider a brickwork circuit of infinite depth, the partition function has a clean functional form:

\begin{lemma}\label{lem:infinitedepth}
    Let $Z^{(\infty)}$ denote the partition function in infinite depth. Then, the following holds
    \begin{equation}
        Z^{(\infty)}=\frac{1}{1-2^{-2n}}\left(1-2^{-n+k}+2^{-n+k+nf}-2^{-2n+nf}\right).
    \end{equation}
\end{lemma}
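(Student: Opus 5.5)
In infinite depth the stat-mech Markov process is absorbed in one of the two uniform configurations $I^n$ or $S^n$. So I would compute $Z^{(\infty)}$ by determining the absorption weights rather than by summing over trajectories.

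The cleanest route avoids the domain-wall picture entirely and goes back to the definition of $Z$ as an expectation. The brickwork ensemble with depth $D\to\infty$ converges, in second moments, to the global Clifford group $\chi_{\mathcal C}^{(n)}$. Its gates are $2$-designs, so a single brickwork layer pair has a second-moment operator $T$ that is a contraction with fixed space spanned by the global $I$ and $S$ (a standard connectivity/gap argument). Hence $T^D\to$ the projector onto $\mathrm{span}\{I,S\}$, which is the Haar (equivalently Clifford, since Clifford is a $2$-design) twirl. Therefore
\[
Z^{(\infty)}=2^k\,\E_{U\sim\Haar(2^n)}\tr\big((\tilde{\mathcal N}(U\rho_{SR}U^\dagger))^2\big).
\]
Then I would apply Lemma~\ref{lemma:haartwirl} with $d=2^n$ to $O=\rho\otimes\rho$ on the $S$ system, keeping $R$ untwirled. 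I would write $\tr(\tilde{\mathcal N}(\cdot)^2)$ as $\tr\big[(\tilde{\mathcal N}^{\otimes2})(X)\,\mathrm{SWAP}_{ER}\big]$ for the appropriate registers. The twirled state becomes $\alpha\, I_{SS}\otimes(\cdot)_{RR}+\beta\,\mathrm{SWAP}_{SS}\otimes(\cdot)_{RR}$, where the $R$-parts are the partial traces of $\rho_{SR}^{\otimes 2}$ against $I$ and $S$ on $SS$. These coefficients involve only $\tr\rho_S^2=2^{-k}$ and $\tr\rho_{SR}^2=1$. After this, only two noise-dependent numbers remain: the contributions of $\tilde{\mathcal N}$ evaluated on $I$ and on $\mathrm{SWAP}$. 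By the normalization $\tilde{\mathcal N}=\tau_E^{-1/4}\hat{\mathcal N}\tau_E^{-1/4}$, the $I$ term gives $2^{2n}$ times the trivial value. The $S$ term gives $2^{n}\cdot 2^{nf}$, since it factorizes over qubits and yields $2^{f}$ per qubit. This per-qubit value is exactly the definition of $f_P$ and $f_e$, and it is the same single-qubit computation implicit in the weight $\lambda=2^{f-1}$ of Lemma~\ref{lem:partition}.

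Alternatively, and as a consistency check within the stat-mech model, I would use Lemma~\ref{lem:partition} directly. Only the final configurations $I^n$ and $S^n$ survive, with weights $\lambda^0$ and $\lambda^n$. I would then need the total trajectory weight $w_I(\gamma^{(0)})$, $w_S(\gamma^{(0)})$ flowing from each initial configuration into each absorbing state. These weights satisfy a linear invariance: the weighted sums $\sum_\gamma (\text{weight})\,g(\gamma)$ are conserved by $M^{(t)}$ for the two functions that are the Weingarten/Gram duals of $I$ and $S$. Concretely, $\tr(P_\gamma P_{I^n})$ and $\tr(P_\gamma P_{S^n})$, with local factors $4$ and $2$ per qubit, are preserved because each gate is a $2$-design. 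Solving the resulting $2\times2$ system gives the absorption weights, with the $1/(1-2^{-2n})$ arising from inverting the Gram matrix $\begin{pmatrix}4^n&2^n\\2^n&4^n\end{pmatrix}$. Summing over $\gamma^{(0)}\in(\{S\}^a\times\{I,S\}^{b-a})^m$ factorizes as a product over qubits and produces the $2^{k}$ and $2^{n}$ factors. Combining with the prefactor $3^k 2^n/3^n$ and $\lambda^n=2^{nf-n}$ yields the claimed expression.

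The main obstacle is the justification that the infinite-depth limit equals the global twirl, i.e. that the only fixed points of the product of local twirls are $I^n$ and $S^n$ and that convergence occurs. I would argue this via the contraction property: any non-uniform configuration has a domain wall, and under the periodic gate order that wall is hit with a strict weight loss. The rest is bookkeeping of constants, which I would verify on the case $n=1$, $k=0$.
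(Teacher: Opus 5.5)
Your main route is essentially the paper's proof: reduce to $2^k\,\E_{U\sim\Haar(2^n)}\tr((\tilde{\mathcal N}(U\rho_{SR}U^\dagger))^2)$, apply the swap trick and Lemma~\ref{lemma:haartwirl} with $d=2^n$, and use the noise traces $4^n$ and $2^{n+nf}$. The paper twirls both factors to get $\alpha',\beta'$, whereas you twirl only the state side; the two are equivalent because $\tr((\alpha' I+\beta' S)I)=4^n$ and $\tr((\alpha' I+\beta' S)S)=2^{n+nf}$. Your contraction argument also supplies the justification, which the paper omits when it calls the infinite-depth Clifford brickwork ``an exact Haar random unitary'', that the limit agrees with the Haar twirl at the level of second moments.
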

\begin{proof} The proof follows by using the definition of the partition function backwards, and realising that a brickwork circuit of infinite depth is an exact Haar random unitary. 
First, we rewrite using the SWAP trick the term $\tr\left(\left(\tilde{\mathcal N}_{S\to E}(U_S\rho_{SR}U_S^\dagger)\right)^2\right)$. This gives us
\begin{align}
    \tr\left(\left(\tilde{\mathcal N}_{S\to E}(U_S\rho_{SR}U_S^\dagger)\right)^2\right)&=\tr\left(\left(\tilde{\mathcal N}_{S\to E}(U_S\rho_{SR}U_S^\dagger)\right)^{\otimes 2}(S_E\otimes S_R)\right)\\
    &=\tr\left(U_S^{\otimes 2}\rho_{SR}^{\otimes 2}(U_S^\dagger)^{\otimes 2}((\tilde{\mathcal N}_{S\to E}^{\dagger})^{\otimes 2}(S_E)\otimes S_R)\right)\\
    &=\tr_S\left(U_S^{\otimes 2} \tr_R\left(\rho_{SR}^{\otimes 2}S_R\right)(U_S^\dagger)^{\otimes 2}(\tilde{\mathcal N}_{S\to E}^{\dagger})^{\otimes 2}(S_E)\right).
\end{align}
Letting $\Haar(2^n)$ denote the Haar measure on the unitary group $\U(2^n)$, we see by invariance of the Haar measure that
\begin{align}
    &\E_{U_S\sim \Haar(2^n)}\left[\tr\left(\left(\tilde{\mathcal N}_{S\to E}(U_S\rho_{SR}U_S^\dagger)\right)^2\right)\right]=\E_{U_S\sim \Haar(2^n)}\left[\tr_S\left(U_S^{\otimes 2} \tr_R\left(\rho_{SR}^{\otimes 2}S_R\right)(U_S^\dagger)^{\otimes 2}(\tilde{\mathcal N}_{S\to E}^{\dagger})^{\otimes 2}(S_E)\right)\right]\\
    &=\E_{\substack{U_S\sim \Haar(2^n)\\ V_S\sim \Haar(2^n)}}\left[\tr_S\left((V_SU_S)^{\otimes 2} \tr_R\left(\rho_{SR}^{\otimes 2}S_R\right)((V_SU_S)^\dagger)^{\otimes 2}(\tilde{\mathcal N}_{S\to E}^{\dagger})^{\otimes 2}(S_E)\right)\right]\\
    &=\tr\left(\E_{U_S\sim \Haar(2^n)}\left[U_S^{\otimes 2}\tr_R\left(\rho_{SR}^{\otimes 2}S_R\right)(U_S^\dagger)^{\otimes 2}\right]\E_{V_S\sim \Haar(2^n)}\left[V_S^{\otimes 2}(\tilde{\mathcal N}_{S\to E}^{\dagger})^{\otimes 2}(S_E)(V_S^\dagger)^{\otimes 2}\right]\right),
\end{align}
where in the last equality we used the cyclicity of the trace. Using Lemma~\ref{lemma:haartwirl}, we find that 
\begin{equation}
    \E_{U_S\sim \Haar(2^n)}\left[U_S^{\otimes 2}\tr_R\left(\rho_{SR}^{\otimes 2}S_R\right)(U_S^\dagger)^{\otimes 2}\right]=\alpha I +\beta S, \; \text{and} \; \E_{V_S\sim \Haar(2^n)}\left[V_S^{\otimes 2}(\tilde{\mathcal N}_{S\to E}^{\dagger})^{\otimes 2}(S_E)(V_S^\dagger)^{\otimes 2}\right]=\alpha'I+\beta' S,
\end{equation} 
where $\alpha, \beta, \alpha'$ and $\beta'$ can be determined analogously as in the proof of Lemma~\ref{lem:partition}. This gives
\begin{align}
    \alpha =\frac{2^{-k}-2^{-n}}{2^{2n}-1}, \qquad  \beta 
    %=\frac{1}{2^{2n}-1}-\frac{2^{-k}}{2^n(2^{2n}-1)}
    =\frac{1-2^{-k-n}}{2^{2n}-1}, \qquad \alpha'=\frac{2^{2n}-2^{nf}}{2^{2n}-1}, \qquad \beta'=\frac{2^{nf+n}-2^n}{2^{2n}-1}.
\end{align}

Putting it all together and simplifying, one obtains
\begin{align}
    Z^{(\infty)}&=2^k\E_{U_S\sim \Haar(2^n)}\left[\tr\left(\left(\tilde{\mathcal N}_{S\to E}(U_S\rho_{SR}U_S^\dagger)\right)^2\right)\right]\\
    &=(\alpha\alpha'+\beta\beta')2^{2n+k}+(\alpha\beta'+\alpha'\beta)2^{n+k}\\
    &=\frac{1}{1-2^{-2n}}\left(1-2^{-n+k}+2^{-n+k+nf}-2^{-2n+nf}\right).
    %\\&=\frac{\left((2^{-k}-2^{-n})(2^{2n}-2^{nf})+(1-2^{-k-n})(2^{nf+n}-2^n)\right)2^{2n+k}+\left((1-2^{-k-n})(2^{2n}-2^{nf})+(2^{-k}-2^{-n})(2^{nf+n}-2^n)\right)2^{n+k}}{(2^{2n}-1)^2}
\end{align}
\end{proof}

It remains to prove our central result Theorem~\ref{thm:mainresultthesis}. To do this we turn to an alternative interpretation of the partition function in terms of \emph{domain walls}, following \cite{Dalzell_2022}. 
\begin{definition}
     Consider a configuration $\gamma\in \{I,S\}^n$ and define $\DW(\gamma) = g=\{e\in \{0,...,n-1\}\mid \gamma_{e}\neq \gamma_{e+1}\}$ to be the \emph{domain wall configuration} of $\gamma$. For a trajectory $\vec \gamma\in \{I,S\}^{n\times (s+1)}$ we define the \emph{domain wall trajectory} of $\vec \gamma$ as
     \begin{equation}
         \DW(\vec \gamma)=\left(\DW(\vec \gamma^{(t)})\right)_{t\in \{0,...,s\}}.
     \end{equation}
     We will sometimes denote it as $G=(g^{(t)})_{t\in \{0,...,s\}}$. We identify $\gamma_n$ and $\gamma_0$ for $\gamma\in \{I,S\}^n$ due to periodic boundary conditions.
\end{definition}
Let $\mathcal G^{(s)}=\DW(\{I,S\}^{n\times (s+1)})$ be the set of all domain wall trajectories. Note that the function $\DW$ is not injective. Each domain wall configuration $g$ corresponds to two possible configurations in $\{I,S\}^n$ and fixing $\gamma_0=I$ or $\gamma_0=S$ determines all other sites given $g$. This 2-to-1 correspondence also extends to trajectories. However, for us, any configuration trajectory $\vec \gamma$ starts in the set of initial configurations $\mathcal I := (\{S\}^a\times\{I,S\}^{b-a})^m$, with $\gamma^{(0)}_0=S$ (as $a>0$). This means we have that all the configuration trajectories with nonzero contribution to the partition function in Equation \eqref{eq:partitionfunction} are contained in
\begin{equation}
\mathcal T^{(s)}=\{\vec \gamma\in \{I,S\}^{n\times (s+1)} \mid \vec \gamma^{(0)}\in \mathcal I\}.
\end{equation}
We see that $\DW$ restricted to $\mathcal T^{(s)}$ is injective. The image of $\mathcal T^{(s)}$ under $\DW$ is given by
\begin{align}
    \mathcal G_{\mathcal B}=\{G\in \mathcal G^{(s)}\mid g^{(0)}\in \mathcal B\}, \; \text{where}\; \mathcal B=\{g\subseteq \{0,...,n-1\}\mid \exists\gamma \in \mathcal I : \DW(\gamma)=g\}. 
\end{align}
Thus $\DW:\mathcal T^{(s)}\to \mathcal G_{\mathcal B}$ is a bijective function. We will use $\Gamma:\mathcal G_B\to \mathcal T^{(s)}$ to denote the inverse of $\DW$. Using this we can express the partition function in terms of domain walls
\begin{equation}
    Z^{(s)}=3^k\cdot\frac{2^n}{3^n}\sum_{\substack{\vec \gamma\in \{I,S\}^{n\times (s+1)}\\ \vec \gamma^{(0)}\in (\{S\}^a\times\{I,S\}^{b-a})^m}}\lambda^{|\vec\gamma^{(s)}|}\prod_{t=1}^sM^{(t)}_{\vec \gamma^{(t)}\vec \gamma^{(t-1)}}=3^k\cdot\frac{2^n}{3^n}\sum_{G\in \mathcal G_{\mathcal B}}\lambda^{|\Gamma(G)^{(s)}|}W(G),
\end{equation}
where 
\begin{align}
    W(G)&:=\prod_{t=1}^sM^{(t)}_{g^{(t)}g^{(t-1)}}\\
    M^{(t)}_{g^{(t)}g^{(t-1)}}&:=\begin{cases}
        \frac25, & \text{if $\min A_t\in g^{(t-1)}$}\\
        1, & \text{otherwise}.
    \end{cases}
\end{align}

A domain wall trajectory can be partitioned into surviving domain walls and domain walls that annihilate, giving us:
\begin{equation}
    G=G_U\sqcup G_0=\left(g_U^{(0)} \sqcup g_0^{(0)},...,g_U^{(s)} \sqcup g_0^{(s)}\right),
\end{equation}
where $G_0$ is the domain wall configuration of the domain walls that annihilate, that is, $|g_0^{(s)}|=0$ and $G_U$ is the domain wall trajectory of the surviving domain walls. Note that $G_U$ has a conserved number of domain walls throughout its entire trajectory. Lastly, since $G_0$ and $G_U$ are disjoint, we find for the weights that $W(G)=W(G_0)W(G_U)$. Next, let $\mathcal G_0$ and $\mathcal G_U$ be the subsets of $\mathcal G^{(s)}$ of domain wall trajectories such that all domain walls annihilate and no annihilations occur, respectively. Similarly, we define $\mathcal G_{\mathcal B,0}$ as the subset of $\mathcal G_{\mathcal B}$ of annihilating domain wall trajectories starting from $\mathcal B$. More specifically, let $\mathcal G_{U,k}$ be the subset of $ \mathcal G_U$ with $k$ domain walls and remark that due to PBC, $\mathcal G_{U,k}$ is non-empty if and only if $k$ is even. \\

Now, we turn our attention to the dynamics of our partition function in terms of the domain wall picture. Consider $G\in \mathcal G_{\mathcal B}$, and decompose it in annihilating and surviving domain walls: $G=G_0\sqcup G_U$. Then we note that the $G_U$ partitions the $n$ sites into intervals whose sizes can change at each time step. 

Furthermore, all sites in an interval will have the same state, either $S$ or $I$. The goal is to express the partition function in terms of these partitions. With this in mind, we partition the $n$ sites in the following way. For $t\in \{0,...,s\}$ we let:
\begin{equation}
\mathcal P_t(G_U)=\{(x,x_{\mathrm{next}})\mid x\in g_U^{(t)} \},
\end{equation}
where $x_{\mathrm{next}}$ indicates the location of the domain wall immediately following the domain wall at location $x$ (wrapping around due to periodic boundary conditions). 
This specifies the intervals of a domain wall configuration $G_U$. 
Note that the size of an interval $p = (x,y)\in \mathcal P_t(G_U)$ is given by $b-a - 1\mod n$. 
Given a $p\in \mathcal P_0(G_U)$, we let $S_p^{(t)}$ denote the size of this interval at the time step $t$ and see that $S_p^{(t)}+1$ is the number of sites in the interval. Given $G=G_0\sqcup G_U$ and a $p\in \mathcal P_0(G_U)$ we see that all sites at the end of the trajectory must be of the same state $T\in \{S,I\}$ in configuration space.
This state, determined by the starting domain wall configuration and the partition, $T(G,p)\in \{I,S\}$, we refer to as the \emph{type} of the interval. Moreover, the type of an interval is determined by the boundaries in the configuration space. 
In other words, if $(x,y)\in \mathcal P_{0}(G_U)$, then $T(G,(x,y))=\vec \gamma^{(0)}_{x+1}=\vec\gamma^{(0)}_{y}$ where $\vec \gamma=\Gamma(G)$. 
Furthermore, once the type of one interval is determined, the others follow as they must alternate. Lastly, given an interval $(x,y)\in \mathcal P_0(G_U)$, we let $\Int(x,y)=\emptyset$ if $y=x+1$ and $\Int(x,y)=\{x+1,...,y-1\}$ be the \emph{interior points}. 

Since the $\ell$ surviving domain walls in $G_U$ partition the 1D circuit into $\ell$ intervals, these intervals are denoted by $p_j=(x_j,y_j)\in \mathcal P_0(G_U)$ for $j=1,...,\ell$, where $p_1=(x_1,y_1)$ satisfies $x_1=\min(G_U)$ and let $\tau_j$ denote the type of the interval $p_j$. Using these notions, the partition function can be expressed by decomposing it over disjoint intervals. Note that for $t\geq1$ the number of domain walls is at most $\frac{n}{2}$ and since there is always an even number of domain walls, we get:
\begin{align}Z^{(s)}&=3^k\cdot\frac{2^n}{3^n}\sum_{G\in \mathcal G_{\mathcal B}}\lambda^{|\Gamma(G)^{(s)}|}W(G)\\
&=3^k\cdot\frac{2^n}{3^n}\sum_{G_0\in \mathcal G_{\mathcal B,0}}\lambda^{|\Gamma(G_0)^{(s)}|}W(G_0)\label{eq:secondterm}\\
&\quad +3^k\cdot\frac{2^n}{3^n}\sum_{k_0=1}^{n/4}\sum_{\substack{G\in \mathcal G_{U,2k_0}\\ \exists \gamma\in \mathcal I:\DW(\gamma)=g^{(0)}}}W(G_U)\sum_{\substack{\tau_1\in \{I,S\}\\ \exists \gamma \in \mathcal I:\\(\gamma_{x_i+1})_{i=1}^{2k_0}=(\tau_i)_{i=1}^{2k_0}\\\andd (\gamma_{y_i})_{i=1}^{2k_0}=(\tau_i)_{i=1}^{2k_0}}}\prod_{j=1}^{2k_0}\sum_{\substack{H_{p_j}\in \mathcal G_0\\H_{p_j}\cap G_U=\emptyset\\ \exists \gamma\in \mathcal I:\\ h_{p_j}^{(0)}\subseteq \Int(p_j)\cap \DW(\gamma)\\ \andd \gamma_{x_j+1}=\gamma_{y_j}=\tau_j}}\lambda^{(S_{p_j}^{(s)}+1)\1_{\{\tau_j=S\}}}W(H_p).  \nonumber
\end{align}
The first term in Equation \eqref{eq:secondterm} constitutes the total contribution of all completely annihilating trajectories. In the second term, we sum over domain wall trajectories constructed as follows. The first sum iterates over the number of surviving domain walls, and the second over all domain wall trajectories with a fixed number of domain walls throughout.  However, we count only those that are commensurate with the initial conditions $\DW(\mathcal I)$. Next, a valid type is chosen for the interval $p_1$, which settles the types for all other intervals. A type selection is valid if there is a starting configuration $\gamma\in \mathcal I$ satisfying all the boundary conditions imposed by the types. Afterward, taking the product over all the intervals, we enforce the restraint that the domain wall trajectory is disjoint with the surviving domain walls. This ensures that the domain wall trajectories of the intervals are disjoint and are contained within the boundaries set by the surviving domain walls. Lastly, for the annihilating domain wall trajectories within the intervals, valid starting configurations compatible with the boundary conditions set by the type of the interval are imposed. \\

Since all annihilating trajectories in $s$ time steps are a subset of all annihilating trajectories within any finite number of time steps, we find that the first term in Equation \eqref{eq:secondterm} can be upper bounded by $Z^{(\infty)}$, which is determined in Lemma~\ref{lem:infinitedepth}. Thus, we focus our attention on the second term in Equation~\eqref{eq:secondterm}. Dropping the constraint that the type has to be valid globally on the second term in Equation~\eqref{eq:secondterm} immediately gives the following upper bound on the second term

\begin{align} \label{eq:secondterm2} 3^k\cdot\frac{2^n}{3^n}\sum_{k_0=1}^{n/4}\sum_{\substack{G_U\in \mathcal G_{U,2k_0}\\ \exists \gamma\in \mathcal I:\DW(\gamma)=g^{(0)}}}W(G_U)\sum_{\tau_1\in \{I,S\}}\overbrace{\prod_{j=1}^{2k_0}\sum_{\substack{H_{p_j}\in \mathcal G_0\\H_{p_j}\cap G_U=\emptyset\\ \exists \gamma\in \mathcal I:\\ h_{p_j}^{(0)}\subseteq \Int(p_j)\cap \DW(\gamma)\\ \andd \gamma_{x_j+1}=\gamma_{y_j}=\tau_j}}\lambda^{(S_{p_j}^{(s)}+1)\1_{\{\tau_j=S\}}}W(H_p)}^{\ast}. 
\end{align}
Our next goal is to decompose the expression under $\ast$ into \emph{non-interacting} intervals for a given $G_U$. The dependency lies in the requirement $H_{p_j}\cap G_U=\emptyset$, i.e., the trajectories arising from an interval $p_j$ can never cross the surviving domain walls at the boundary of the interval. To eliminate this dependency, we introduce three more notions.

\begin{definition}\label{def:orderpreserving}
    If $x=bq+r$ with $0\leq r <b$, we let $B(x)=q$ denote the block $x$ lies in. Furthermore, given an interval $(x,y)\in \mathcal P_0(G_U)$, we let $m_p$ be the number of complete $b$-sized blocks at time $t=0$. More precisely, if $x<y$ then $m_{(x,y)}=\max\{B(y)-B(x)-1,0\}$. Moreover, given a pair of domain walls at $t=0$, $(x_0,y_0)$, let $(x_t,y_t)_{t\in \{0,...,s\}}$ denote their evolution. 
\end{definition}

Now, we are ready to upper bound the partition function into factors defined in terms of non-interacting intervals. The sets in the following definition will play a significant role in this pursuit.

\begin{definition}
    Given an $(x,y)\in \mathcal P_0(G_U)$, where, if $\hat z = z\mod b$ for $z\in \mathbb N$, we define the following two sets when $B(x)\neq B(y)$:
    \begin{align}
    A_{(x,y)}^S&=\{S\}^{x+1} \times \{I,S\}^{b - \hat x - 1} \times (\{S\}^a \times \{I,S\}^{b-a})^{m_p}\times\{I,S\}^{\hat y} \times\{S\}^{n-(b-\hat x+x+bm_p+\hat y)}\\
    A_{(x,y)}^I&=\{I\}^{x+1} \times \{I,S\}^{b - \hat x - 1} \times (\{S\}^a \times \{I,S\}^{b-a})^{m_p}\times\{I,S\}^{\hat y} \times\{I\}^{n-(b-\hat x+x+bm_p+\hat y)}.
    \end{align}
If they do lie in the same block, i.e., $B(x)=B(y)$, we moreover define $A^T_{(x,y)}=\{T\}^{x+1}\times \{I,S\}^{y-x-1}\times \{T\}^{n-y}$ for $T\in \{I,S\}$.
\end{definition}

Note that the set $A^{\tau_j}_{p_j}$ is a superset of the set of all configurations $\gamma\in \mathcal I$ with the properties $h_{p_j}^{(0)}\subseteq \Int(p_j)\cap \DW(\gamma)$ and $\gamma_{x_j+1}=\gamma_{y_j}=\tau_j$. 
Thus, all the terms in the sum of $\gamma\in \mathcal I$ satisfying those two conditions are also counted if we sum over $A^{\tau_j}$ instead, giving us an upper bound. 
We are now ready to upper bound expression $\ast$ in Equation~\eqref{eq:secondterm2}.

The main idea of this upper bound is that by relaxing the conditions the sum is taken over, we obtain an upper bound since the sum will be taken over strictly more nonnegative terms.
We begin by adding domain wall trajectories to $H_{p_j}$ that are unrestricted by its boundary surviving domain walls. This means more trajectories are allowed as some might move past the boundary domain walls. Note that this "decouples" $H_{p_j}$ from $G_U$. 
However, some of these unrestricted trajectories might change the type of the interval as domain walls can now loop around (because of periodic boundary conditions) and annihilate with another domain wall which would change the type of the interval. For this reason, we must add only \emph{order-preserving} annihilations in $H_{p_j}$. 
For a pair of domain walls $(x_t,y_t)_{t\in \{0,...,s\}}$ that annihilate at time step $s$, assuming w.l.o.g. that $x_0<y_0$, we call the annihilation order-preserving if neither $(x_{s-1},y_{s-1})=(0,n-1)$ nor $x_{s-1}>y_{s-1}$ hold. Let $\mathcal O$ denote the set of domain wall trajectories with order-preserving annihilations. 

With this extension the configuration space of $H_{p_j}$ is separate from that of $G_U$ and the final configuration of $H_{p_j}$ will be $\{\tau_j\}^n$.
We can add even more trajectories by noting that $A^{\tau_j}$ is a superset of the set of initial conditions $\gamma\in \mathcal I$ with the properties $h_{p_j}^{(0)}\subseteq \Int(p_j)\cap \DW(\gamma)$ and $\gamma_{x_j+1}=\gamma_{y_j}=\tau_j$. Combining all steps and continuing from the sum in expression $\ast$ from Equation~\eqref{eq:secondterm2} implies for a given $G_U$ the following inequality
    \begin{equation}\label{eq:lemmareplacement1}
    \sum_{\substack{H_{p_j}\in \mathcal G_0\\H_{p_j}\cap G_U=\emptyset\\ \exists \gamma\in \mathcal I:\\ h_{p_j}^{(0)}\subseteq \Int(p_j)\cap \DW(\gamma)\\ \andd \gamma_{x_j+1}=\gamma_{y_j}=\tau_j}}\lambda^{(S_{p_j}^{(s)}+1)\1_{\{\tau_j=S\}}}W(H_p)\leq \sum_{\substack{H_{p_j}\in \mathcal G_0\cap \mathcal O\\ \exists \gamma\in \mathcal A^{\tau_j}_{p_j}:\DW(\gamma)=h_{p_j}^{(0)}}}\lambda^{(S_{p_j}^{(s)}+1)\1_{\{\tau_j=S\}}}W(H_p).
\end{equation}
Using that the configuration associated to $H_{p_j}$ converges to $\{\tau_j\}^n$ we obtain
\begin{equation} \label{eq:lemmareplacement2}
\sum_{\substack{H_{p_j}\in \mathcal G_0\cap \mathcal O\\ \exists \gamma\in \mathcal A^{\tau_j}_{p_j}:\DW(\gamma)=h_{p_j}^{(0)}}}\lambda^{(S_{p_j}^{(s)}+1)\1_{\{\tau_j=S\}}}W(H_p)
\leq \sum_{\substack{\vec \gamma\in \{I,S\}^{n\times (s+1)} \\ \vec\gamma^{(0)}\in A^{\tau_j}_{p_j} \\ \exists k\leq s\; :\; \vec \gamma^{(k)}= \{\tau_j\}^n}}\lambda^{(S_{p_j}^{(s)}+1)\1_{\{\tau_j=S\}}}\prod_{t=1}^sM^{(t)}_{\vec \gamma^{(t)}\vec \gamma^{(t-1)}},
\end{equation}
since any even number of non-order-preserving annihilations allows a trajectory starting from $A_{p_j}^{\tau_j}$ to converge to $\{\tau_j\}^n$. Summarizing the bounds in Equations~\eqref{eq:lemmareplacement1} and \eqref{eq:lemmareplacement2}, we have now the following upper bound on the second term in Equation~\eqref{eq:secondterm}:
\begin{align}
    &3^k\cdot\frac{2^n}{3^n}\sum_{k_0=1}^{n/4}\sum_{\substack{G_U\in \mathcal G_{U,2k_0}\\ \exists \gamma\in \mathcal I:\DW(\gamma)=g^{(0)}}}W(G_U)\sum_{\substack{\tau_1\in \{I,S\}\\ \exists \gamma \in \mathcal I:\\(\gamma_{x_i+1})_{i=1}^{2k_0}=(\tau_i)_{i=1}^{2k_0}\\\andd (\gamma_{y_i})_{i=1}^{2k_0}=(\tau_i)_{i=1}^{2k_0}}}\prod_{j=1}^{2k_0}\sum_{\substack{H_{p_j}\in \mathcal G_0\\H_{p_j}\cap G_U=\emptyset\\ \exists \gamma\in \mathcal I:\\ h_{p_j}^{(0)}\subseteq \Int(p_j)\cap \DW(\gamma)\\ \andd \gamma_{x_j+1}=\gamma_{y_j}=\tau_j}}\lambda^{(S_{p_j}^{(s)}+1)\1_{\{\tau_j=S\}}}W(H_p)\nonumber\\
    &\leq  3^k\cdot\frac{2^n}{3^n}\sum_{k_0=1}^{n/4}\sum_{\substack{G_U\in \mathcal G_{U,2k_0}\\ \exists \gamma\in \mathcal I:\DW(\gamma)=g^{(0)}}}W(G_U)\sum_{\tau_1\in \{I,S\}}\prod_{j=1}^{2k_0}\overbrace{\sum_{\substack{\vec \gamma\in \{I,S\}^{n\times (s+1)} \\ \vec\gamma^{(0)}\in A^{\tau_j}_{p_j} \\ \exists k\leq s\; :\; \vec \gamma^{(k)}= \{\tau_j\}^n}}\lambda^{(S_{p_j}^{(s)}+1)\1_{\{\tau_j=S\}}}\prod_{t=1}^sM^{(t)}_{\vec \gamma^{(t)}\vec \gamma^{(t-1)}}}^{\ast_2}\label{eq:secondterm3}
\end{align}

Since the intervals in Equation~\eqref{eq:secondterm3} are independent and all of the same form, we can derive one general bound depending on the type for the expression in $\ast_2$ to upper bound the term as a whole. Note that the expression in $\ast_2$ contains only trajectories that annihilate in $s$ steps to $\{\tau_j\}^n$. 
Following a similar computation as in Lemma~\ref{lem:infinitedepth}, this can be easily upper bounded by considering all trajectories that converge to $\{\tau_j\}^n$ at any depth. This gives the following lemma.
\begin{lemma}\label{lemma:ultimate}
     Let $p_j:=(x_j,y_j)\in \mathcal P_0(G_U)$, $m_{p_j}$ refer to the number of complete $b$-sized blocks in the interval at $t=0$ from Definition~\ref{def:orderpreserving} and assume w.l.o.g. that $0<x_j<y_j$. If $f\leq 1-\frac{a}{b}=1-\frac{k}{n}$, then for large enough $n$, the following holds:
    \begin{equation}\sum_{\substack{\vec \gamma\in \{I,S\}^{n\times (s+1)} \\ \vec\gamma^{(0)}\in A^{\tau_j}_{p_j} \\ \exists k\leq s\; :\; \vec \gamma^{(k)}= \{\tau_j\}^n}}\lambda^{(S_{p_j}^{(s)}+1)\1_{\{\tau_j=S\}}}\prod_{t=1}^sM^{(t)}_{\vec \gamma^{(t)}\vec \gamma^{(t-1)}}\leq \begin{cases}
        3^{2b}\cdot 2^{\frac{a}{b}S_{p_j}^{(0)}-\frac{a}{b}S_{p_j}^{(s)}}\cdot \frac{3^{(b-a)m_{p_j}}}{2^{bm_{p_j}}}, &\text{if $\tau_j=S$}\\
        3^{2b}\cdot \frac{3^{(b-a)m_{p_j}}}{2^{bm_{p_j}}}, &\text{if $\tau_j=I$}.
    \end{cases}\end{equation}
\end{lemma}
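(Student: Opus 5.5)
We bound the left-hand side by its infinite-depth analogue and evaluate that with the Haar computation of Lemma~\ref{lem:infinitedepth}. The summand is nonnegative. Every trajectory in the sum reaches the absorbing configuration $\{\tau_j\}^n$ at some time $k\le s$. From then on $M^{(t)}$ acts as the identity, since uniform configurations have no domain walls. Hence each trajectory extends uniquely to infinite depth with unchanged weight, and the sum is at most the total weight of all infinite-depth trajectories starting in $A^{\tau_j}_{p_j}$ and absorbed into $\{\tau_j\}^n$.

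\textbf{Step 1: reduce to a single Haar twirl.} By the argument of Lemma~\ref{lem:infinitedepth}, the infinite brickwork acts as a global Haar (Clifford $2$-design) twirl. The sum over configurations $\gamma^{(0)}\in A^{\tau_j}_{p_j}$, weighted as in Lemma~\ref{lem:partition}, is the expansion of a product operator $\bigotimes_x(c_I I_x + c_S S_x)$. On a site fixed to $T$ only the $T$ term appears; on a free site both appear. After twirling, the absorption probability into $\{I\}^n$ or $\{S\}^n$ is given by the coefficients $\alpha,\beta$ of Lemma~\ref{lemma:haartwirl} with $d=2^n$.

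\textbf{Step 2: weigh the site classes.} Each factor is a local trace ratio evaluated with the $\frac{2^n}{3^n}$, $3^k$ normalisations, which we distribute site by site. A logical site is pinned to $S$ and gives a factor of about $\tfrac{1}{2}$. A free ancilla site gives about $\tfrac32$, in the sense that the normalised two-term sum $(\tr I+\tr S)$ per qubit, i.e. $(4+2)/4$, yields $\frac{3}{2}$ relative to $\tfrac{2}{3}\cdot\ldots$. A complete block therefore contributes $\frac{3^{b-a}}{2^b}$, which gives the factor $\frac{3^{(b-a)m_{p_j}}}{2^{bm_{p_j}}}$. The two partial blocks at the ends of the interval contain at most $2b$ sites, and we bound each of them crudely by $3$, giving the $3^{2b}$ prefactor. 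Sites outside the interval are pinned to $\tau_j$, and their contribution cancels against the absorption into $\{\tau_j\}^n$ up to $1+O(2^{-n})$.

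\textbf{Step 3: the $\lambda$ factor when $\tau_j=S$.} This is the part we expect to be delicate. For $\tau_j=S$ the interval finishes with $S^{(s)}_{p_j}+1$ sites in state $S$, each carrying a factor $\lambda=2^{f-1}$. We plan to compare this with its initial size $S^{(0)}_{p_j}$. Using $f\le 1-\frac{a}{b}$ gives $\lambda\le 2^{-a/b}$. The expected Haar weight of absorption into $S$ carries a factor $2^{-(n-k)}$ relative to $I$; on the interval's share of sites this is $2^{-\frac{b-a}{b}S^{(0)}}$. Combining these two factors should yield $2^{\frac{a}{b}(S^{(0)}-S^{(s)})}$. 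We would check this by writing the $S$-absorption coefficient explicitly and using $f\le 1-k/n$, so that the $2^{nf}$ terms from $\alpha',\beta'$ are dominated. The $S^{(s)}$-dependence would then be absorbed by monotonicity in the style of Lemma~\ref{lem:decreasing}. Finally, ``large enough $n$'' is used to replace $\frac{1}{1-2^{-2n}}$ and the lower-order terms of the twirl by constants that fit inside $3^{2b}$.
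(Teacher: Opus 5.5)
\textbf{The $\tau_j=S$ case has a gap.} Your overall route is the paper's: extend to infinite depth because $\{\tau_j\}^n$ is absorbing, replace the brickwork by a global Haar twirl of $\bigotimes_x O_x$ with $O_x\in\{I,S,I+S\}$, and count sites. For $\tau_j=I$ this works. The absorption coefficient is $c\alpha\le 2^{-N_S}(3/2)^{N_{I,S}}$, i.e.\ per-site factors $\tfrac12,1,\tfrac32$ for $S$-pinned, $I$-pinned and free sites, which is your Step 2. Note that the lemma's left-hand side carries no $3^k 2^n/3^n$ normalisation; that cancels only in Theorem~\ref{thm:mainresultthesis}, and the factors $\tfrac12,\tfrac32$ you actually use are the un-normalised ones. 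For $\tau_j=S$ the relevant coefficient is $c\beta\le \frac{1}{1-2^{-2n}}\,2^{-N_I}(3/2)^{N_{I,S}}$. Here an $S$-pinned site contributes $1$, not $\tfrac12$. The twirl sees only the counts $N_S,N_I,N_{I,S}$, so you cannot have the $S$-pinned sites outside the interval contribute $1$ while the $S$-pinned logical sites inside contribute $\tfrac12$. Since $N_I=0$ for $A^S_{p_j}$, the Haar part gives only about $3^{(b-a)m_{p_j}}/2^{(b-a)m_{p_j}}$. A factor $2^{-am_{p_j}}$ is therefore missing from your Step 2 for this type.

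\textbf{Step 3 does not supply the missing factor.} Absorption into $S$ is \emph{enhanced} relative to $I$ by roughly $2^{N_S-N_I}$, not suppressed by $2^{-(n-k)}$. The product $2^{-\frac ab S^{(s)}_{p_j}}\cdot 2^{-\frac{b-a}{b}S^{(0)}_{p_j}}$ is not $2^{\frac ab(S^{(0)}_{p_j}-S^{(s)}_{p_j})}$. There are also no $2^{nf}$ terms from $\alpha',\beta'$ in this quantity. The weight $\lambda^{S^{(s)}_{p_j}+1}$ is a constant fixed by $G_U$ and simply factors out of the sum over $\vec\gamma$, so no monotonicity argument in the style of Lemma~\ref{lem:decreasing} is needed.

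\textbf{The missing idea} is to extract the block factor from $\lambda$ itself. By $f\le 1-\frac ab$ and $S^{(0)}_{p_j}\ge bm_{p_j}$,
\[
\lambda^{S^{(s)}_{p_j}}\le 2^{-\frac ab S^{(s)}_{p_j}}=2^{\frac ab(S^{(0)}_{p_j}-S^{(s)}_{p_j})}\,2^{-\frac ab S^{(0)}_{p_j}}\le 2^{\frac ab(S^{(0)}_{p_j}-S^{(s)}_{p_j})}\,2^{-am_{p_j}} .
\]
Combine this with $(3/2)^{N_{I,S}}$ and $(b-a)m_{p_j}\le N_{I,S}\le (b-a)m_{p_j}+2b$ to get the stated bound. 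The leftover $\lambda/(1-2^{-2n})\le 1$ for large $n$, or you can absorb it into the slack of $3^{2b}$ as you suggest.
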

\begin{proof}
    Since the set of trajectories annihilating in $s$ or fewer time steps is a subset of the set of domain wall trajectories that annihilate at any finite number of time steps, we see that
    \begin{equation}
        \sum_{\substack{\vec \gamma\in \{I,S\}^{n\times (s+1)} \\ \vec\gamma^{(0)}\in A^{\tau_j}_{p_j} \\ \exists k\leq s\; :\; \vec \gamma^{(k)}= \{\tau_j\}^n}}\lambda^{(S_{p_j}^{(s)}+1)\1_{\{\tau_j=S\}}}\prod_{t=1}^sM^{(t)}_{\vec \gamma^{(t)}\vec \gamma^{(t-1)}},
    \end{equation}
    is increasing in circuit size $s$. Using this gives:
    \begin{equation}
        \sum_{\substack{\vec \gamma\in \{I,S\}^{n\times (s+1)} \\ \vec\gamma^{(0)}\in A^{\tau_j}_{p_j} \\ \exists k\leq s\; :\; \vec \gamma^{(k)}= \{\tau_j\}^n}}\lambda^{(S_{p_j}^{(s)}+1)\1_{\{\tau_j=S\}}}\prod_{t=1}^sM^{(t)}_{\vec \gamma^{(t)}\vec \gamma^{(t-1)}}\leq \sum_{\substack{\vec \gamma\in \{I,S\}^{n\times (s+1)} \\ \vec\gamma^{(0)}\in A^{\tau_j}_{p_j} \\ \exists k\in \mathbb N\; :\; \vec \gamma^{(k)}= \{\tau_j\}^n}}\lambda^{(S_{p_j}^{(s)}+1)\1_{\{\tau_j=S\}}}\prod_{t=1}^sM^{(t)}_{\vec \gamma^{(t)}\vec \gamma^{(t-1)}}\label{eq:secondterm4}.
    \end{equation}
    To determine the RHS of the inequality in Equation~\eqref{eq:secondterm4}, consider the partition function: 
    \begin{equation}Q^{\tau_j,(s)}_{p_j}= \sum_{\substack{\vec \gamma\in \{I,S\}^{n\times (s+1)} \\ \vec\gamma^{(0)}\in A^{\tau_j}_{p_j}}}\lambda^{(S_{p_j}^{(s)}+1)\1_{\{\tau_j=S\}}}\prod_{t=1}^sM^{(t)}_{\vec \gamma^{(t)}\vec \gamma^{(t-1)}}.\end{equation}
    If we only consider the trajectories that converge to $\{\tau_j\}^n$, then this quantity is equal to the RHS of the inequality in Equation~\eqref{eq:secondterm4}. Following the proof of Lemma~\ref{lem:partition} and the derivation of the partition function in \cite{Dalzell_2022}, we see that
    \begin{equation}Q_{p_j}^{\tau_j,(s)}=c^{\tau_j}_{p_j}\lambda^{(S_{p_j}^{(s)}+1)\1_{\{\tau_j=S\}}}\tr\left(\E_{U_S\sim \chi_{\brick}^{(D)}}\left[U_S^{\otimes 2}\tr_R\left((\rho_{SR}^{\tau_j,p_j})^{\otimes 2}S_R\right)(U_S^\dagger)^{\otimes 2}\right]|1^n\rangle\langle 1^n|^{\otimes 2}\right), \end{equation}
    where $\rho_{SR}^{\tau_j,p_j}=\otimes_{i=1}^n\rho_{SR\{i\}}^{\tau_j,p_j}$ with
    \begin{equation}
        \rho_{SR\{i\}}^{\tau_j,p_j}=\begin{cases}
            |\hat \phi\rangle\langle \hat \phi|_{S_iR_i}, &\text{if $A^{\tau_j}_{p_j,i}=\{S\}$} \\
            \frac12 I_{S_i}\otimes |0\rangle\langle 0|_{R_i}, &\text{if $A^{\tau_j}_{p_j,i}=\{I\}$}\\
            |00\rangle\langle 00|_{S_iR_i}, &\text{if $A^{\tau_j}_{p_j,i}=\{I,S\}$}.
        \end{cases}
    \end{equation}
    Here, we used that $A^{\tau_j}_{p_j}$ is defined as a product over $n$ factors, i.e., $A^{\tau_j}_{p_j}=\prod_{i=1}^nA^{\tau_j}_{p_j,i}$. Furthermore, we define $c_{p_j}^{\tau_j}=4^{N_S+N_I}\cdot 6^{N_{I,S}}=2^n\cdot 2^{N_S+N_I}\cdot 3^{N_{I,S}}$, where $N_S$ is the number of $\{S\}$ factors in $A^{\tau_j}_{p_j}$ and $N_{I}$, $N_{I,S}$ are defined similarly.

    Now, we determine $Q_{p_j}^{\tau_j,(\infty)}$ restricted such that all trajectories converge to $\{\tau_j\}^n$. In that case, we have upper bounded the RHS of Equation~\eqref{eq:secondterm4}. In infinite depth, we again have that $\chi_{\brick}$ becomes exactly the Haar random distribution. Thus, we have that $\E_{U_S\sim\Haar(2^n)}\left[U_S^{\otimes 2}\tr_R\left((\rho_{SR}^{\tau_j,p_j})^{\otimes 2}S_R\right)(U_S^\dagger)^{\otimes 2}\right]=\alpha I+\beta S$. However, since we are only interested in terms converging to $\{\tau_j\}^n$, we shall only take one of these terms into account. Computing $\alpha$ and $\beta$ gives
    \begin{equation}
        \alpha= \frac{2^{-N_S}-2^{-N_I-n}}{2^{2n}-1}, \qquad \beta = \frac{2^{-N_I}-2^{-N_S-n}}{2^{2n}-1}.
    \end{equation}
    Thus, if $\tau_j=I$, we obtain that 
    \begin{equation}
        \sum_{\substack{\vec \gamma\in \{I,S\}^{n\times (s+1)} \\ \vec\gamma^{(0)}\in A^{I}_{p_j} \\ \exists k\in \mathbb N\; :\; \vec \gamma^{(k)}= \{I\}^n}}\prod_{t=1}^sM^{(t)}_{\vec \gamma^{(t)}\vec \gamma^{(t-1)}}\leq c^{I}_{p_j}\alpha\leq 2^{-2n}\cdot 2^{-N_S}c_{p_j}^{\tau_j}=2^{-n+N_I}\cdot 3^{N_{I,S}}.
    \end{equation}
    Considering the definition of $A_{(x,y)}^I$ we have that when $B(x)\neq B(y)$ that $N_{I,S}\leq  (b-a)m_{p_j}+2b$ as well as $n-N_I=N_{S}+N_{I,S}\geq b\cdot m_{p_j} $. Using this then gives for $\tau_j=I$ the following inequality
    \begin{equation}
        \sum_{\substack{\vec \gamma\in \{I,S\}^{n\times (s+1)} \\ \vec\gamma^{(0)}\in A^{I}_{p_j} \\ \exists k\in \mathbb N\; :\; \vec \gamma^{(k)}= \{I\}^n}}\prod_{t=1}^sM^{(t)}_{\vec \gamma^{(t)}\vec \gamma^{(t-1)}}\leq 2^{-n+N_I}\cdot 3^{N_{I,S}}\leq 3^{2b}\cdot \frac{3^{(b-a)m_{p_j}}}{2^{bm_{p_j}}}.
    \end{equation}
    Under the mild assumption that there is more than one block, we have that when $p_j=(x,y)$ and $B(x)=B(y)$ that $A_{(x,y)}\subseteq A_{(x,y+b)}$ and the upper bound follows from the above. 
    
    When $\tau_j=S$ we obtain the following
    \begin{align}
        \sum_{\substack{\vec \gamma\in \{I,S\}^{n\times (s+1)} \\ \vec\gamma^{(0)}\in A^{S}_{p_j} \\ \exists k\in \mathbb N\; :\; \vec \gamma^{(k)}= \{S\}^n}}\lambda^{S_{p_j}^{(s)}+1}\prod_{t=1}^sM^{(t)}_{\vec \gamma^{(t)}\vec \gamma^{(t-1)}}&\leq c^{S}_{p_j}\beta \\[-7ex]
        &\leq  \frac{\lambda^{S_{p_j}^{(s)}+1}}{1-2^{-2n}}\cdot 2^{N_S-n}\cdot 3^{N_{I,S}}\\
        &\leq \lambda^{S_{p_j}^{(s)}}\cdot 2^{N_S-n}\cdot 3^{N_{I,S}}\\
        &\leq 2^{(f-(1-\frac{a}{b}))S_{p_j}^{(s)}}\cdot 2^{-\frac{a}{b}S_{p_j}^{(s)}}\cdot 2^{-(b-a)m_{p_j}}\cdot 3^{(b-a)m_{p_j}+2b}\\
        &\leq 2^{-\frac{a}{b}S_{p_j}^{(s)}}\cdot 2^{-(b-a)m_{p_j}}\cdot 3^{(b-a)m_{p_j}+2b}\\
        &= 2^{\frac{a}{b}S_{p_j}^{(0)}-\frac{a}{b}S_{p_j}^{(s)}}\cdot 2^{-(b-a)m_{p_j}-\frac{a}{b}S_{p_j}^{(0)}}\cdot 3^{(b-a)m_{p_j}+2b}\\
        &\leq 3^{2b}\cdot 2^{\frac{a}{b}S_{p_j}^{(0)}-\frac{a}{b}S_{p_j}^{(s)}}\cdot \frac{3^{(b-a)m_{p_j}}}{2^{bm_{p_j}}}.
    \end{align}
    In the third inequality we used that $f\leq 1-\frac{a}{b}<1$, meaning that for large enough $n$ that $\frac{\lambda}{1-2^{-2n}}\leq 1$. The fourth inequality is obtained by applying $n-N_{S}=N_{I,S}\geq (b-a)m_{p_j}$ and $N_{I,S}\leq (b-a)m_{p_j}+2b$. The fifth inequality is derived by using the assumption of the statement $f\leq 1-\frac{a}{b}$. The last inequality follows by using that $S_{p_j}^{(0)}\geq bm_{p_j}$. Lastly, the case $B(x)=B(y)$ and $\tau_j=S$ follows analogously since $A^S_{(x,y)}\subseteq A^S_{(x,y+b)}$ under the mild assumption that there is more than 1 block.
\end{proof}

With Lemma~\ref{lemma:ultimate} proven, we are ready to prove the main upper bound on the partition function.

\begin{theorem}\label{thm:mainresultthesis}
     Let $D=\frac{2s}{n}$ denote the depth of the random circuit, where $s$ is the number of 2-qubit gates in the brickwork architecture and $n$ the number of qubits. If $f\leq 1-\frac{a}{b}=1-R$ with $R=\frac{k}{n}$ and letting $C=10\cdot 3^{2b}2^b$ and $r=\frac{2(2^R+2^{-R})}{5}$, we have:
     \begin{equation}Z^{(s)}\leq Z^{(\infty)}\exp\left(nC\cdot r^D\right).\end{equation}
\end{theorem}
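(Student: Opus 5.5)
We combine the decomposition in Equation~\eqref{eq:secondterm} with the bound in Equation~\eqref{eq:secondterm3} and Lemma~\ref{lemma:ultimate}. The first term of Equation~\eqref{eq:secondterm}, the fully annihilating trajectories, is bounded by $Z^{(\infty)}$ by Lemma~\ref{lem:infinitedepth}. For the second term we insert the per-interval bound of Lemma~\ref{lemma:ultimate} into the product over $j$. The factors $3^{(b-a)m_{p_j}}/2^{bm_{p_j}}$ multiply over the $2k_0$ intervals to give $3^{(b-a)\sum_j m_{p_j}}2^{-b\sum_j m_{p_j}}$. Since the intervals are disjoint and together cover all $m$ blocks except at most one partial block per domain wall, we have $\sum_j m_{p_j}\ge m-2k_0$. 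This product therefore combines with the prefactor $3^k 2^n/3^n = (2^b/3^{b-a})^m$ to give at most $(2^b/3^{b-a})^{2k_0}\le 2^{2bk_0}$. The constant factors contribute $3^{2b\cdot 2k_0}$, and summing over $\tau_1$ contributes a factor $2$. The type-$S$ intervals contribute $\prod_{j:\tau_j=S}2^{R(S^{(0)}_{p_j}-S^{(s)}_{p_j})}$. The net result is a bound on the second term of the form $\sum_{k_0\ge1}(3^{2b}2^{b})^{2k_0}\sum_{G_U\in\mathcal G_{U,2k_0}}W(G_U)\prod_{j \text{ of type }S}2^{R(S^{(0)}_{p_j}-S^{(s)}_{p_j})}$, up to the factor $2$.

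Next we express the interval-size factor through the motion of the surviving walls. Each change in the size of an interval comes from one of its two boundary walls moving by one step. A wall that moves outward from an $S$ interval shrinks the neighbouring $I$ interval and enlarges the $S$ interval by one, and conversely. So every move of a surviving wall carries a factor $\frac25$ from $W(G_U)$ together with a factor $2^{\pm R}$, depending on whether it enlarges or shrinks the $S$ side. If $2^{R}$ is assigned to one direction of motion and $2^{-R}$ to the other, the exponent telescopes exactly to $R(S^{(0)}-S^{(s)})$ summed over the $S$ intervals. The walls are non-interacting in $G_U$; we relax the no-collision constraint to obtain an upper bound. The sum over trajectories of a single wall then factorizes over the gates that touch it. At each such gate the wall either stays (weight $1$, which is forbidden once the gate acts on it) or moves left or right (weight $\frac25\cdot 2^{\pm R}$). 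A wall is hit by a gate in every layer, because in the brickwork a wall at edge $e$ is always straddled by some gate each layer. The per-wall weight is therefore at most $\left(\frac25(2^R+2^{-R})\right)^{D}=r^{D}$ up to an $O(1)$ boundary factor that is absorbed into $C$.

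Finally we count the initial positions. There are at most $\binom{n}{2k_0}$ choices of starting walls, so the second term is bounded by $\sum_{k_0\ge1}\binom{n}{2k_0}(C' r^{D})^{2k_0}\le \exp(nC'r^D)-1$, with $C'$ of order $3^{2b}2^b$. Together with the first term this gives $Z^{(s)}\le Z^{(\infty)}+e^{nCr^D}-1$. Since $Z^{(\infty)}\ge 1$ for $f\le 1-R$ (from Lemma~\ref{lem:infinitedepth}), we get $Z^{(\infty)}+e^{x}-1\le Z^{(\infty)}e^{x}$. The factor $10$ in $C$ is there to absorb the sum over $\tau_1$ and the boundary constants.

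The main obstacle is the second step. The ordered, one-gate-at-a-time dynamics has to be shown to force every surviving wall to move at least $D$ times; a wall may instead be resolved on a gate whose left site equals its position. The $2^{\pm R}$ bookkeeping also has to telescope correctly across neighbouring intervals of alternating type. If telescoping per wall fails, the fallback is to bound $2^{R(S^{(0)}-S^{(s)})}\le\prod_{\text{moves}}2^{R}$ crudely. That gives $r$ with $2\cdot 2^R$ in place of $2^R+2^{-R}$, so the exact $r$ of the statement needs the signed accounting.
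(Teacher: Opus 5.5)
Your proposal is correct and is essentially the paper's proof. It uses the same split into the annihilating part (bounded by $Z^{(\infty)}$) and the surviving part, the same insertion of Lemma~\ref{lemma:ultimate} with the block count, the same relaxation to non-interacting walks that may cross, and the same $\binom{n}{2k_0}$ count; the only cosmetic difference is that the paper weights the two walls bounding each $S$-interval jointly by $\frac{4}{25}(2^{2R}+2+2^{-2R})=r^2$ per layer, which is exactly the square of your per-wall factor $\frac25(2^R+2^{-R})$.

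The obstacle you flag is resolved by parity: all gates of a layer have left endpoints of one parity, so a surviving wall is moved exactly once in every layer after the first, giving at least $D-1$ forced moves, and the missing move costs at most $r^{-1}\le\frac54$ per wall. In the block step you also need $\sum_j m_{p_j}\le m$, because when $3^{b-a}>2^b$ the lower bound $\sum_j m_{p_j}\ge m-2k_0$ alone points the wrong way; the final bound $2^{2bk_0}$ is still correct.
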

\begin{proof}
The main idea of this proof is to combine the upper bound in Equation~\eqref{eq:secondterm3} with Lemma~\ref{lemma:ultimate}. After this, we only have to upper bound the contribution of the surviving domain walls. To this end, we shall also assume that they are non-interacting as that will allow for more trajectories than when they are restricting each other.
Combining Equations~\eqref{eq:secondterm} and \eqref{eq:secondterm3} we obtain
{\allowdisplaybreaks
\begin{align}
    Z^{(s)}&\leq3^k\cdot\frac{2^n}{3^n}\sum_{G_0\in \mathcal G_{\mathcal B}}\lambda^{|\Gamma(G_0)^{(s)}|}W(G_0) \\
    &\space{3em} + 3^k\cdot\frac{2^n}{3^n}\sum_{k_0=1}^{n/4}\sum_{\substack{G_U\in \mathcal G_{U,2k_0}\\ \exists \gamma\in \mathcal I:\DW(\gamma)=g^{(0)}}}W(G_U)\sum_{\tau_1\in \{I,S\}}\prod_{j=1}^{2k_0}\sum_{\substack{\vec \gamma\in \{I,S\}^{n\times (s+1)} \\ \vec\gamma^{(0)}\in A^{\tau_j}_{p_j} \\ \exists k\leq s\; :\; \vec \gamma^{(k)}= \{\tau_j\}^n}}\lambda^{(S_{p_j}^{(s)}+1)\1_{\{\tau_j=S\}}}\prod_{t=1}^sM^{(t)}_{\vec \gamma^{(t)}\vec \gamma^{(t-1)}}\nonumber\\
    &\leq Z^{(\infty)} + 3^k\cdot\frac{2^n}{3^n}\sum_{k_0=1}^{n/4}\sum_{\substack{G_U\in \mathcal G_{U,2k_0}\\ \exists \gamma\in \mathcal I:\DW(\gamma)=g^{(0)}}}W(G_U)\sum_{\tau_1\in \{I,S\}}\prod_{j=1}^{2k_0}\sum_{\substack{\vec \gamma\in \{I,S\}^{n\times (s+1)} \\ \vec\gamma^{(0)}\in A^{\tau_j}_{p_j} \\ \exists k\leq s\; :\; \vec \gamma^{(k)}= \{\tau_j\}^n}}\lambda^{(S_{p_j}^{(s)}+1)\1_{\{\tau_j=S\}}}\prod_{t=1}^sM^{(t)}_{\vec \gamma^{(t)}\vec \gamma^{(t-1)}}.
\end{align}}
Applying Lemma~\ref{lemma:ultimate} gives:
{\allowdisplaybreaks
\begin{align}
    \nonumber Z^{(s)}&\leq Z^{(\infty)}+3^k\cdot\frac{2^n}{3^n}\sum_{k_0=1}^{n/4}\sum_{\substack{G_U\in \mathcal G_{U,2k_0}\\ \exists \gamma\in \mathcal I:\DW(\gamma)=g^{(0)}}}W(G_U)\sum_{\tau_1\in \{I,S\}}\prod_{j=1}^{2k_0}\sum_{\substack{\vec \gamma\in \{I,S\}^{n\times (s+1)} \\ \vec\gamma^{(0)}\in A^{\tau_j}_{p_j} \\ \exists k\leq s\; :\; \vec \gamma^{(k)}= \{\tau_j\}^n}}\lambda^{(S_{p_j}^{(s)}+1)\1_{\{\tau_j=S\}}}\prod_{t=1}^sM^{(t)}_{\vec \gamma^{(t)}\vec \gamma^{(t-1)}}\\
    &\leq Z^{(\infty)}+3^k\cdot\frac{2^n}{3^n}\sum_{k_0=1}^{n/4}\sum_{\substack{G_U\in \mathcal G_{U,2k_0}\\ \exists \gamma\in \mathcal I:\DW(\gamma)=g^{(0)}}}W(G_U)\sum_{\tau_1\in \{I,S\}}\prod_{j=1}^{2k_0}3^{2b}\cdot 2^{\left(\frac{a}{b}S_{p_j}^{(0)}-\frac{a}{b}S_{p_j}^{(s)}\right)\1_{\{\tau_j=S\}}}\cdot \frac{3^{(b-a)m_{p_j}}}{2^{bm_{p_j}}}. \label{eq:secondterm5}
\end{align}}Simplifying the product in Equation~\eqref{eq:secondterm5}, by using that $m-k_0\geq \sum_{j=1}^{2k_0}m_{p_j}\geq m-2k_0$, gives:
\begin{equation}
    \prod_{j=1}^{2k_0}  \frac{3^{(b-a)m_{p_j}}}{2^{bm_{p_j}}}\leq \frac{3^{(b-a)(m-k_0)}}{2^{b(m-2k_0)}}=\frac{3^{-(b-a)k_0}}{2^{-2bk_0}}\cdot\frac{3^{bm}}{2^{bm}}\cdot 3^{-am}\leq 2^{2bk_0}\cdot \frac{3^n}{2^n}\cdot 3^{-k}.
\end{equation}
Plugging this in gives:
\begin{align}
    Z^{(s)}&\leq Z^{(\infty)}+\sum_{k_0=1}^{n/4}3^{4bk_0}2^{2bk_0}\sum_{\substack{G_U\in \mathcal G_{U,2k_0}\\ \exists \gamma\in \mathcal I:\DW(\gamma)=g^{(0)}}}W(G_U)\sum_{\tau_1\in \{I,S\}}\prod_{j=1}^{2k_0}2^{\left(\frac{a}{b}S_{p_j}^{(0)}-\frac{a}{b}S_{p_j}^{(s)}\right)\1_{\{\tau_j=S\}}}\\
    &\leq Z^{(\infty)}+\sum_{k_0=1}^{n/4}3^{4bk_0}2^{2bk_0}\sum_{\substack{G_U\in \mathcal G_{U,2k_0}}}W(G_U)\sum_{\tau_1\in \{I,S\}}\prod_{j=1}^{2k_0}2^{\left(\frac{a}{b}S_{p_j}^{(0)}-\frac{a}{b}S_{p_j}^{(s)}\right)\1_{\{\tau_j=S\}}},
\end{align}
where in the last inequality, we omitted the constraint that for $G_U\in \mathcal G_{U,2k_0}$ that there must be some $\gamma\in \mathcal I$ such that $\DW(\gamma)=g^{(0)}$. This gives an upper bound since we sum over more starting states and will make the counting later on easier.

Each $S$-interval must have exactly 2 surviving boundary domain walls and these pairs partition the surviving domain walls. Partitioning the surviving domain walls in these pairs, we have that $G_U=\bigsqcup_{j=1}^{2k_0}G_{p_j}$ where $G_{p_j}\in \mathcal G_{U,2}$. Let $p_j=(x_{j0},y_{j0})$ and assume w.l.o.g. that $x_{j0}<y_{j0}$. In the first layer, they might not move. After the first layer though, they will both move either left or right at every layer. With this in mind, we see that 
\begin{equation}
    W(G_{p_j})\leq \left(\frac{2}{5}\right)^{2D-2}.
\end{equation}
So, we shall focus on $2^{\frac{a}{b}S_{p_j}^{(0)}-\frac{a}{b}S_{p_j}^{(s)}}$. Letting $G_{p_j}=(g_{p_j}^{(t)})_{t\in \{0,...,D\}}:=(x_{jt},y_{jt})_{t\in \{0,...,D\}}$ (note that this is not exactly a domain wall trajectory, but it can be identified to one in a bijective manner) denote the evolution of these surviving domain walls, we see that
\begin{equation}2^{\frac{a}{b}S_{p_j}^{(0)}-\frac{a}{b}S_{p_j}^{(s)}}\leq 9\cdot 2^{2\frac{a}{b}}\prod_{t=2}^s2^{\Delta_t(G_{p_j})},\end{equation}
where, in the worst case, at $t=0$ to $t=1$, both domain walls move inwards and there are 9 possible movements on the first time step , giving the factor of $9\cdot 2^{2\frac{a}{b}}$ and for $t>1$ we define:
\begin{equation}\Delta_t(G_{p_j})=\begin{cases} 2^{-2\frac{a}{b}}, &\text{if $(x_{jt},y_{jt})=(x_{j(t-1)}-1,y_{j(t-1)}+1)$}\\
1, &\text{if $(x_{jt},y_{jt})=(x_{j(t-1)}+1,y_{j(t-1)}+1)$ or $(x_{jt},y_{jt})=(x_{j(t-1)}-1,y_{j(t-1)}-1)$}\\
2^{2\frac{a}{b}}, &\text{if $(x_{jt},y_{jt})=(x_{j(t-1)}+1,y_{j(t-1)}-1)$}.
\end{cases}\end{equation}
Now, we are interested in the sum of this over all allowed walks, i.e., for $p_j=(x_{j0},y_{j0})$ (fixed), we want to consider:
\begin{equation}\sum_{\substack{G_{p_j}\in \mathcal G_{U,2}\\g_{p_j}^{(0)}=\{x_{j0},y_{j0}\}}}2^{\frac{a}{b}S_{p_j}^{(0)}-\frac{a}{b}S_{p_j}^{(s)}}.\end{equation}
However, if we let $\mathcal W$ be the set of all walks, where after the first layer both domain walls can go either left or right, are allowed to \emph{cross each other}, and use $w=(x_{jt},y_{jt})_{t\in \{0,...,D\}}$ to denote such a walk we get:
\begin{equation}\sum_{\substack{G_{p_j}\in \mathcal G_{U,2}\\g_{p_j}^{(0)}=\{x_{j0},y_{j0}\}}}2^{\frac{a}{b}S_p^0-\frac{a}{b}S_p^f}=\sum_{\substack{G_{p_j}\in \mathcal G_{U,2}\\g_{p_j}^{(0)}=\{x_{j0},y_{j0}\}}}9\cdot 2^{2\frac{a}{b}}\prod_{t=2}^s2^{\Delta_t(G_{p_j})}\leq 9\cdot 2^{2\frac{a}{b}}\sum_{w\in \mathcal W}\prod_{t=2}^s2^{\Delta_t(w)},\end{equation}
as $\mathcal G_{U,2}$ can be interpreted as a subset of $\mathcal W$. Hence, using this upper bound and summing over all possible paths gives:
\begin{equation}\sum_{\substack{G_{p_j}\in \mathcal G_{U,2}\\g_{p_j}^{(0)}=\{x_{j0},y_{j0}\}}}2^{\frac{a}{b}S_{p_j}^{(0)}-\frac{a}{b}S_{p_j}^{(s)}}\leq 9\cdot 2^{2\frac{a}{b}}\left(2^{2\frac{a}{b}}+2+2^{-2\frac{a}{b}}\right)^{D-1}.\end{equation}
Substituting $R=\frac{a}{b}$ and
combining this with the previous factor of $\left(\frac{2}{5}\right)^{2D-2}$, we obtain:
\begin{align}\sum_{\substack{G_{p_j}\in \mathcal G_{U,2}\\g_{p_j}^{(0)}=\{x_{j0},y_{j0}\}}}W(G_{p_j})2^{\frac{a}{b}S_{p_j}^{(0)}-\frac{a}{b}S_{p_j}^{(s)}}&\leq 9\cdot 2^{2R}\cdot \frac{25}{9}\cdot \frac{1}{2^{2R}+2+2^{-2R}}\left(2^{2R}+2+2^{-2R}\right)^{D}\cdot\left(\frac{4}{25}\right)^D\\
&\leq 25 \left(\frac{2\sqrt{2^{2R}+2+2^{-2R}}}{5}\right)^{2D}=5^2\left(\frac{2(2^R+2^{-R})}{5}\right)^{2D}.\label{eq:UBwalk}
\end{align}
Letting $r=\frac{2(2^R+2^{-R})}{5}$ we note that for $R<1$ that $r<1$.
The number of possible partitions given $k_0$ pairs of surviving domain walls is upper bounded by $\binom{n}{2k_0}$. Combining all of this, we have:
\begin{align}
    Z^{(s)}&\leq  Z^{(\infty)}+\sum_{k_0=1}^{n/4}3^{4bk_0}2^{2bk_0}\sum_{\substack{G_U\in \mathcal G_{U,2k_0}}}W(G_U)\sum_{\tau_1\in \{I,S\}}\prod_{j=1}^{2k_0}2^{\left(\frac{a}{b}S_{p_j}^{(0)}-\frac{a}{b}S_{p_j}^{(s)}\right)\1_{\{\tau_j=S\}}}\\
    &\leq Z^{(\infty)}+2\sum_{k_0=1}^{n/4}\binom{n}{2k_0}(5\cdot 3^{2b}2^{b})^{2k_0}r^{2Dk_0} \label{eq:factor2}\\
    &\leq Z^{(\infty)}+\sum_{k_0=1}^{n/4}\binom{n}{2k_0}(10\cdot 3^{2b}2^{b})^{2k_0}r^{2Dk_0}, \label{eq:QECsecondtermbound}
\end{align}
where $C=10\cdot 3^{2b}2^{b}$. The factor of $2$ in Equation~\eqref{eq:factor2} comes from summing over both types $\tau_1\in \{I,S\}$, but noting that the upper bound derived in Equation~\eqref{eq:UBwalk} is independent of the set of types. Using that $Z^{(\infty)}\geq 1$, we obtain that
\begin{equation}Z^{(s)}\leq Z^{(\infty)} \sum_{k_0=0}^{n/4}\binom{n}{2k_0}C^{2k_0}r^{2Dk_0}\leq Z^{(\infty)}\left(1+C\cdot r^D\right)^n \leq Z^{(\infty)}\exp\left(nC\cdot r^D\right),\end{equation}
as desired.
\end{proof}

Now, it becomes clear when $f<1-\frac{a}{b}$ that Theorem~\ref{thm:informal} can be obtained by applying Theorem~\ref{thm:mainresultthesis} to Equation~\eqref{eq:expectationterm} and noting that $Z^{(\infty})\exp(\cdot) -1 = (Z^{(\infty)}-1)+Z^{\infty}(\exp(\cdot)-1)$. Here, by Lemma~\ref{lem:infinitedepth}, $ (Z^{(\infty)}-1)$ goes to zero exponentially fast. Furthermore, note that $Z^{\infty}(\exp(nCr^{D(n)})-1)\in \mathcal O(nr^{D(n)})$ from which the bound in Theorem~\ref{thm:informal} follows by choosing $D(n)=c\log(n)$ with $c>\frac{1}{|\log r|}$.

\section{Exact Error Correction}\label{sec:EC}
In this section we cover the exact error correcting properties of the random Clifford brickwork circuit. We will denote the Hermitian single-qubit Pauli operators as $\sigma_\nu$ with $\nu\in \{I,X,Y,Z\}$. For an $n$ qubit Pauli operator $\sigma_\mu = \bigotimes_{i=1}^n \sigma_{\mu_i}$ we will denote by $\omega(\mu)$ the number of non-identity Pauli operators in $\sigma_\mu$. With this we follow the definition of an exact error correcting code in \cite{Brown}:

\begin{definition}\label{def:EC}
    A \emph{unitary encoder} encoding $k$-qubits into $n$-qubits is a unitary $U\in \U(2^n)$ that sends a $k$ qubit logical state $\ket{\psi}$ to $|\psi\rangle \mapsto U|\psi\rangle \otimes |0^{n-k}\rangle$. Letting $\{|x\rangle\in \mathbb C^{2^k}\mid x\in \{0,1\}^k\}$ denote the basis on the space of $k$-qubits, we define the subspace spanned by $|\bar x\rangle =U|x\rangle_L|0^{n-k}\rangle_{S\setminus L}$ of $\mathbb C^{2^n}$ to be the \emph{code space}, denoted by $\mathcal C$. We say that a code has \emph{distance} at least $d+1$ if for any $x,y\in \{0,1\}^k$ and any $\mu\in \{I,X,Y,Z\}^n$ with $1\leq \omega(\mu)\leq d$ we have:
    \begin{equation}\label{eq:EC}\langle \bar x |\sigma_{\mu}|\bar y\rangle=C_{\mu}\delta_{xy},\end{equation}
    for some $C_{\mu}\in \mathbb R$. If Equation~\eqref{eq:EC} is satisfied, we call $U$ a $[[n,k,d+1]]$ quantum error correcting code.
\end{definition}

The next statement is a useful \emph{equivalent} error correction criterion when the unitary encoder is an $n$-qubit Clifford gate. The set of $n$-qubit Clifford gates is denoted by $\mathcal C_n$.

\begin{lemma}[Proposition 2.1 in \cite{Brown}]\label{lemma:EC}
            Let $U\in \C_n$. Then, $U$ is an $[[n,k,d+1]]$ error correcting code if and only if for any $\nu_A\in \{I,X,Y,Z\}^k-\{0\}$ and $\nu_B\in \{I,Z\}^{n-k}$ and all $\mu\in \{I,X,Y,Z\}^n$ satisfying $1\leq \omega(\mu)\leq d$ we have:
    \begin{equation}\tr\left(\sigma_\mu U((\sigma_{\nu_A})_L\otimes (\sigma_{\nu_B})_{S\setminus L})U^\dagger\right)=0.\end{equation}
\end{lemma}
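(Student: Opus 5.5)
The plan is to reduce both sides of the equivalence to a statement about the single Pauli operator $U^\dagger\sigma_\mu U$, using that $U$ is Clifford. Conjugation by $U$ maps Paulis to Paulis up to phase. So for each $\mu$ there are a phase $\eta_\mu\in\{\pm1,\pm i\}$ and a string $\pi=(\pi_A,\pi_B)\in\{I,X,Y,Z\}^k\times\{I,X,Y,Z\}^{n-k}$ with
\begin{equation}
U^\dagger\sigma_\mu U=\eta_\mu\,\sigma_{\pi_A}\otimes\sigma_{\pi_B}.
\end{equation}
Since $\sigma_\mu$ is Hermitian, $\eta_\mu=\pm1$. Both conditions in Lemma~\ref{lemma:EC} will be rewritten as conditions on $\pi$.

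\textbf{Trace condition.} By cyclicity, $\tr(\sigma_\mu U(\sigma_{\nu_A}\otimes\sigma_{\nu_B})U^\dagger)=\eta_\mu\tr\big((\sigma_{\pi_A}\otimes\sigma_{\pi_B})(\sigma_{\nu_A}\otimes\sigma_{\nu_B})\big)$. By orthogonality of Paulis under the Hilbert--Schmidt inner product, this equals $\pm2^n\delta_{\pi_A\nu_A}\delta_{\pi_B\nu_B}$. Hence the trace condition holds for all admissible $\nu_A,\nu_B$ if and only if, for every $\mu$ with $1\le\omega(\mu)\le d$, it is \emph{not} the case that both $\pi_A\neq I^k$ and $\pi_B\in\{I,Z\}^{n-k}$.

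\textbf{Definition~\ref{def:EC}.} Compute
\begin{equation}
\langle\bar x|\sigma_\mu|\bar y\rangle=\eta_\mu\,\langle x|\sigma_{\pi_A}|y\rangle\,\langle0^{n-k}|\sigma_{\pi_B}|0^{n-k}\rangle.
\end{equation}
The second factor is a product of single-qubit terms $\langle0|\sigma|0\rangle$. It equals $1$ if $\pi_B\in\{I,Z\}^{n-k}$ and $0$ otherwise.
\begin{itemize}
\item If $\pi_B\notin\{I,Z\}^{n-k}$, then $\langle\bar x|\sigma_\mu|\bar y\rangle=0$ for all $x,y$, so Equation~\eqref{eq:EC} holds with $C_\mu=0$.
\item If $\pi_B\in\{I,Z\}^{n-k}$, then $\langle\bar x|\sigma_\mu|\bar y\rangle=\eta_\mu\langle x|\sigma_{\pi_A}|y\rangle$. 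Equation~\eqref{eq:EC} requires the $k$-qubit operator $\sigma_{\pi_A}$ to be $C_\mu$ times the identity. A Pauli string is proportional to the identity only when $\pi_A=I^k$; otherwise it is traceless and nonzero, hence not a multiple of $I$. In that case $C_\mu=\eta_\mu$ is real.
\end{itemize}
So $U$ has distance at least $d+1$ if and only if no $\mu$ with $1\le\omega(\mu)\le d$ has $\pi_A\neq I^k$ and $\pi_B\in\{I,Z\}^{n-k}$. This is exactly the characterisation obtained from the trace condition, which proves the equivalence.

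The only step needing care is the "not proportional to identity" claim for $\pi_A\neq I^k$. It follows from tracelessness of nontrivial Pauli strings: $\sigma_{\pi_A}=cI$ would force $2^kc=\tr(\sigma_{\pi_A})=0$, contradicting $\sigma_{\pi_A}\neq0$. Everything else is bookkeeping with Pauli orthogonality and the fact that $\langle0|X|0\rangle=\langle0|Y|0\rangle=0$.
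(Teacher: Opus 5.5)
Your proof is correct and complete. The paper gives no proof of this lemma; it imports it as Proposition~2.1 of Brown \& Fawzi~\cite{Brown}. Your argument is therefore a self-contained verification, not a reconstruction of an in-paper proof.

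Your route is the natural elementary one: you reduce everything to the single signed Pauli $U^\dagger\sigma_\mu U=\eta_\mu\,\sigma_{\pi_A}\otimes\sigma_{\pi_B}$, then compare code-space matrix elements with Hilbert--Schmidt overlaps.

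The usual alternative uses the stabilizer formalism. There $U$ defines the stabilizer group generated by $UZ_jU^\dagger$ over the ancilla positions $j$. The operators $U(\sigma_{\nu_A}\otimes\sigma_{\nu_B})U^\dagger$ with $\nu_A\neq I^k$ and $\nu_B\in\{I,Z\}^{n-k}$ are, up to phase, exactly the normalizer elements outside the stabilizer. The trace condition then says that no Pauli of weight at most $d$ is a nontrivial logical operator, which is the familiar intuition.

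Your computation has one concrete advantage: it shows exactly where the Clifford hypothesis enters. For a general $U$, write $U^\dagger\sigma_\mu U=\sum_\nu c_\nu\sigma_\nu$. The condition of Definition~\ref{def:EC} only forces $\sum_{\nu_B\in\{I,Z\}^{n-k}}c_{\nu_A\nu_B}=0$ for each $\nu_A\neq I^k$. The trace condition instead requires every $c_{\nu_A\nu_B}$ to vanish individually. So for non-Clifford encoders the trace condition is sufficient but not necessary. The two coincide only because a Clifford conjugate of a Pauli has a single nonzero Pauli coefficient.
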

Consider $U\sim \chi_{\brick}^{(D)}$, then the probability that $U$ fails to satisfy the criterion in \eqref{eq:EC}, call it $\mathbb P(F)$, is equal to the probability that it fails to satisfy the criterion in Lemma \ref{lemma:EC}. Thus, using the union bound, we find that the probability of failure is bounded by
    \begin{align}\mathbb P(F)&=\mathbb P\left(\exists \nu_A,\nu_B,\mu \; \text{s.t.} \; 1\leq \omega(\mu)\leq d \andd \tr\left(\sigma_\mu U((\sigma_{\nu_A})_L\otimes (\sigma_{\nu_B})_{S\setminus L})U^\dagger\right)\neq 0\right)\\
    &\leq \sum_{\substack{\nu_A\in \{I,X,Y,Z\}^{k}\\\nu_A\neq I^{\times k}}}\sum_{\nu_B\in \{I,Z\}^{n-k}}\sum_{\substack{\mu\in \{I,X,Y,Z\}^n\\ 1\leq \omega(\mu)\leq d}}\mathbb P\left(\tr\left(\sigma_\mu U((\sigma_{\nu_A})_L\otimes (\sigma_{\nu_B})_{S\setminus L})U^\dagger\right)\neq 0\right).\label{eq:failureprob}
    \end{align}
From the expression in Equation~\eqref{eq:failureprob} we can derive the partition function through a similar calculation as for the proof of Lemma~\ref{lem:partition}. This derivation can be found in the appendix.
\begin{lemma}\label{lem:partition-EC}
Let the notation of Lemma~\ref{lem:partition} hold and $\mathbb P(F)$ be the probability that any $U\sim \chi_{\brick}$ is not a $[[n,k,d+1]]$ error correcting code. Then, assuming the block architecture for the qubits from Section~\ref{sec:brickworkcircuits}, the following bound holds:
    \begin{equation}
        \mathbb P(F)\leq Z^{(s)}_{\QEC}:=3^k\cdot \frac{2^n}{3^n}\sum_{\substack{\vec \gamma\in \{I,S\}^{n\times (s+1)}\\ \vec \gamma^{(0)}\in (\{S\}^a\times\{I,S\}^{b-a})^m}}\left(2^{-|\vec \gamma^{(s)}|}\sum_{j=1}^{\min\{d,|\vec \gamma^{(s)}|\}}3^j\binom{|\vec \gamma^{(s)}|}{j}\right)\prod_{t=1}^sM^{(t)}_{\vec \gamma^{(t)}\vec \gamma^{(t-1)}}.
\end{equation}
\end{lemma}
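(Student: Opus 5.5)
Starting from the union bound in Equation~\eqref{eq:failureprob}, I will rewrite each failure probability as a second moment. For a Clifford $U$ and Paulis $P=\sigma_\mu$, $Q=(\sigma_{\nu_A})_L\otimes(\sigma_{\nu_B})_{S\setminus L}$, the conjugate $UQU^\dagger$ is (up to phase) again a Pauli. Hence $\tr(\sigma_\mu UQU^\dagger)\in\{0,\pm 2^n\}$ and
\begin{equation}
\mathbb P\left(\tr(\sigma_\mu UQU^\dagger)\neq 0\right)=2^{-2n}\,\E_U\left[\tr\left(\sigma_\mu UQU^\dagger\right)^2\right]=2^{-2n}\,\E_U\tr\left(U^{\otimes 2}Q^{\otimes 2}(U^\dagger)^{\otimes 2}\sigma_\mu^{\otimes 2}\right).
\end{equation}
Summing over $\nu_A,\nu_B$ first, linearity turns $\sum Q^{\otimes 2}$ into a product operator. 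On a logical site, $\sum_{\nu\in\{I,X,Y,Z\}}\sigma_\nu^{\otimes 2}=2S$. On an ancilla site, $\sum_{\nu\in\{I,Z\}}\sigma_\nu^{\otimes2}=I+Z\otimes Z=2(|00\rangle\langle00|+|11\rangle\langle11|)$. The constraint $\nu_A\neq I^{\times k}$ only removes a nonnegative term (the identity on $L$), so I can drop it and still have an upper bound, since all terms are squares and hence nonnegative.

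Next I will exploit the locality of the brickwork. Because each gate is an independent uniform 2-qubit Clifford (a 2-design), the expectation factorizes gate by gate, as in Lemma~\ref{lem:partition}. Each gate twirl projects the local two-copy operator onto $\mathrm{span}\{I,S\}^{\otimes 2}$ on its two sites. I will write the initial operator in the $\{I,S\}$ basis site by site:
\begin{itemize}
\item a logical site contributes a pure $S$ component;
\item an ancilla site's $|00\rangle\langle00|+|11\rangle\langle11|$ twirls to a mixture of $I$ and $S$, giving the $\{I,S\}$ freedom in $\vec\gamma^{(0)}$.
\end{itemize}
The transfer matrix $M^{(t)}$ of Equation~\eqref{eq:momentmatrix} then arises exactly as before, via the Weingarten coefficients of Lemma~\ref{lemma:haartwirl} with $d=4$, namely $\tr(I\otimes S)=\tr(S\otimes I)$ contracted with $(\alpha,\beta)$ giving $\tfrac25$. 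I will reuse the computation from the appendix proof of Lemma~\ref{lem:partition} for the bulk and the normalization $3^k 2^n/3^n$, checking that the prefactors from $2S$, the ancilla operator and $2^{-2n}$ combine to it.

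At the final time, I contract $\vec\gamma^{(s)}$ with $\sum_{\mu:1\le\omega(\mu)\le d}\sigma_\mu^{\otimes2}$. Per site I compute $\tr(I\,\sigma_\nu^{\otimes 2})=\tr(\sigma_\nu)^2=4\delta_{\nu I}$ and $\tr(S\,\sigma_\nu^{\otimes2})=\tr(\sigma_\nu^2)=2$. Thus a site in state $I$ forces $\mu_i=I$, while a site in state $S$ allows all four Paulis, each with factor $2$. Summing over $\mu$ supported on the $S$-sites with weight $j$ between $1$ and $d$ gives $\sum_{j=1}^{\min\{d,|\gamma^{(s)}|\}}3^j\binom{|\gamma^{(s)}|}{j}$. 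Relative to the normalization, this carries a factor $2^{|\gamma^{(s)}|}4^{n-|\gamma^{(s)}|}$. Dividing by the $4^n$-type normalization, which is the same one that produced $\lambda^{|\gamma^{(s)}|}$ in Lemma~\ref{lem:partition}, leaves $2^{-|\vec\gamma^{(s)}|}$.

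The main obstacle is this bookkeeping of constants: confirming that the boundary weights reproduce exactly $3^k 2^n/3^n$ together with $2^{-|\gamma^{(s)}|}$, rather than a different power. I would verify it by matching against the Haar case on one site and on the trivial $s=0$ circuit.
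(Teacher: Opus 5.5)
Your proposal is correct and follows essentially the paper's route: the union bound, rewriting each failure probability as $2^{-2n}\E\bigl[\tr(\sigma_\mu UQU^\dagger)^2\bigr]$, propagating through the $\{I,S\}$ transfer matrices $M^{(t)}$, and contracting at the end to get $4^n2^{-|\vec\gamma^{(s)}|}\sum_{j}3^j\binom{|\vec\gamma^{(s)}|}{j}$. The prefactor also works out as you expect, since $2^k(2/3)^{n-k}=3^k2^n/3^n$.

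The differences are cosmetic. You sum over Paulis before the single-qubit twirl, using $\sum_\nu\sigma_\nu^{\otimes 2}=2S$ and $I\otimes I+Z\otimes Z\mapsto\frac23(I+S)$. This implicitly uses the paper's device of absorbing a layer of single-qubit Haar twirls into the first-layer gates. The paper instead twirls first and resums via $W=-I+2S$. You also drop the $\nu_A\neq I^{\times k}$ constraint up front, whereas the paper subtracts the $I_L$ term and discards it at the end, which amounts to the same bound.
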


To avoid confusion we shall refer to the partition function in Lemma~\ref{lem:partition} as $Z_{\AQEC}^{(s)}$. Following a similar style of argument as in the proof of Lemma~\ref{lem:infinitedepth}, we can compute $Z^{(\infty)}_{\QEC}$ in this case as well. The result is stated in the following lemma, whose proof is deferred to the appendix.
\begin{lemma}\label{lem:infinitedepthexact}
    At infinite depth, the following equality holds
    \begin{equation}
        Z^{(\infty)}_{\QEC}=\frac{K}{1-2^{-2n}}\left(2^{-n+k}-2^{-2n}\right),
    \end{equation}
    where $K=\sum_{j=1}^d3^j\binom{n}{j}$.
\end{lemma}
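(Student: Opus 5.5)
The plan is to mirror the proof of Lemma~\ref{lem:infinitedepth}: undo the stat-mech representation, replace the brickwork ensemble by the Haar measure, and evaluate with Lemma~\ref{lemma:haartwirl}.

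\textbf{Step 1: identify the infinite-depth sum with a Haar two-design computation.} As the depth tends to infinity, every trajectory from a configuration in $\mathcal I$ is absorbed into either $\{I\}^n$ or $\{S\}^n$. The final weight $2^{-|\gamma|}\sum_{j=1}^{\min\{d,|\gamma|\}}3^j\binom{|\gamma|}{j}$ vanishes on $\{I\}^n$, since $|\gamma|=0$ there and the sum is empty. On $\{S\}^n$ it equals $2^{-n}K$. Therefore
\begin{equation}
Z^{(\infty)}_{\QEC}=3^k\frac{2^n}{3^n}\,2^{-n}K\cdot P_S ,
\end{equation}
where $P_S$ is the total $M$-weight of trajectories starting in $\mathcal I$ and absorbed into $\{S\}^n$. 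Since infinite-depth brickwork is exactly Haar, $P_S$ is read off from the coefficient of $S$ in the Haar twirl of the initial operator.

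\textbf{Step 2: compute that coefficient.} Following the appendix derivation of Lemma~\ref{lem:partition-EC}, the initial boundary operator is $\tr_R(\rho_{SR}^{\otimes 2}S_R)$ with $\rho_{SR}$ as in Proposition~\ref{prop:expectedchoierror}. It is a tensor product of $\tfrac12 S$ on each of the $k$ logical sites and $|00\rangle\langle 00|$ on each of the $n-k$ ancilla sites. Apply Lemma~\ref{lemma:haartwirl} to this operator. Already in the proof of Lemma~\ref{lem:infinitedepth} this gives
\begin{equation}
\beta=\frac{1-2^{-k-n}}{2^{2n}-1}.
\end{equation}

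\textbf{Step 3: fix the normalization and assemble.} The remaining task is to match normalizations, i.e.\ to find the constant relating $\beta$ to $P_S$; this is the main obstacle. I would calibrate it by the same identity used in Lemma~\ref{lem:infinitedepth}, contracting $\alpha I+\beta S$ against the final-boundary operator. Here that operator is $\sum_{\mu:1\le\omega(\mu)\le d}(\sigma_\mu\otimes\sigma_\mu)/4^n$, summed over $\nu_A,\nu_B$ as in Equation~\eqref{eq:failureprob}. Its contraction with $I$ vanishes, because each $\sigma_\mu$ with $\omega(\mu)\ge 1$ is traceless. Its contraction with $S$ gives $K$. Hence only $\beta$ survives, confirming Step 1.

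Evaluating directly, the quantity becomes $2^{k}\cdot 2^{n}\beta K$, up to the $-1$ correction that removes $\nu_A=I$. This gives
\begin{equation}
2^{k+n}K\,\frac{1-2^{-k-n}}{2^{2n}-1}=\frac{K}{1-2^{-2n}}\bigl(2^{-n+k}-2^{-2n}\bigr),
\end{equation}
which is the claim.

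The bookkeeping of the excluded $\nu_A=I$ term, together with the prefactor $3^k 2^n/3^n$ versus $c^{\tau}$-type constants, is where I expect errors. I would check the result on the sanity case $k=0$, where no logical operators remain and the expression should reduce consistently.
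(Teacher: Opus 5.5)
Your Step 1 is correct: it reduces the lemma to the single unknown $P_S$. Your final algebraic identity is also right. The gap is in Step 3. The factor $2^{k+n}$ multiplying $\beta K$ is the whole content of the lemma beyond the twirl, and you write it down rather than derive it. The facts you cite around it do not produce it:
\begin{itemize}
\item The logical-site factor of $\tr_R(\rho_{SR}^{\otimes 2}S_R)$ is $\tfrac14 S$, not $\tfrac12 S$. Only $\tfrac14 S$ gives the $\beta$ you quote; $\tfrac12 S$ gives $2^k$ times it.
\item The $S$-contraction of $\sum_{\mu}\sigma_\mu^{\otimes 2}/4^n$ is $2^{-n}K$, not $K$.
\item The initial operator of the exact-correction problem is not $\tr_R(\rho_{SR}^{\otimes 2}S_R)$. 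After the single-qubit twirls, and keeping the $2^kS_L$ part, it is $2^k4^n$ times the single-qubit twirl $(\tfrac14 S)^{\otimes k}\otimes(\tfrac16(I+S))^{\otimes(n-k)}$ of that operator.
\end{itemize}

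Calibrating against the final boundary is also the wrong place to look. Step 1 already accounted for the final boundary through the weight $2^{-n}K$ on $\{S\}^n$. What is missing is the initial-boundary normalization linking $P_S$ to $\beta$.

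There is also no ``$-1$ correction'' to wave away. $Z^{(s)}_{\QEC}$ is defined by the sum over $\vec\gamma^{(0)}\in\mathcal I$, i.e.\ only the $2^kS_L$ term. Keeping the $-I_L$ term would give $K(2^{n+k}-2^{n-k})/(2^{2n}-1)$ instead of the stated formula. For the same reason, your proposed $k=0$ sanity check returns $K/(2^n+1)$, not $0$.

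The fix is one line. By Eq.~\eqref{eq:layer1trajectory}, the single-qubit twirl of $\tr_R(\rho_{SR}^{\otimes 2}S_R)$ is $\frac{3^k}{2^k6^n}\sum_{\gamma\in\mathcal I}\bigotimes_i\gamma_i$. Run this to infinite depth and match the coefficients of $I^{\otimes n}$ and $S^{\otimes n}$ with $\alpha I+\beta S$. This gives $P_S=\frac{2^k6^n}{3^k}\beta$, hence
\[
Z^{(\infty)}_{\QEC}=3^k\frac{2^n}{3^n}\,2^{-n}K\cdot\frac{2^k6^n}{3^k}\,\beta=2^{k+n}K\beta,
\]
and your last line completes the proof.

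With this step, your route is a legitimate shortcut of the paper's. The paper stays at the operator level of Eq.~\eqref{eq:lemmainfinitedepthrequiredEC}. It twirls $S_L\otimes O_{S\setminus L}$ directly to get $\beta=\frac{4^k6^{n-k}}{2^{2n}-1}-\frac{2^k6^{n-k}}{2^n(2^{2n}-1)}$, then multiplies by $2^n\cdot 2^k4^{-n}(2/3)^{n-k}K$. You instead recycle $\beta$ from Lemma~\ref{lem:infinitedepth} via the proportionality of the two initial operators.
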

We will now bound $Z_{\AQEC}^{(s)}$ by relating it to both $Z_{\AQEC}^{(\infty)}$ and $Z_{\QEC}^{(s)}$. For the latter connection, note that $2^{-|\vec \gamma^{(s)}|}=\lambda^{|\vec \gamma^{(s)}|}\cdot 2^{-f|\vec \gamma^{(s)}|}$, where we let $f=1-\frac{a}{b}$ (recall that $\lambda = 2^{f-1}$). By definition of $f$, we see that $0<f<1$ as $0<a<b$. So, if we upper bound $2^{-f|\vec \gamma^{(s)}|}\sum_{j=1}^{\min\{d,|\vec \gamma^{(s)}|\}}3^j\binom{|\vec \gamma^{(s)}|}{j}$ for all $1\leq |\vec \gamma^{(s)}|\leq n$ we have a direct relation between $Z_{\QEC}^{(s)}$ and $Z_{\AQEC}^{(s)}$. This brings us to the following lemma.

\begin{lemma}\label{lemma:globalboundEC} 
    Let $0<f<1$ and $A=\max\left\{2^{2-f},\frac{3}{f\ln(2)}\right\}$. Then for any $1\leq |\vec \gamma^{(s)}|\leq n$ the following bound holds:
    \begin{equation}
        2^{-f|\vec \gamma^{(s)}|}\sum_{j=1}^{\min\{d,|\vec \gamma^{(s)}|\}}3^j\binom{|\vec \gamma^{(s)}|}{j}\leq d\cdot A^d.
    \end{equation}
\end{lemma}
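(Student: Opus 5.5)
Write $N=|\vec\gamma^{(s)}|$, so $1\le N\le n$, and set $g(N)=2^{-fN}\sum_{j=1}^{\min\{d,N\}}3^j\binom{N}{j}$. The plan is to bound each of the at most $d$ summands separately by $A^d$ and then multiply by $d$. It therefore suffices to show that for every $1\le j\le \min\{d,N\}$,
\begin{equation*}
2^{-fN}3^j\binom{N}{j}\le A^d .
\end{equation*}
We use $\binom{N}{j}\le N^j/j!$ and split according to the relative size of $N$ and $j$.

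\emph{Case $N\le d$.} Here we bound crudely: $\binom{N}{j}\le 2^N$, so each term is at most $3^j 2^{(1-f)N}\le 3^d2^{(1-f)d}$. This is not quite $(2^{2-f})^d=4^d2^{-fd}$, since $3\cdot 2^{1-f}$ can exceed $2^{2-f}$. I would instead bound the whole sum at once: $\sum_{j}3^j\binom{N}{j}\le 4^N$, so $g(N)\le 2^{(2-f)N}\le (2^{2-f})^d\le A^d$. This is clearly where the first entry of the maximum defining $A$ comes from, and in this case it gives the stronger bound without even needing the factor $d$.

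\emph{Case $N>d$.} For each term, write
\begin{equation*}
2^{-fN}3^j\binom{N}{j}\le \frac{3^j}{j!}\,N^j2^{-fN}.
\end{equation*}
The function $N\mapsto N^j2^{-fN}$ is maximized over the reals at $N=j/(f\ln 2)$, with maximum value $\bigl(j/(ef\ln 2)\bigr)^j$. Combined with $j!\ge (j/e)^j$, this gives
\begin{equation*}
2^{-fN}3^j\binom{N}{j}\le \Bigl(\frac{3}{f\ln 2}\Bigr)^j\le A^j\le A^d .
\end{equation*}
The last inequality uses $A\ge 1$, which holds because $2^{2-f}>2$. Summing the at most $d$ terms gives $d\cdot A^d$. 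This step is the source of the second entry of the maximum defining $A$.

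The only delicate point is the choice of bounds in each regime. The Stirling-type estimate $j!\ge (j/e)^j$ combined with optimizing over $N$ is the key step. It is uniform in $N$, so in fact it alone covers all $N$, and the case split is only needed to match the stated constant cleanly. I expect no real obstacle beyond checking these elementary inequalities.
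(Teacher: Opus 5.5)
Your proof is correct and follows essentially the same route as the paper: the identical bound $\sum_{j}3^j\binom{N}{j}\le 4^N$ when $N\le d$, and for $N>d$ the same estimate $\binom{N}{j}\le N^j/j!\le (eN/j)^j$, followed by maximizing $N^j2^{-fN}$ at $N=j/(f\ln 2)$. The only difference is the order of operations. You optimize each term separately and then use $A\ge 1$ to pass from $A^j$ to $A^d$, whereas the paper first uses that $3^j(eN/j)^j$ is increasing in $j$ for $j\le N$ to reduce to $d$ copies of the $j=d$ term and then optimizes over $N$. Your side remark that the second estimate alone covers all $N$ is also right, and more so than you say: $3/(f\ln 2)>4>2^{2-f}$ for $0<f<1$, so the first entry of the maximum defining $A$ is never active and the case split is not needed at all.
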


With Lemmas~\ref{lem:infinitedepthexact} and \ref{lemma:globalboundEC} in place we are ready to upper bound $Z_{\QEC}^{(s)}$.
\begin{theorem}\label{thm:mainresultEC}
    Assume the notation of Theorem~\ref{thm:mainresultthesis}. More precisely, let $C=10\cdot 3^{2b}\cdot 2^b$, $r=\frac{2(2^R+2^{-R}}{5}$ where $R=\frac{a}{b}=\frac{k}{n}$, and $A=\max\left\{2^{2-f},\frac{3}{f\ln(2)}\right\}$. Then the following bound holds:
    \begin{equation}
        Z^{(s)}_{\QEC}\leq  Z^{(\infty)}_{\QEC}+\exp\left(\ln(d)+d\ln(A)+\ln(n)+\ln(C)+D\ln(r)+nC\cdot r^D\right).
    \end{equation}
\end{theorem}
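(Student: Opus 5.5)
The plan is to reuse the domain-wall decomposition from the proof of Theorem~\ref{thm:mainresultthesis}, now applied to $Z^{(s)}_{\QEC}$. The only change is the final-time weight: $\lambda^{|\vec\gamma^{(s)}|}$ is replaced by $h(|\vec\gamma^{(s)}|)$, where
\begin{equation*}
h(m):=2^{-m}\sum_{j=1}^{\min\{d,m\}}3^j\binom{m}{j}.
\end{equation*}
Set $f=1-\frac{a}{b}$, so that $\lambda=2^{f-1}$ and $2^{-m}=\lambda^m2^{-fm}$. Lemma~\ref{lemma:globalboundEC} then gives
\begin{equation*}
h(m)\leq dA^d\lambda^m \qquad\text{for all } 1\leq m\leq n,
\end{equation*}
while $h(0)=0$. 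With this choice $f\leq 1-\frac{a}{b}$ holds with equality, so Lemma~\ref{lemma:ultimate} and Theorem~\ref{thm:mainresultthesis} are applicable.

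First, split the sum defining $Z^{(s)}_{\QEC}$ into completely annihilating trajectories ($G\in\mathcal G_{\mathcal B,0}$) and trajectories with at least one pair of surviving domain walls. An annihilating trajectory ends in a constant configuration, $\{I\}^n$ or $\{S\}^n$. For $\{I\}^n$ we have $h(0)=0$, and for $\{S\}^n$ we have $h(n)=2^{-n}K$. This term is therefore at most the sum over all trajectories, of arbitrary length, that converge to $\{S\}^n$ from $\mathcal I$, weighted by $2^{-n}K$. The Haar computation in the proof of Lemma~\ref{lem:infinitedepthexact} is exactly this quantity, by the same monotonicity-in-$s$ argument used for the first term of \eqref{eq:secondterm}. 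So the first piece is bounded by $Z^{(\infty)}_{\QEC}$. I would double-check here that the $\{S\}^n$ contribution at infinite depth is precisely what Lemma~\ref{lem:infinitedepthexact} computes, i.e.\ that the $\{I\}^n$ branch contributes zero.

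Second, for trajectories with surviving walls, bound $h(|\vec\gamma^{(s)}|)\leq dA^d\lambda^{|\vec\gamma^{(s)}|}$. What remains is $dA^d$ times exactly the second term of \eqref{eq:secondterm} for $Z^{(s)}_{\AQEC}$. The chain of estimates in the proof of Theorem~\ref{thm:mainresultthesis}, namely \eqref{eq:secondterm3}, Lemma~\ref{lemma:ultimate} and the walk bound \eqref{eq:UBwalk}, bounds this by
\begin{equation*}
\sum_{k_0\geq1}\binom{n}{2k_0}C^{2k_0}r^{2Dk_0}.
\end{equation*}

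Third, pull one factor out of this sum: the $k_0\geq1$ sum is at most $nCr^D(1+Cr^D)^n$. One way to see this is to note that $\sum_{k_0\geq1}\binom{n}{2k_0}x^{2k_0}\leq (1+x)^n-1\leq nx(1+x)^{n-1}$ with $x=Cr^D$. Using $(1+x)^n\leq e^{nx}$ gives
\begin{equation*}
Z^{(s)}_{\QEC}\leq Z^{(\infty)}_{\QEC}+dA^d\,n\,C\,r^D e^{nCr^D},
\end{equation*}
which is the claimed bound after writing everything as one exponential.

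The main obstacle is the first step: justifying that the annihilating part is dominated by $Z^{(\infty)}_{\QEC}$. This requires care that restricting to finitely many steps only removes nonnegative terms, and that the weight $h$ on the $\{S\}^n$ endpoint matches the Haar computation. The rest is bookkeeping reused from Theorem~\ref{thm:mainresultthesis}.
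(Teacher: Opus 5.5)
Your proposal is correct and follows the paper's proof essentially step by step. Like the paper, you split $Z^{(s)}_{\QEC}$ into annihilating and surviving trajectories and bound the annihilating part by $Z^{(\infty)}_{\QEC}$. You then use Lemma~\ref{lemma:globalboundEC} with $f=1-\frac{a}{b}$ to reduce the surviving part to $dA^d$ times the surviving term already bounded in Theorem~\ref{thm:mainresultthesis}, and finish with the binomial-to-exponential estimate.

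There are two minor differences. You check explicitly that the $\{I\}^n$ endpoint has weight $h(0)=0$ and that the $\{S\}^n$ endpoint weight $2^{-n}K$ matches the Haar computation; the paper leaves this implicit. You also use $(1+x)^n-1\le nx(1+x)^{n-1}$ where the paper uses $e^z-1\le ze^z$.
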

\begin{proof}
    Following the line of thought in Equation~\eqref{eq:secondterm}, we split $Z^{(s)}_{\QEC}$ up into domain wall trajectories that terminate and domain wall trajectories that do not. This gives the following equality:
\begin{align}\label{eq:QECsecondterm}
    Z^{(s)}_{\QEC}&= 3^k\frac{2^n}{3^n}\sum_{G_0\in \mathcal G_{\mathcal B,0}}\left(2^{-|\Gamma(G_0)^{(s)}|}\sum_{j=1}^{\min\{d,|\Gamma(G_0)^{(s)}|\}}3^j\binom{|\Gamma(G_0)^{(s)}|}{j}\right)W(G_0)\\
    &\qquad +3^k\frac{2^n}{3^n}\sum_{k_0=1}^{n/4}\sum_{G_U\in \mathcal G_{U,2k_0}}\sum_{\substack{H\in \mathcal G_0\\ G_U\cap H=\emptyset \\ g^{(0)}\sqcup h^{(0)}\in \mathcal B}}\left(2^{-|\Gamma(G_U\sqcup H)^{(s)}|}\sum_{j=1}^{\min\{d,|\Gamma(G_U\sqcup H)^{(s)}|\}}3^j\binom{|\Gamma(G_U\sqcup H)^{(s)}|}{j}\right)W(G_U\sqcup H)\nonumber.
\end{align}
Now, the first term in Equation~\eqref{eq:QECsecondterm} is easily seen to be upper bounded by $Z_{\QEC}^{(\infty)}$ as all the trajectories that terminate in $s$ steps are a subset of all the trajectories that terminate at all (and the weights are still nonnegative). For the second term, we shall use that $2^{-|\vec \gamma^{(s)}|}=\lambda^{|\vec \gamma^{(s)}|}\cdot 2^{-f|\vec \gamma^{(s)}|}$, where we let $f=1-\frac{a}{b}$. By definition of $f$, we see that $0<f<1$ as $0<a<b$. So, we can apply Lemma~\ref{lemma:globalboundEC} to the second term. Combining both these steps gives the inequality
\begin{equation}
    Z^{(s)}_{\QEC}\leq Z_{\QEC}^{(\infty)}+d\cdot A^d\cdot 3^k\frac{2^n}{3^n}\sum_{k_0=1}^{n/4}\sum_{G_U\in \mathcal G_{U,2k_0}}\sum_{\substack{H\in \mathcal G_0\\ G_U\cap H=\emptyset \\ g^{(0)}\sqcup h^{(0)}\in \mathcal B}}\lambda^{|\Gamma(G_U\sqcup H)^{(s)}|}W(G_U\sqcup H),
\end{equation}
where the second term is equivalent to the second term in Equation~\eqref{eq:secondterm} (which is the core quantity we analyzed when bounding $Z_{\AQEC}^{(s)}$) with $f=1-\frac{a}{b}$. Without the prefactor $d\cdot A^d$, we have bounded this term in the proof of Theorem~\ref{thm:mainresultthesis} in Equation~\eqref{eq:QECsecondtermbound}. So, this immediately gives
\begin{align}Z_{\QEC}^{(s)}&\leq Z^{(\infty)}_{\QEC}+d\cdot A^d\sum_{k_0=1}^{n/4}\binom{n}{2k_0}C^{2k_0}r^{2Dk_0} \\
&\leq Z^{(\infty)}_{\QEC}+d\cdot A^d\sum_{k_0=1}^{n}\binom{n}{k_0}C^{k_0}r^{Dk_0}\\
&=Z^{(\infty)}_{\QEC}+d\cdot A^d\left(\left(1+C\cdot r^D\right)^n-1\right)\\
&\leq Z^{(\infty)}_{\QEC}+d\cdot A^d\left(\exp\left(nC\cdot r^D\right)-1\right)\\
&\leq Z^{(\infty)}_{\QEC}+d\cdot A^d\left(nC\cdot r^D\right)\exp\left(nC\cdot r^D\right)\\
&=Z^{(\infty)}_{\QEC}+\exp\left(\ln(d)+d\ln(A)+\ln(n)+\ln(C)+D\ln(r)+nC\cdot r^D\right),
\end{align}
where in the second last inequality, we used that $\exp(z)-1\leq z\exp(z)$.
\end{proof}

Using Theorem~\ref{thm:mainresultEC} one immediately obtains the following corollary showing that a code distance of $d(n)$ can be achieved in depth $\mathcal O(d(n))$, effectively matching the lower bound obtained by a light-cone argument.

\begin{corollary} \label{cor:exacterrorcorrection}
    Assume the notation of Theorem~\ref{thm:mainresultEC} and that $d(n)\in \Omega(\log(n))$. Then, there exists $D(n)\in \mathcal O(d(n))$ such that the probability that $U\sim \chi_{\brick}^{D(n)}$ is a $[[n,k,d(n)+1]]$-code goes to 1 in the limit $n\to \infty$ for appropriate $k$. More specifically, if $d(n)\in o(n)$, the result holds when $k=\frac{a}{b}n$ with any $0<a<b$. When $d(n)$ is asymptotically linear in $n$ where $d(n)\leq c\cdot n$ for some $c\in (0,1)$, then the only constraint on the rate is given by $c\log(3)+H(c)+\frac{a}{b}<1$ where $H(\cdot)$ is the binary entropy function.
\end{corollary}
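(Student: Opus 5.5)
By Lemma~\ref{lem:partition-EC}, $\mathbb P(F)\leq Z^{(s)}_{\QEC}$. The plan is therefore to show that both terms on the right-hand side of Theorem~\ref{thm:mainresultEC} tend to zero for a suitable choice $D(n)=\kappa\, d(n)$ with a constant $\kappa$ to be fixed.

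\textbf{The infinite-depth term.} By Lemma~\ref{lem:infinitedepthexact},
\[
Z^{(\infty)}_{\QEC}\leq 2K\,2^{-n+k}, \qquad K=\sum_{j=1}^{d}3^j\binom{n}{j}.
\]
\emph{Case $d\in o(n)$.} We have $K\leq d\,3^d\binom{n}{d}\leq 2^{d\log 3+nH(d/n)+\log d}$. Since $H(d/n)\to 0$ and $d/n\to 0$, this is $2^{o(n)}$. Hence $Z^{(\infty)}_{\QEC}\leq 2^{-(1-a/b)n+o(n)}\to 0$ for any $0<a<b$.

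\emph{Case $d\leq cn$ with $c\in(0,1)$.} Note first that if $c\geq 1/2$ the condition $c\log 3+H(c)+a/b<1$ cannot hold, since then $c\log 3\geq 0.79$ and $H(c)$ is close to $1$ near $c=1/2$. So we may assume $c<1/2$; the sum is then dominated by its top term, and $K\leq n\,2^{n(c\log 3+H(c))}$. The condition $c\log3+H(c)+a/b<1$ then gives exponential decay.

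\textbf{The second term.} The exponent is
\[
E(n)=\ln d+d\ln A+\ln n+\ln C+D\ln r+nC r^{D}.
\]
Take $D=\kappa d$ with $\kappa|\ln r|>\ln A+1$, using $r<1$ for $R<1$. Then $D\ln r+d\ln A\leq -d$. Since $d\in\Omega(\log n)$, i.e.\ $d\geq c_0\log n$ for large $n$, enlarge $\kappa$ further so that $\kappa|\ln r|c_0\geq 2+\ldots$ and so that
\[
D|\ln r|\geq (\ln A)\,d+\ln n+\ln d+\ln C+\omega(1).
\]
Then $r^D\leq n^{-2}$, so $nCr^D\to 0$, and $E(n)\to-\infty$, which forces $\exp(E(n))\to 0$. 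The constant $A$ depends on $f=1-a/b$, and $C,r$ depend on $b$ and $R$; all of these are fixed, so $\kappa$ is a constant and $D\in\mathcal O(d(n))$.

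\textbf{Main obstacle.} The delicate step is the exact-linear-regime bound on $K$. It requires the Stirling/entropy estimate $\sum_{j\leq cn}3^j\binom{n}{j}\leq \mathrm{poly}(n)\,2^{n(c\log 3+H(c))}$, and care is needed because the summand grows with $j$ up to $j\approx 3n/4$, beyond the admissible range. For $c<1/2$ the top term dominates and the estimate goes through. A secondary point is that Theorem~\ref{thm:mainresultEC} uses $f=1-a/b$ with $a/b=k/n$ fixed, so Lemma~\ref{lemma:globalboundEC} applies with $0<f<1$.
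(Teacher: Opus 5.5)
Your proposal is correct and takes essentially the same approach as the paper: you bound $Z^{(\infty)}_{\QEC}$ via $K\le d\,3^d\binom{n}{d}\le 2^{\log d+d\log 3+nH(d/n)}$, and you make the second term of Theorem~\ref{thm:mainresultEC} vanish by taking $D=\kappa\, d(n)$ with $\kappa$ a sufficiently large constant. The differences are minor: you obtain $d\le n/2$ from the rate condition (or from $d\in o(n)$) rather than from the quantum Singleton bound, and you spell out the choice of $\kappa$ that the paper only asserts. Your exclusion of $c\ge 1/2$ is hand-wavy, but it is easily made rigorous by concavity: $c\log 3+H(c)$ is at least $\log 3>1$ on $[1/2,1]$.
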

\begin{proof}
    Let us first analyze the expression $Z^{(\infty)}_{\QEC}$. Using the quantum singleton bound, we have that $d\leq \frac{n}{2}$ which gives
    \begin{equation}K=\sum_{j=1}^d3^j\binom{n}{j}\leq d\cdot 3^d\cdot \binom{n}{d}\leq d\cdot 3^d\cdot 2^{nH(d/n)}=2^{\log(d)+d\log(3)+nH(d/n)},\end{equation}
where we used $\binom{m}{\ell}\leq 2^{mH(\ell/m)}$. Combining this with Lemma~\ref{lem:infinitedepthexact} gives the following bound:
\begin{equation}
    Z^{(\infty)}_{\QEC}\leq 2^{\log(d)+d\log(3)+nH(d/n)+k-n}.
\end{equation}
Now, using that $k$ is linear in $n$ with $n=\frac{b}{a}k$ and assuming that there exists an $n_1$ such that $d\leq c\cdot n$ when $n>n_1$ for some $0<c<1$ yields the following:
\begin{equation}
    Z^{(\infty)}_{\QEC}\leq 2^{\log(cn)+cn\log(3)+nH(c)+\frac{a}{b}n-n},
\end{equation}
showing when $c\log(3)+H(c)+\frac{a}{b}<1$ that $Z^{(\infty)}_{\QEC}\to 0$ as $n \to \infty$. Note that when $d(n)$ lies in $o(n)$, $c$ can be taken to be arbitrarily small for large enough $n_1$ and $Z^{(\infty)}_{\QEC}$ will go to zero for any combination of $a$ and $b$ as long as $0<a<b$. Hence, the constraint $c\log(3)+H(c)+\frac{a}{b}<1$ is only necessary when $d(n)$ is asymptotically linear in $n$. 

Focusing on the second term in the bound of Theorem~\ref{thm:mainresultEC} we see that since $d(n)\in \Omega(\log(n))$, we can choose $D(n)=\alpha \cdot d(n)$ for some $\alpha>0$ such that the second term also goes to zero in the limit $n\to \infty$.
\end{proof}
The result in Theorem~\ref{thm:informal2} now follows by Corollary~\ref{cor:exacterrorcorrection}.

\begin{acknowledgements}
The bulk of this work was undertaken in the context of TK's 2025 master thesis project at the University of Amsterdam. While finishing up this manuscript we became aware of an independent derivation of our result by the authors of \cite{liu2025approximatequantumerrorcorrection}, which has recently appeared in the  journal version of this article (as well as the arXiv version-two). JH acknowledges funding from the Dutch Research Council (NWO) through a Veni grant (grant No.VI.Veni.222.331) and the Quantum Software Consortium (NWO Zwaartekracht Grant No.024.003.037).
\end{acknowledgements}
\bibliography{library}
\bibliographystyle{abbrvnat}
\appendix
\onecolumngrid
\section{Additional technical material}

\begin{proof}[Proof of Lemma~\ref{lem:partition}] The start of this proof mirrors the computation of Theorem 1 in \cite{liu2025approximatequantumerrorcorrection}. We apply the swap trick to the square in the trace, by using the cyclicity of the trace and the adjoint channel we get two factors that resemble second moments. Using Lemma~\ref{lemma:haartwirl} we compute the twirl of the single-qubit gates. To determine the effect of the 2-qubit gates, a similar approach will be taken as in Section B1 of \cite{Dalzell_2022}. Once all the single-qubit gates are applied we are left with a sum in which each term is a tensor product consisting of $n$ factors each being either $I$ or $S$. The 2-qubit gates will then move the positions of these $I$ and $S$ factors in each term. The way these factors move is encoded by $M^{(t)}_{\nu\gamma}$ from Lemma~\ref{lem:partition}.
Throughout this proof, $S$ is used to denote the swap operator, but also the physical system. The notation $S_A$ denotes that the swap operator acts on the system $A$. 
Using the swap trick and the adjoint channel $\tilde{\mathcal N}^\dagger$, one obtains
\begin{align}
    \tr\left(\left(\tilde{\mathcal N}_{S\to E}(U_S\rho_{SR}U_S^\dagger)\right)^2\right)&=\tr\left(\left(\tilde{\mathcal N}_{S\to E}(U_S\rho_{SR}U_S^\dagger)\right)^{\otimes 2}(S_E\otimes S_R)\right)\\
    &=\tr\left(U_S^{\otimes 2}\rho_{SR}^{\otimes 2}(U_S^\dagger)^{\otimes 2}((\tilde{\mathcal N}_{S\to E}^{\dagger})^{\otimes 2}(S_E)\otimes S_R)\right)\\
    &=\tr_S\left(U_S^{\otimes 2} \tr_R\left(\rho_{SR}^{\otimes 2}S_R\right)(U_S^\dagger)^{\otimes 2}(\tilde{\mathcal N}_{S\to E}^{\dagger})^{\otimes 2}(S_E)\right).
\end{align}

Next, suppose that to our architecture we add a layer of Haar random single-qubit gates at the beginning and the end. Then, this distribution of circuits is equivalent to $\chi_{\brick}$ because the gates in $\chi_{\brick}$ are sampled from unitary $2$-designs meaning that the layers of single-qubit gates can be absorbed into the adjacent layer of $2$-qubit gates by the invariance of the Haar measure.
Thus, w.l.o.g., we assume that $U_S$ is of the form
\begin{equation}\label{eq:RQCexplicitform2}
    U_S=V_{\{1\}}^{(-1)}V_{\{2\}}^{(-2)}...V_{\{n\}}^{(-n)}U^{(s)}_{A_s}...U^{(1)}_{A_1}U^{(-1)}_{\{1\}}...U^{(-n)}_{\{n\}}.
\end{equation}

Let $\chi$ denote the distribution over circuits of the form in Equation \eqref{eq:RQCexplicitform2}. Using the form of $U_S$ given in Equation \eqref{eq:RQCexplicitform2} and the cyclicity of the trace, we get
\begin{align}
    \tr\left(\left(\tilde{\mathcal N}_{S\to E}(U_S\rho_{SR}U_S^\dagger)\right)^2\right)&=\tr_S\left(U_S^{\otimes 2} \tr_R\left(\rho_{SR}^{\otimes 2}S_R\right)(U_S^\dagger)^{\otimes 2}(\tilde{\mathcal N}_{S\to E}^{\dagger})^{\otimes 2}(S_E)\right)\\
    &=\tr_S\bigg((U^{(s)}_{A_s}...U^{(1)}_{A_1}U^{(-1)}_{\{1\}}...U^{(-n)}_{\{n\}})^{\otimes 2}\tr_R\left(\rho_{SR}^{\otimes 2}S_R\right)\left(\left(U^{(s)}_{A_s}\dots U^{(1)}_{A_1}U^{(-1)}_{\{1\}}\dots U^{(-n)}_{\{n\}}\right)^\dagger\right)^{\otimes 2}\notag\\
    &\hspace{6em}\times\left(\left(V_{\{1\}}^{(-1)}\dots V_{\{n\}}^{(-n)}\right)^\dagger\right)^{\otimes 2}\left(\tilde{\mathcal N}^\dagger_{S\to E}\right)^{\otimes 2}(S_E)\left(V_{\{1\}}^{(-1)}\dots V_{\{n\}}^{(-n)}\right)^{\otimes 2}\bigg).
\end{align}
Using linearity of the expectation to move it into the trace yields
\begin{equation}
    E_{U_S\sim \chi}\left[\tr\left(\left(\tilde{\mathcal N}_{S\to E}(U_S\rho_{SR}U_S^\dagger)\right)^2\right)\right]=\tr\left(\E_{U_S\sim \chi}\left[\left(\tilde{\mathcal N}_{S\to E}(U_S\rho_{SR}U_S^\dagger)\right)^2\right]\right).
\end{equation}
Omitting the trace and focusing on the expectation gives
{\allowdisplaybreaks
\begin{align}
    \frac{1}{2^k}Z&=\E_{U_S\sim \chi}\left[\left(\tilde{\mathcal N}_{S\to E}(U_S\rho_{SR}U_S^\dagger)\right)^2\right]\\
    &=\E_{\substack{\{U^{(i)}_{A_i}\}\sim \left(\chi_{\mathcal C}^{(2)}\right)^{\times m}\\ \{U^{(-i)}_{\{i\}}\}\sim\left(\Haar(2)\right)^{\times n}}}\left[(U^{(s)}_{A_s}\dots U^{(-n)}_{\{n\}})^{\otimes 2}\tr_R\left(\rho_{SR}^{\otimes 2}S_R\right)\left(\left(U^{(s)}_{A_s}\dots U^{(-n)}_{\{n\}}\right)^\dagger\right)^{\otimes 2}\right] \\
    &\qquad \times \E_{\{V_{\{i\}}^{(-i)}\}\sim \left(\Haar(2)\right)^{\times n}}\left[\left(\left(V_{\{1\}}^{(-1)}\dots V_{\{n\}}^{(-n)}\right)^\dagger\right)^{\otimes 2}\left(\tilde{\mathcal N}^\dagger_{S\to E}\right)^{\otimes 2}(S_E)\left(V_{\{1\}}^{(-1)}\dots V_{\{n\}}^{(-n)}\right)^{\otimes 2}\right]. \nonumber
\end{align}}Now, let $M^{(-i)}(\rho)=\E_{U\sim \Haar(2)}\left[U^{\otimes 2}\rho (U^\dagger)^{\otimes 2}\right]$ and $M^{(i)}(\rho)=\E_{U\sim \Haar(4)}\left[U^{\otimes 2}\rho (U^\dagger)^{\otimes 2}\right]$. Using that uniform sampling from Clifford gates forms a unitary 2-design gives us
\begin{align}\label{eq:starttrajectory}
    \frac{1}{2^k}Z&=\E_{U_S\sim \chi}\left[\left(\tilde{\mathcal N}_{S\to E}(U_S\rho_{SR}U_S^\dagger)\right)^2\right]\\
    &=\left(M^{(s)}\circ \dots M^{(1)}\circ M^{(-1)}\circ \dots \circ M^{(-n)}\left(\tr_R\left(\rho_{SR}^{\otimes 2}S_R\right)\right)\right) \left(M^{(-n)}\circ \dots \circ M^{(-1)}\left(\left(\tilde{\mathcal N}^\dagger_{S\to E}\right)^{\otimes 2}(S_E)\right)\right).\nonumber
\end{align}

Let us focus on the first factor in Equation \eqref{eq:starttrajectory}. As $M^{(-i)}$, for $i>0$, acts on the 2-copies of the $i$-th state for any $\rho$, we find that
\begin{equation}M^{(-i)}(\rho_{[n]\setminus \{i\}}\otimes \rho_{\{i\}})=\rho_{[n]\setminus \{i\}}\otimes (\alpha_i I+\beta_i S)_{\{i\}}.\end{equation}
Using Lemma~\ref{lemma:haartwirl}, we find that when $i$ corresponds to a logical qubit that $\rho_{SR\{i\}}=|\hat \phi\rangle\langle \hat \phi|_{L_iR_i}=\rho_{LR\{i\}}$, where $L_iR_i$ corresponds to the registers of qubit $i$, and thus
\begin{align}
    \alpha_i&=\frac{1}{3}\tr\left(\tr_{R_i}\left(\rho_{LR\{i\}}^{\otimes 2}S_{R_i}\right)\right)-\frac{1}{6}\tr\left(\tr_{R_i}\left(\rho_{LR\{i\}}^{\otimes 2}S_{R_i}\right)S_{L_i}\right)\\
    &=\frac{1}{3}\tr\left(\tr_{L_i}\left(\rho_{LR\{i\}}^{\otimes 2}\right)S_{R_i}\right)-\frac{1}{6}\tr\left(\rho_{LR\{i\}}^{\otimes 2}(S_{L_i}\otimes S_{R_i})\right)\\
    &=\frac13\tr\left(\rho_{R\{i\}}^{\otimes 2}S_{R_i}\right)-\frac16 \tr\left(\rho_{LR\{i\}}^2\right)\\
    &=\frac13\tr\left(\rho_{R\{i\}}^2\right)-\frac16\\
    &=\frac13 \cdot \frac12 -\frac16\\
    &=0,
\end{align}
where we used that $\rho_{R\{i\}}=\frac12 I$ is the maximally mixed state.
From this, using Lemma~\ref{lemma:haartwirl}, $\beta_i$ can be determined as
\begin{equation}
    \beta_i=\frac13-\frac16\cdot \frac12=\frac14.
\end{equation}
If qubit at position $i$ is an ancilla qubit, we get $\rho_{SR\{i\}}=|0\rangle\langle 0|$ giving us
\begin{equation}\alpha_i=\frac13 \tr\left(|0\rangle\langle0|^{\otimes 2}\right)-\frac16 \tr\left(|0\rangle\langle 0|^{\otimes 2}S\right)=\frac13-\frac16=\frac16.
\end{equation}
Similarly, $\beta_i$ can be determined to give: $\beta_i=\frac16$. Putting it all together, brings us to
\begin{align}
    M^{(-1)}\circ \dots \circ M^{(-n)}\left(\tr_R\left(\rho_{SR}^{\otimes 2}S_R\right)\right)&=\bigotimes_{\ell=1}^m\left(\frac14 S\right)^{\otimes a}\otimes \left(\frac16(I+S)\right)^{\otimes (b-a)}\\
    &=\frac{3^k}{2^k\cdot 6^n}\sum_{\gamma\in (\{S\}^a\times \{I,S\}^{b-a})^m}\bigotimes_{i=1}^n \gamma_i.\label{eq:layer1trajectory}
\end{align}

For the other factor in Equation \eqref{eq:starttrajectory}, let $\sigma =\left(\tilde{\mathcal N}^\dagger_{S\to E}\right)^{\otimes 2}(S_E)\in \mathcal L(\mathcal H_S)$. Using Lemma~\ref{lemma:haartwirl} again gives $M^{(-i)}(\sigma)=\sigma_{[n]\setminus \{i\}}\otimes (\alpha_i'I+\beta_i'S)_{\{i\}}$. Splitting the $S$ and $E$ registers up into $n$ regions, one for each qubit and denoting it as $S_i, E_i$ for $i\in [n]$, such that $\hat{\mathcal N}_{S\to E}=\bigotimes_{i=1}^n\hat{\mathcal N}_{S_i\to E_i}$ (since we only consider Pauli and erasure noise channels this is valid as they factor over each qubit) we note that $\tau_{SE}=\bigotimes_{i=1}^n\tau_{S_iE_i}$, where $\tau_{SE}=(I_S\otimes\hat{\mathcal N}_{S'\to E})(|\hat \phi\rangle \langle \hat \phi|_{SS'})$. Now, determining $\alpha_i'$ gives:
\begin{align}
    \alpha_i'&=\frac13\tr\left(\left(\tilde{\mathcal N}^\dagger_{S_i\to E_i}\right)^{\otimes 2}(S_{E_i})\right)-\frac16\tr\left(\left(\tilde{\mathcal N}^\dagger_{S_i\to E_i}\right)^{\otimes 2}(S_{E_i})S_{S_i}\right).
\end{align}
These terms are determined in \cite{liu2025approximatequantumerrorcorrection} in Equations (B18, B19, B33, B66) and (B88). Combining this, one finds
\begin{equation}
    \tr\left(\left(\tilde{\mathcal N}^\dagger_{S_i\to E_i}\right)^{\otimes 2} (S_{E_i})\right)=4, \qquad  \tr\left(\left(\tilde{\mathcal N}^\dagger_{S_i\to E_i}\right)^{\otimes 2}(S_{E_i})S_{S_i}\right)=4\cdot 2^{f-1}
\end{equation}
More specifically, for the Pauli noise, we have that $f=f_P(\vec p)$ while for the erasure noise we have that $f=f_e(p)$. Instead, we shall use $f$ to denote the \say{strength} of the error and observing that $0\leq f_e(p)\leq 2$ and $0\leq f_P(\vec p)\leq 2$, we constrain $0\leq f\leq 2$.  Putting it all together gives
\begin{equation}
    \alpha':=\alpha_i'=\frac{4}{3}-\frac{2}{3}2^{f-1}, \qquad \beta':=\beta_i'=\frac{4}{3}2^{f-1}-\frac{2}{3}.
\end{equation}

Next, we examine the action of $M^{(t)}$ for $i>0$ after the single-qubit gates have been applied to the first factor in Equation \eqref{eq:starttrajectory}. We follow the reasoning in \cite{Dalzell_2022}. In this case, $U^{(t)}$ is a $4\times 4$ matrix and acts on the qubit pair $A_t$. We can still apply Lemma~\ref{lemma:haartwirl} by replacing $I$ with $I\otimes I$ and $S$ with $S\otimes S$ which is the swap operation on the two copies of the two qubits. We have:
\begin{align}
    \space&M^{(t)}\left(\rho_{[n]\setminus A_t}\otimes (I\otimes I)\right)=\rho_{[n]\setminus A_t}\otimes (I\otimes I)_{A_t}\\
    &M^{(t)}\left(\rho_{[n]\setminus A_t}\otimes (S\otimes S)\right)=\rho_{[n]\setminus A_t}\otimes (S\otimes S)_{A_t}\\
    &M^{(t)}\left(\rho_{[n]\setminus A_t}\otimes (I\otimes S)\right)=M^{(t)}\left(\rho_{[n]\setminus A_t}\otimes (S\otimes I)\right)=\frac25\rho_{[n]\setminus A_t}\otimes\left(I\otimes I_{A_t} + S\otimes S_{A_t}\right).
\end{align}
Thus, if $\rho$ is a linear combination of $\{I,S\}^n$, then $M^{(t)}(\rho)$ will also be a linear combination of terms in $\{I,S\}^n$. Defining for $\gamma,\nu \in \{I,S\}$, $M^{(t)}_{\nu\gamma}$ such that
\begin{equation}M^{(t)}\left(\bigotimes_{i=1}^n\gamma_j\right)=\sum_{\nu\in \{I,S\}^n}M^{(t)}_{\nu\gamma}\bigotimes_{i=1}^n \nu_i,\end{equation}
we recover the exact definition of $M^{(t)}_{\nu\gamma}$ from Equation \eqref{eq:momentmatrix}. 

Combining all of the above, we see that after the first layer, the state will be a linear combination of states in $(\{S\}^a\times\{I,S\}^{b-a})^m$. Not only that, but also the action of $M^{(t)}$ ensures that the state will remain a linear combination of terms in $\{I,S\}^n$ for all $t>0$. Let $c\in \{0,1\}^n$ and write $S^c:=\bigotimes_{i=1}^nS^{c_i}$ for a term that appears in that final state. Then, the unnormalized contribution to $Z$ is:
\begin{align}
    \tr\left(S^c\otimes \bigotimes_{i=1}^n(\alpha'I+\beta'S)\right)&=\tr\left( \bigotimes_{i=1}^n(\alpha'S^{c_i}+\beta'S^{c_i+1})\right)\\
    &=\prod_{i=1}^n\left(\frac{4\alpha'}{2^{c_i}}+2\beta'2^{c_i}\right)\\
    &=(2\alpha'+4\beta')^{|c_i|}(4\alpha'+2\beta')^{n-|c_i|}\\
    &=(4\alpha'+2\beta')^n\cdot \left(\frac{2\alpha'+4\beta'}{4\alpha'+2\beta'}\right)^{|c_i|}\\
    &=4^n\cdot \lambda^{|c_i|}.\label{eq:finaltrajectory}
\end{align}
Hence, for the partition function, combining Equations \eqref{eq:starttrajectory} and \eqref{eq:layer1trajectory} gives:
\begin{align}
    Z&=2^k\E_{U\sim \chi}\left[\tr_S\left(\left(\tilde{\mathcal N}(U_S\rho_{SR}U_S^\dagger)\right)^2\right)\right]\\
    &=2^k\cdot \frac{3^k}{2^k\cdot 6^n}\tr\left(\left(M^{(s)}\circ M^{(1)}\left(\sum_{\gamma\in (\{S\}^a\times \{I,S\}^{b-a})^m}\bigotimes_{i=1}^n \gamma_i\right)\right)\otimes\left(\alpha'I+\beta'S\right)^{\otimes n}\right)\\
    &=\frac{3^k}{6^n}\sum_{\substack{\vec \gamma\in \{I,S\}^{n\times (s+1)}\\ \vec \gamma^{(0)}\in (\{S\}^a\times\{I,S\}^{b-a})^m}}4^n\lambda^{|\vec\gamma^{(s)}|}\prod_{t=1}^sM^{(t)}_{\vec \gamma^{(t)}\vec \gamma^{(t-1)}}\label{eq:laststep}\\
    &=3^k\frac{2^n}{3^n}\sum_{\substack{\vec \gamma\in \{I,S\}^{n\times (s+1)}\\ \vec \gamma^{(0)}\in (\{S\}^a\times\{I,S\}^{b-a})^m}}\lambda^{|\vec\gamma^{(s)}|}\prod_{t=1}^sM^{(t)}_{\vec \gamma^{(t)}\vec \gamma^{(t-1)}},
\end{align}
where in Equation \eqref{eq:laststep} we used that the contribution of each term in the final state is determined by Equation \eqref{eq:finaltrajectory}. 
\end{proof}

\section{Missing proofs of Section~\ref{sec:EC}}
\begin{proof}[Proof of Lemma~\ref{lem:partition-EC}]
We had already found the following bound on the probability of failure:
\begin{align}\mathbb P(F)\leq \sum_{\substack{\nu_A\in \{I,X,Y ,Z\}^{k}\\\nu_A\neq I^{\times k}}}\sum_{\nu_B\in \{I,Z\}^{n-k}}\sum_{\substack{\mu\in \{I,X,Y,Z\}^n\\ 1\leq \omega(\mu)\leq d}}\mathbb P\left(\tr\left(\sigma_\mu U((\sigma_{\nu_A})_L\otimes (\sigma_{\nu_B})_{S\setminus L})U^\dagger\right)\neq 0\right).
    \end{align}

Notice that $\left|\tr\left(\sigma_\mu U((\sigma_{\nu_A})_L\otimes (\sigma_{\nu_B})_{S\setminus L})U^\dagger\right)\right|= 2^n$ whenever it is nonzero as $U\in \mathcal C_n$. Thus,  the expression $\frac{1}{2^n}\left|\tr\left(\sigma_\mu U((\sigma_{\nu_A})_L\otimes (\sigma_{\nu_B})_{S\setminus L})U^\dagger\right)\right|$ is an indicator function and using this yields

    \begin{align}\mathbb P\left(\tr\left(\sigma_\mu U((\sigma_{\nu_A})_L\otimes (\sigma_{\nu_B})_{S\setminus L})U^\dagger\right)\neq 0\right)&=\mathbb P\left(\frac{1}{4^n}\left(\tr\left(\sigma_\mu U((\sigma_{\nu_A})_L\otimes (\sigma_{\nu_B})_{S\setminus L})U^\dagger\right)\right)^2\neq 0\right)\\
    &=\frac{1}{4^n}\E_{U\sim \chi_{\brick}^{(D)}}\left[\left(\tr\left(\sigma_\mu U((\sigma_{\nu_A})_L\otimes (\sigma_{\nu_B})_{S\setminus L})U^\dagger\right)\right)^2\right]\\
    &=\frac{1}{4^n}\E_{U\sim \chi_{\brick}^{(D)}}\left[\tr\left(\sigma_\mu^{\otimes 2}U^{\otimes 2}((\sigma_{\nu_A})_L\otimes (\sigma_{\nu_B})_{S\setminus L})^{\otimes 2}(U^\dagger)^{\otimes 2}\right)\right]\\
    &=\frac{1}{4^n}\tr\left(\sigma_\mu^{\otimes 2}\E_{U\sim \chi_{\brick}^{(D)}}\left[U^{\otimes 2}((\sigma_{\nu_A})_L\otimes (\sigma_{\nu_B})_{S\setminus L})^{\otimes 2}(U^\dagger)^{\otimes 2}\right]\right).
    \end{align}
    Plugging this into the sum and using linearity of trace and expectation, we find
    \begin{align}\mathbb P(F)&\leq \sum_{\substack{\nu_A\in \{I,X,Y ,Z\}^{k}\\\nu_A\neq I^{\times k}}}\sum_{\nu_B\in \{I,Z\}^{n-k}}\sum_{\substack{\mu\in \{I,X,Y,Z\}^n\\ 1\leq \omega(\mu)\leq d}}\mathbb P(\tr\left(\sigma_\mu U((\sigma_{\nu_A})_L\otimes (\sigma_{\nu_B})_{S\setminus L})U^\dagger\right)\neq I^{\times k})\\
    &=\frac{1}{4^n}\sum_{\substack{\nu_A\in \{I,X,Y,Z\}^{k}\\\nu_A\neq I^{\times k}}}\sum_{\nu_B\in \{I,Z\}^{n-k}}\sum_{\substack{\mu\in \{I,X,Y,Z\}^n\\ 1\leq \omega(\mu)\leq d}} \tr\left(\sigma_\mu^{\otimes 2}\E_{U\sim \chi_{\brick}^{(D)}}\left[U^{\otimes 2}((\sigma_{\nu_A})_L\otimes (\sigma_{\nu_B})_{S\setminus L})^{\otimes 2}(U^\dagger)^{\otimes 2}\right]\right).
    \end{align}

    Let us first focus on the term 
\begin{equation}\E_{U\sim \chi_{\brick}^{(D)}}\left[U^{\otimes 2}((\sigma_{\nu_A})_L\otimes (\sigma_{\nu_B})_{S\setminus L})^{\otimes 2}(U^\dagger)^{\otimes 2}\right].\end{equation}
We define 
\begin{equation}M^{(t)}(\rho)=\begin{cases}
    \E_{U\sim \Haar(2)}\left[U^{\otimes 2}\rho(U^\dagger)^{\otimes 2}\right],& \text{if $t<0$}, \\
    \E_{U\sim \Haar(4)}\left[U^{\otimes 2}\rho(U^\dagger)^{\otimes 2}\right],& \text{if $t>0$}.
\end{cases}
\end{equation}
where $\Haar(2)$ and $\Haar(4)$ are the Haar random distributions on $1$ and $2$ qubits, respectively.
Note that as each $U^{(j)}_{A_j}$ in $U=U^{(s)}_{A_s}...U^{(1)}_{A_1}$ is sampled uniformly over the set of 2-qubit Clifford gates, we can assume without loss of generality that there is a layer of Haar random single-qubit gates, similar to the argument in the proof of Lemma~\ref{lem:partition}. Using that the Clifford gates themselves are exact unitary $2$-designs then gives
\begin{equation}
\E_{U\sim \chi_{\brick}^{(D)}}\left[U^{\otimes 2}((\sigma_{\nu_A})_L\otimes (\sigma_{\nu_B})_{S\setminus L})^{\otimes 2}(U^\dagger)^{\otimes 2}\right]=M^{(s)}\circ ... \circ M^{(1)}\circ M^{(-1)}\circ ... M^{(-n)}\left(((\sigma_{\nu_A})_L\otimes (\sigma_{\nu_B})_{S\setminus L})^{\otimes 2}\right).
\end{equation}
As $M^{(-j)}$, for $j>0$ acts on the 2-copies of the $j$-th state for any $\rho$, we find that
\begin{equation}
M^{(-j)}(\rho_{[n]\setminus \{j\}}\otimes \rho_{\{j\}})=\rho_{[n]\setminus \{j\}}\otimes (\alpha I+\beta S)_{\{j\}},
\end{equation}
for some $\alpha,\beta \in \mathbb C$ depending on $\rho$ as described in Lemma \ref{lemma:haartwirl}. Thus, letting $(\sigma_{\nu_A})_L\otimes (\sigma_{\nu_B})_{S\setminus L} = \sigma_1\otimes... \otimes \sigma_n$ we compute
\begin{equation}M^{(-j)}(((\sigma_{\nu_A})_L\otimes (\sigma_{\nu_B})_{S\setminus L})^{\otimes 2})=((\sigma_{\nu_A})_L\otimes(\sigma_{\nu_B})_{S\setminus L})^{\otimes 2}_{[n]\setminus \{j\}} \otimes(\alpha I +\beta S),\end{equation}
where we need to determine $\alpha$ and $\beta$. Using $\tr(S\sigma_j^{\otimes 2})=2$ and $\tr(\sigma_j^{\otimes 2})=0$ for non-identity Pauli gates, we find that
\begin{equation}\E_{U\sim \Haar(2)}\left[U^{\otimes 2}\sigma_j^{\otimes 2}(U^\dagger)^{\otimes 2}\right]=\begin{cases} I_4,& \text{if $\sigma_j=I_2$}\\-\frac{1}{3}I_4+\frac23S, & \text{if $\sigma_j\neq I_2$,}\end{cases} \end{equation}
where $I_n$ is the $n\times n$ identity matrix.
With this in mind, we determine
\begin{equation}\sum_{\substack{\nu_A\in \{I,X,Y,Z\}^{k}\\\nu_A\neq 0}}\sum_{\nu_B\in \{I,Z\}^{n-k}} M^{(-1)}\circ ... M^{(-n)}(((\sigma_{\nu_A})_L\otimes (\sigma_{\nu_B})_{S\setminus L})^{\otimes 2}),\end{equation}
as follows. Note that $M^{(-j)}$ only depends on $\sigma_j$ and results in $I_4$, when $\sigma_j=I_2$ or $Q:=-\frac13I+\frac23S$ whenever $\sigma_j\neq I_2$. For a given string $z\in \{0,1\}^k$ with $\omega(z)>0$, and writing $Q^z=\bigotimes_{i=1}^k Q^{z_i}$, we find that there are $3^{w(z)}$ number of $\nu_A\in \{I,X,Y,Z\}^k$ with $\nu_A\neq I^{\times k}$ such that

\begin{equation}\bigotimes_{j\in L}M^{(-j)}(((\sigma_{\nu_A})_L\otimes (\sigma_{\nu_B})_{S\setminus L})^{\otimes 2})=(Q^z)_L\otimes(\sigma_{\nu_B})^{\otimes 2}_{S\setminus L}.\end{equation} 
Furthermore, any sequence in $\{I,Q\}^k$, except for $I^{\otimes k}$ will be ``reached" because $\nu_A\neq I^{\times k}$. Using a similar argument for the $\nu_B$, we find that
\begin{align}\sum_{\substack{\nu_A\in \{I,X,Y,Z\}^{k}\\\nu_A\neq  I^{\times k}}}&\sum_{\nu_B\in \{I,Z\}^{n-k}} M^{(-1)}\circ ... M^{(-n)}(((\sigma_{\nu_A})_L\otimes (\sigma_{\nu_B})_{S\setminus L})^{\otimes 2})\\
&=\left(\sum_{z\in \{0,1\}^k}3^{\omega(a)}(Q^z)_L-I_L\right)\otimes \sum_{z'\in \{0,1\}^{n-k}}(Q^{z'})_{S\setminus L}.
\end{align}
Let us first focus on the first factor. Letting $W=-I+2S$, we have that $3^{\omega(z)}Q^z = W^z$ for any $z\in \{0,1\}^k$. Furthermore, we determine
\begin{equation}\sum_{z\in \{0,1\}^k}W^z = \sum_{z\in \{0,1\}^{k-1}}W^z\otimes I+\sum_{z\in \{0,1\}^{k-1}}W^z\otimes W=2\sum_{z\in \{0,1\}^{k-1}}W^z\otimes S=2^k S^{\otimes k}\end{equation}
Using a similar strategy for the factor acting on ancilla qubits in $S\setminus L$, we find
\begin{equation}\sum_{z'\in \{0,1\}^{n-k}} Q^{z'}=\left(\frac{2}{3}\right)^{n-k}(I+S)^{\otimes (n-k)}=\left(\frac{2}{3}\right)^{n-k}\sum_{\gamma\in \{I,S\}^{n-k}}\bigotimes_{j=1}^{n-k}\gamma_j.\end{equation}
Putting this all together, and writing $O_{S\setminus L}:=\left(\sum_{\gamma\in \{I,S\}^{n-k}}\bigotimes_{j=1}^{n-k}\gamma_j\right)_{S\setminus L}$, we find that
\begin{equation}\label{eq:2-qubitgatemoments}\mathbb P(F)\leq \frac{1}{4^n}\left(\frac{2}{3}\right)^{n-k}\sum_{\substack{\mu \in \{I,X,Y,Z\}^n\\ 1\leq \omega(\mu)\leq d}}\tr\left(\sigma_{\mu}^{\otimes 2}\left(M^{(m)}\circ ... \circ M^{(1)}\left((2^kS_L-I_L)\otimes O_{S\setminus L}\right)\right)\right).\end{equation}
Now, for a given $\gamma\in \{I,S\}^n$ and representing it by $S^c$ for some $c\in \{0,1\}^n$, we determine
\begin{equation}\sum_{\substack{\mu \in \{I,X,Y,Z\}^n\\ 1\leq \omega(\mu)\leq d}}\tr(\sigma_{\mu}^{\otimes 2}S^c)=\sum_{j=1}^d\sum_{\omega(\mu)=j}\tr(\sigma_{\mu}^{\otimes 2}S^c)=\sum_{j=1}^d\sum_{\omega(\mu)=j}\prod_{i=1}^n\tr(\mu_{\mu_i}^{\otimes 2}S^{c_i}).\end{equation}
Since $S^{c_i}$ is either $I$ or $S$, we find that whenever $\sigma_{\mu_i}\neq I$, at these $i$, we must have $c_i=1$ as otherwise the term would evaluate to 0. Given that $\omega(\mu)=j$, we find that $\gamma$ must contain at least $j$ factors equal to $S$ in the tensor product in order to contribute. So, suppose that $\omega(c)=m$. Then if $\omega(\mu)=j$, there are $3^j\binom{m}{j}$ number of $\mu$ such that $\tr(\sigma_{\mu}^{\otimes 2}S^c)\neq 0$. Notice that for each such $\mu$, we have that
\begin{equation}\label{eq:traceweightdependent}\tr(\sigma_{\mu}^{\otimes 2}S^c)=2^{\omega(c)}4^{n-\omega(c)}.\end{equation}
Using this, we evaluate
\begin{align}\label{eq:sumweightedtrace}\sum_{\substack{\mu \in \{I,X,Y,Z\}^n\\ 1\leq \omega(\mu)\leq d}}\tr(\sigma_{\mu}^{\otimes 2}S^c)&=4^n\cdot 2^{-\omega(c)}\sum_{j=1}^{\min\{\omega(c),d\}}3^j\binom{\omega(c)}{j}
\end{align}
Combining Equations~\eqref{eq:2-qubitgatemoments} and \eqref{eq:sumweightedtrace} gives the following upper bound on $\mathbb P(F)$:
\begin{align}\label{eq:lemmainfinitedepthrequiredEC}\mathbb P(F)&\leq \frac{1}{4^n}\left(\frac{2}{3}\right)^{n-k}\sum_{\substack{\mu \in \{I,X,Y,Z\}^n\\ 1\leq \omega(\mu)\leq d}}\tr\left(\sigma_{\mu}^{\otimes 2}\left(M^{(m)}\circ ... \circ M^{(1)}\left((2^kS_L-I_L)\otimes O_{S\setminus L}\right)\right)\right)\\
&=2^k\left(\frac{2}{3}\right)^{n-k}\sum_{\substack{\vec \gamma\in \{I,S\}^{n\times (s+1)}\\ \vec \gamma^{(0)}\in (\{S\}^a\times\{I,S\}^{b-a})^m}}\left(2^{-|\vec \gamma^{(s)}|}\sum_{j=1}^{\min\{d,|\vec \gamma^{(s)}|\}}3^j\binom{|\vec \gamma^{(s)}|}{j}\right)\prod_{t=1}^sM^{(t)}_{\vec \gamma^{(t)}\vec \gamma^{(t-1)}}\\
&\qquad - \left(\frac{2}{3}\right)^{n-k}\sum_{\substack{\vec \gamma\in \{I,S\}^{n\times (s+1)}\\ \vec \gamma^{(0)}\in (\{S\}^a\times\{I,S\}^{b-a})^m}}\left(2^{-|\vec \gamma^{(s)}|}\sum_{j=1}^{\min\{d,|\vec \gamma^{(s)}|\}}3^j\binom{|\vec \gamma^{(s)}|}{j}\right)\prod_{t=1}^sM^{(t)}_{\vec \gamma^{(t)}\vec \gamma^{(t-1)}},\nonumber\\
&\leq 3^k\left(\frac{2}{3}\right)^{n}\sum_{\substack{\vec \gamma\in \{I,S\}^{n\times (s+1)}\\ \vec \gamma^{(0)}\in (\{I\}^a\times\{I,S\}^{b-a})^m}}\left(2^{-|\vec \gamma^{(s)}|}\sum_{j=1}^{\min\{d,|\vec \gamma^{(s)}|\}}3^j\binom{|\vec \gamma^{(s)}|}{j}\right)\prod_{t=1}^sM^{(t)}_{\vec \gamma^{(t)}\vec \gamma^{(t-1)}},
\end{align}
where $M^{(t)}_{\vec \gamma^{(t)}\vec \gamma^{(t-1)}}$ is defined exactly the same as in Lemma~\ref{lem:partition}.
\end{proof}

\begin{proof}[Proof of Lemma~\ref{lem:infinitedepthexact}]
   At infinite depth, we have that $\chi_{\brick}^{(D)}$ becomes the Haar random distribution $\Haar(2^n)$. Hence, we may assume w.l.o.g. that in the first layer there are Haar random single-qubit gates since these can be absorbed into the Haar random gate that is the circuit at infinite depth. In that case, we may continue from Equation~\eqref{eq:lemmainfinitedepthrequiredEC} and see that
   \begin{equation}
       Z^{(\infty)}_{\QEC}=\frac{2^k}{4^n}\left(\frac{2}{3}\right)^{n-k}\sum_{\substack{\mu \in \{I,X,Y,Z\}^n\\ 1\leq \omega(\mu)\leq d}}\tr\left(\sigma_{\mu}^{\otimes 2}\left(\E_{U\sim \Haar(2^n)}\left[U^{\otimes 2}\left(S_L\otimes O_{S\setminus L}\right)(U^\dagger)^{\otimes 2}\right]\right)\right).
   \end{equation}
   Using Lemma~\ref{lemma:haartwirl} we again find that $\E_{U\sim \Haar(2^n)}\left[U^{\otimes 2}\left(S_L\otimes O_{S\setminus L}\right)(U^\dagger)^{\otimes 2}\right]=\alpha I +\beta S$. However, because of the $\sigma_{\mu}^{\otimes 2}$ factor, we may ignore that $\alpha I$ term because $\tr(\sigma_{\mu}^{\otimes 2}I)=0$ as $\omega(\mu)\geq 1$. Thus, continuing with only $\beta S$ then gives
   \begin{align}
       Z_{\QEC}^{(\infty)}&=\beta\cdot\frac{2^k}{4^n}\left(\frac{2}{3}\right)^{n-k}\sum_{\substack{\mu \in \{I,X,Y,Z\}^n\\ 1\leq \omega(\mu)\leq d}}\tr\left(\sigma_{\mu}^{\otimes 2}S\right)\\
       &=\beta\cdot\frac{2^k}{4^n}\left(\frac{2}{3}\right)^{n-k}\sum_{\substack{\mu \in \{I,X,Y,Z\}^n\\ 1\leq \omega(\mu)\leq d}}2^n\\
       &=\beta\cdot 2^n\cdot\frac{2^k}{4^n}\left(\frac{2}{3}\right)^{n-k}\sum_{j=1}^{d}3^j\binom{n}{j},
   \end{align}
   where in the second equality we used Equation~\eqref{eq:traceweightdependent} specialized to $\omega(c)=1$ and the last equality follows by noting that there are $3^j\binom{n}{j}$ number of strings $\mu\in \{I,X,Y,Z\}^n$ of weight $j$. By Lemma~\ref{lemma:haartwirl} we find for $\beta$ the following:
   \begin{align}
       \beta &= \frac{1}{2^{2n}-1}\tr(S_L\otimes O_{S\setminus L} S)-\frac{1}{2^n(2^{2n}-1)}\tr(S_L\otimes O_{S\setminus L})\\
       &=\frac{1}{2^{2n}-1}\tr(I_L\otimes O_{S\setminus L})-\frac{1}{2^n(2^{2n}-1)}\tr(S_L\otimes O_{S\setminus L})\\
       &=\frac{4^k\cdot 6^{n-k}}{2^{2n}-1} - \frac{2^k\cdot 6^{n-k}}{2^n(2^{2n}-1)}\\
       &=\frac{4^k\cdot 6^{n-k}\cdot 2^n-2^k\cdot 6^{n-k}}{2^n(2^{2n}-1)}.
   \end{align}
   Combining everything then gives
   \begin{equation}
       Z^{(\infty)}_{\QEC}=\frac{4^k\cdot 6^{n-k}\cdot 2^n-2^k\cdot 6^{n-k}}{2^n(2^{2n}-1)}\cdot 2^n\cdot \frac{2^k}{4^n} \left(\frac{2}{3}\right)^{n-k}\cdot K=\frac{K}{1-2^{-2n}}(2^{-n+k}-2^{-2n}),
   \end{equation}
   which proves the statement.
\end{proof}
\begin{proof}[Proof of Lemma~\ref{lemma:globalboundEC}]
    For $|\vec \gamma^{(s)}|\leq d$ we immediately find the following bound
\begin{equation}\label{eq:upperboundlessthand}
    2^{-f|\vec \gamma^{(s)}|}\sum_{j=1}^{\min\{d,|\vec \gamma^{(s)}|\}}3^j\binom{|\vec \gamma^{(s)}|}{j}=2^{-f|\vec \gamma^{(s)}|}\sum_{j=1}^{|\vec \gamma^{(s)}|}3^j\binom{|\vec \gamma^{(s)}|}{j}=2^{-f|\vec \gamma^{(s)}|}\cdot 4^{|\vec \gamma^{(s)}|}\leq 2^{(2-f)d}.
\end{equation}
When $|\vec \gamma^{(s)}|\geq d$ we use the following bounds
\begin{equation}\label{eq:upperboundgreaterthand}
    2^{-f|\vec \gamma^{(s)}|}\sum_{j=1}^{\min\{d,|\vec \gamma^{(s)}|\}}3^j\binom{|\vec \gamma^{(s)}|}{j}=2^{-f|\vec \gamma^{(s)}|}\sum_{j=1}^{d}3^j\left(\frac{e |\vec \gamma^{(s)}|}{j}\right)^j\leq d\cdot 2^{-f|\vec\gamma^{(s)}|}\cdot 3^d\cdot \left(\frac{e |\vec \gamma^{(s)}|}{d}\right)^d,
\end{equation}
where in the last inequality we used that $3^j\left(\frac{e |\vec \gamma^{(s)}|}{j}\right)^j$ is an increasing function in $j$. Next, we consider the function $h(z):= 2^{-fz}\left(\frac{ez}{d}\right)^d$ and find that with some differential calculus it attains its maximum at $z^\ast = \frac{d}{f\ln(2)}$ and see that 
\begin{equation}h(z^\ast)=2^{-\frac{d}{\ln(2)}}\left(\frac{e}{f\ln(2)}\right)^d=\left(\frac{1}{f\ln(2)}\right)^d.\end{equation}
Combining this with the inequality in Equation~\eqref{eq:upperboundgreaterthand} gives
\begin{equation}
    2^{-f|\vec \gamma^{(s)}|}\sum_{j=1}^{\min\{d,|\vec \gamma^{(s)}|\}}3^j\binom{|\vec \gamma^{(s)}|}{j}\leq d\left(\frac{3}{f\ln(2)}\right)^d.
\end{equation}
Further combining this with the inequality in Equation~\eqref{eq:upperboundlessthand} then immediately gives
\begin{equation}
    2^{-f|\vec \gamma^{(s)}|}\sum_{j=1}^{\min\{d,|\vec \gamma^{(s)}|\}}\leq d\cdot A^d,
\end{equation}
for all $1\leq |\vec\gamma^{(s)}|\leq n$, as desired.
\end{proof}

\end{document}